\documentclass[envcountsame]{CSML}

\def\dOi{13(1:16)2017}
\lmcsheading%
{\dOi}
{1--25}
{}
{}
{Dec.~\phantom02, 2015}
{Mar.~28, 2017}
{}

\usepackage{ifthen}
\newcommand{\techRep}{true} 
\newcommand{\iftechrep}{\ifthenelse{\equal{\techRep}{true}}}
\usepackage{graphicx}
\usepackage{graphics}
\usepackage{epsfig}
\usepackage{latexsym}
\usepackage{amssymb}
\usepackage{amsmath}
\usepackage{stmaryrd}
\usepackage{ifthen}
\usepackage{array}
\usepackage{mathrsfs}
\usepackage{tikz}
\usepackage{hyperref}
\usetikzlibrary{arrows, automata, positioning}



\newcommand{\N}{\mathbb{N}}



\newcommand{\myparagraph}[1]{\medskip\noindent {\bf #1}}

\newcommand{\acit}{\textsc{ACIT}}
\newcommand{\NC}{\textsc{NC}}
\newcommand{\NL}{\textsc{NL}}
\newcommand{\NP}{\textsc{NP}}
\newcommand{\Ell}{\textsc{L}}
\newcommand{\Pee}{\textsc{P}}
\newcommand{\PSPACE}{\textsc{PSPACE}}
\newcommand{\forwardspace}{\mathcal{F}}
\newcommand{\backwardspace}{\mathcal{B}}
\newcommand{\ie}{\emph{i.e.}}
\newcommand{\eg}{\emph{e.g.}}

\newcommand{\A}{\mathcal{A}}%
\newcommand{\hB}{\widehat{B}}%
\newcommand{\hankelB}{\overline{B}}%
\newcommand{\nF}{\overrightarrow{n}}%
\newcommand{\hankelF}{\overline{F}}%
\newcommand{\hF}{\widehat{F}}%

\theoremstyle{plain}\newtheorem{theorem}[thm]{Theorem}
\theoremstyle{plain}\newtheorem{proposition}[thm]{Proposition}
\theoremstyle{plain}\newtheorem{lemma}[thm]{Lemma}
\theoremstyle{plain}
\theoremstyle{definition}\newtheorem{remark}[thm]{Remark}

\begin{document}

\title[Minimisation of Multiplicity Tree Automata]{Minimisation of Multiplicity Tree Automata\rsuper*}

\author[S.~Kiefer]{Stefan Kiefer}	
\address{University of Oxford, UK}	
\email{\{stefan.kiefer, ines.marusic, james.worrell\}@cs.ox.ac.uk}  

\author[I.~Maru\v{s}i\'{c}]{Ines Maru\v{s}i\'{c}}	
\address{\vspace{-18 pt}}	

\author[J.~Worrell]{James Worrell}	
\address{\vspace{-18 pt}}	



\keywords{weighted automata, tree automata, minimisation, arithmetic circuit identity testing, consistent automaton}
\subjclass{Theory of computation -- Formal languages and automata theory -- Tree languages; \quad
Theory of computation -- Formal languages and automata theory -- Automata extensions -- Quantitative automata}
\titlecomment{{\lsuper*}This is a full and improved version of the FoSSaCS'15 paper with the same title. The current paper contains full proofs of all results reported there and complete definitions of all the minimisation algorithms.}


\begin{abstract}
  We consider the problem of minimising the number of states in a
  multiplicity tree automaton over the field of rational numbers.  We
  give a minimisation algorithm that runs in polynomial time assuming
  unit-cost arithmetic.  We also show that a polynomial bound in the
  standard Turing model would require a breakthrough in the complexity
  of polynomial identity testing by proving that the latter problem is
  logspace equivalent to the decision version of minimisation.  The
  developed techniques also improve the state of the art in
  multiplicity word automata: we give an NC algorithm for minimising
  multiplicity word automata. Finally, we consider the minimal
  consistency problem: does there exist an automaton with a given
  number of states that is consistent with a given finite sample of
  weight-labelled words or trees?  We show that, over both words and
  trees, this decision problem is interreducible with the problem of
  deciding the truth of existential first-order sentences over the
  field of rationals---whose decidability is a longstanding open
  problem.
\end{abstract}

\maketitle

\section{Introduction} \label{sec-intro} Minimisation is a fundamental
problem in automata theory that is closely related to both learning
and equivalence testing.  In this work we analyse the complexity of
minimisation for multiplicity automata, \ie, weighted automata over a
field.  Minimisation of multiplicity and weighted automata has
numerous applications including image
compression~\cite{albert2009digital} and reducing the space complexity
of speech recognition
tasks~\cite{mohriWAtextspeech,conf/naacl/Eisner03}.

We take a comprehensive view, looking at multiplicity automata over
both words and trees and considering both function and decision
problems.  We also look at the closely-related problem of obtaining a
minimal automaton consistent with a given finite set of
observations. We characterise the complexity of these problems in
terms of arithmetic and Boolean circuit classes.  In particular, we
give relationships to longstanding open problems in arithmetic
complexity theory.

Multiplicity tree automata were first introduced by Berstel and
Reutenauer~\cite{berstel1982recognizable} under the terminology of
linear representations of a tree series. They generalise multiplicity
word automata, introduced by
Sch{\"u}tzenberger~\cite{IC::Schutzenberger1961}, which can be viewed
as multiplicity tree automata on unary trees.  The minimisation
problem for multiplicity word automata has long been known to be
solvable in polynomial time (in the Turing model)~\cite{IC::Schutzenberger1961, Tzeng92}.

In this work, we give a new procedure for computing minimal multiplicity word
automata and thereby place minimisation in \NC, improving also on a
randomised \NC\ procedure in~\cite{KieferMOWW13}.  (Recall that
$\NL \subseteq \NC \subseteq \Pee$, where {\NC} comprises those
languages having \Ell-uniform Boolean circuits of polylogarithmic
depth and polynomial size, or, equivalently, those problems solvable
in polylogarithmic time on parallel random-access machines with
polynomially many processors.)  By comparison, it is known that minimising
deterministic word automata is \NL-complete~\cite{ChoH92}, while
minimising non-deterministic word automata is
\PSPACE-complete \cite{JiangR93}. The latter result shows, in
particular, that the bounds obtained in this paper over $\mathbb{Q}$
do not apply to weighted automata over an arbitrary semi-ring, because
non-deterministic automata can be viewed as weighted automata over the
Boolean semi-ring.

Over trees, we give what is (to the best of our knowledge) the first
complexity analysis of the problem of minimising multiplicity
automata.  We present an algorithm that minimises a given multiplicity tree automaton $\A$ in time $O\left(|\A|^2 \cdot r\right)$, where $|\A|$ is the size of $\A$ and $r$ is the maximum alphabet rank, assuming unit-cost arithmetic.  This procedure can be viewed as a concrete version of the construction of a syntactic algebra of a recognisable tree series by Bozapalidis~\cite{Bozapalidis91}.  We thus place the
problem within \PSPACE\ in the conventional Turing model, since a polynomial-time decidable problem in the unit-cost model lies in \PSPACE\ (see, e.g., \cite{Allender_variablefree}).  We are
moreover able to precisely characterise the complexity of the decision
version of the minimisation problem, showing that it is logspace equivalent to
the \emph{arithmetic circuit identity testing (\acit)} problem, commonly also
called the \emph{polynomial identity testing} problem. As far as we can tell, obtaining this complexity bound requires departing from the framework of  Bozapalidis~\cite{Bozapalidis91}. The \acit\ problem is
very well studied, with a variety of randomised polynomial-time
algorithms~\cite{demillo1978probabilistic,Schwartz:1980:FPA:322217.322225,Zippel:sparsepoly}, but, as yet, no deterministic polynomial-time procedure (see~\cite{AroraBarak2}).
In previous work we have reduced equivalence testing of multiplicity
tree automata to~\acit~\cite{DBLP:conf/mfcs/MarusicW14}; the advance
here is to reduce the more general problem of minimisation also to~\acit\@. 

Lastly, we consider the problem of computing a minimal multiplicity
automaton consistent with a finite set of input-output behaviours.
This is a natural learning problem whose complexity for deterministic
finite automata was studied by Gold~\cite{Gold78}, who showed that the
problem of exactly identifying the smallest deterministic finite
automaton consistent with a set of accepted and rejected words is
\NP-hard.  For multiplicity word automata over the field~$\mathbb{Q}$,
we show that the decision version of this problem, which we call the
\emph{minimal consistency problem}, is logspace equivalent to the
problem of deciding the truth of existential first-order sentences
over the structure $(\mathbb{Q},+,\cdot,0,1)$, a longstanding open
problem (see~\cite{Pheidas:Hiblerts-tenth}). We observe that, by contrast, the minimal consistency problem for multiplicity word automata over the field~$\mathbb{R}$ is in \PSPACE, and likewise for multiplicity tree automata over $\mathbb{R}$ that have a fixed alphabet rank.

\myparagraph{Further Related Work.}  Based on a generalisation of the
Myhill-Nerode theorem to trees, one obtains a procedure for minimising
deterministic tree automata that runs in time quadratic in the size of
the input automaton~\cite{Brainerd68,CarrascoDF07}.  There have also
been several works on minimising deterministic tree automata with
weights in a semi-field (\ie, a semi-ring with multiplicative
inverses).  In particular, Maletti~\cite{Maletti09} gives a
polynomial-time algorithm in this setting, assuming unit cost for
arithmetic in the semi-field.  

In the non-deterministic case, Carme \emph{et al.}~\cite{CarmeGLTT03} define the subclass of \emph{residual finite} non-deterministic tree automata.  They show that
this class expresses the class of regular tree languages and admits a
polynomial-space minimisation procedure.

\section{Preliminaries} \label{sec-prelim}
Let $\mathbb{N}$ and~$\mathbb{N}_{0}$ denote the set of all positive and nonnegative integers, respectively. For every~$n \in \mathbb{N}$, we write~$[n]$ for the set $\{1, 2, \ldots, n\}$ and write $I_{n}$ for the identity matrix of order~$n$. For every $i \in [n]$, we write $e_i$ for the $i^\text{th}$ $n$-dimensional coordinate row vector. We write $\mathbf{0}_{n}$ for the $n$-dimensional zero row vector. 

For any matrix $A$, we write $A_{i}$ for its $i^\text{th}$ row, $A^{j}$ for its $j^\text{th}$ column, and $A_{i, j}$ for its~$(i,j)^\text{th}$ entry. Given nonempty subsets $I$ and $J$ of the rows and columns of $A$, respectively, we write~$A_{I, J}$ for the submatrix $(A_{i, j})_{i \in I, j \in J}$ of $A$.

Given a field  $\mathbb{F}$ and a set $S \subseteq \mathbb{F}^{n}$, we use $\langle S \rangle$ to denote the vector subspace of $\mathbb{F}^{n}$ that is spanned by $S$, where we often omit the braces when denoting $S$.

\subsection{Row and Column Spaces}
Let $\mathbb{F}$ be either the field of rationals $\mathbb{Q}$ or the field of reals $\mathbb{R}$. Let $A$ be an $m \times n$ matrix with entries in $\mathbb{F}$. The \emph{row space} of $A$, written as $\mathit{RS}(A)$, is the subspace of $\mathbb{F}^{n}$ spanned by the rows of $A$. The \emph{column space} of $A$, written as $\mathit{CS}(A)$, is the subspace of $\mathbb{F}^{m}$ spanned by the columns of $A$. That is, $\mathit{RS}(A) = \langle v \cdot A : v \in \mathbb{F}^{m} \rangle$ and $\mathit{CS}(A) = \langle A \cdot v^{\top} : v \in \mathbb{F}^{n} \rangle$.

The following Lemmas \ref{lemma:tree-function-min:rowsubset}-\ref{matrix:rowcolspace} contain some basic results about row and column spaces that we will use in this paper. 

\begin{lemma} \label{lemma:tree-function-min:rowsubset}
Let $A_{1}, A_{2}$ be matrices such that $\mathit{RS}(A_{1}) \subseteq \mathit{RS}(A_{2})$. For any matrix B such that $A_{1} \cdot B$ (and thus also $A_{2} \cdot B$) is defined, we have that 
\begin{align*}
\mathit{RS}(A_{1} \cdot B) \subseteq \mathit{RS}(A_{2} \cdot B).
\end{align*}
\end{lemma}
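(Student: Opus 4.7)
The plan is to unfold the definition of row space given in the excerpt and chase an arbitrary element of $\mathit{RS}(A_1 \cdot B)$ through to $\mathit{RS}(A_2 \cdot B)$, using associativity of matrix multiplication as the only nontrivial ingredient.

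More precisely, I would fix an arbitrary element $w \in \mathit{RS}(A_1 \cdot B)$. By the definition given just before the lemma, there is a row vector $v$ such that $w = v \cdot (A_1 \cdot B)$. Associativity rewrites this as $w = (v \cdot A_1) \cdot B$. Now $v \cdot A_1$ lies in $\mathit{RS}(A_1)$, and by the hypothesis $\mathit{RS}(A_1) \subseteq \mathit{RS}(A_2)$, so there exists a row vector $u$ with $v \cdot A_1 = u \cdot A_2$. Substituting and reapplying associativity yields $w = (u \cdot A_2) \cdot B = u \cdot (A_2 \cdot B)$, which is by definition an element of $\mathit{RS}(A_2 \cdot B)$. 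Since $w$ was arbitrary, the inclusion $\mathit{RS}(A_1 \cdot B) \subseteq \mathit{RS}(A_2 \cdot B)$ follows.

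An equivalent and equally short route would be to observe that $\mathit{RS}(A_1) \subseteq \mathit{RS}(A_2)$ is exactly the statement that each row of $A_1$ is a linear combination of the rows of $A_2$, i.e.\ there exists a matrix $C$ with $A_1 = C \cdot A_2$; then $A_1 \cdot B = C \cdot (A_2 \cdot B)$, so every row of $A_1 \cdot B$ is a linear combination of rows of $A_2 \cdot B$. There is no real obstacle here: the result is a direct consequence of the definitions, and the only point that needs care is keeping track of shapes so that the products $v \cdot A_1$, $u \cdot A_2$, and the subsequent multiplications by $B$ are well-defined, which follows from the assumption that $A_1 \cdot B$ and $A_2 \cdot B$ are defined.
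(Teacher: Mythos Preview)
Your proposal is correct and follows essentially the same approach as the paper: take an arbitrary element $v \cdot A_1 \cdot B$ of $\mathit{RS}(A_1 \cdot B)$, use the hypothesis to rewrite $v \cdot A_1$ as $u \cdot A_2$, and conclude that the element lies in $\mathit{RS}(A_2 \cdot B)$. The alternative factorisation $A_1 = C \cdot A_2$ you mention is just a repackaging of the same idea.
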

\begin{proof}
Suppose $A_{1} \in \mathbb{F}^{m_1 \times n}$ and $A_{2} \in \mathbb{F}^{m_2 \times n}$. For every vector $v_{1} \in \mathbb{F}^{m_1}$, it holds that $v_{1} \cdot A_{1} \in \mathit{RS}(A_{1}) \subseteq \mathit{RS}(A_{2})$. Hence, there exists a vector $v_{2} \in \mathbb{F}^{m_2}$ such that $v_{1} \cdot A_{1} = v_{2} \cdot A_{2}$. Thus
\begin{align*}
\mathit{RS}(A_{1} \cdot B) &= \langle v_{1} \cdot A_{1} \cdot B : v_{1} \in \mathbb{F}^{m_1} \rangle\\ &\subseteq \langle v_{2} \cdot A_{2}  \cdot B : v_{2} \in \mathbb{F}^{m_2} \rangle = \mathit{RS}(A_{2} \cdot B),
\end{align*}
which completes the proof.
\end{proof}

\begin{lemma} \label{lem-ATA-rowspace}
For any matrix $A \in \mathbb{F}^{m \times n}$, it holds that $\mathit{RS}(A^\top A) = \mathit{RS}(A)$.
\end{lemma}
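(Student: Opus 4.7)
The plan is to prove equality of the two row spaces by first establishing one inclusion directly and then showing that the two spaces have the same dimension.

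First, I would observe that $\mathit{RS}(A^\top A) \subseteq \mathit{RS}(A)$ follows from Lemma~\ref{lemma:tree-function-min:rowsubset}: taking $A_1 = A^\top$, $A_2 = I_m$, and $B = A$, we trivially have $\mathit{RS}(A^\top) \subseteq \mathit{RS}(I_m) = \mathbb{F}^m$, so the lemma yields $\mathit{RS}(A^\top \cdot A) \subseteq \mathit{RS}(I_m \cdot A) = \mathit{RS}(A)$. Equivalently, each row of $A^\top A$ is a linear combination of the rows of $A$.

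For the reverse inclusion, rather than exhibiting explicit linear combinations, I would match dimensions via the null space. The claim is that $\{v \in \mathbb{F}^n : A^\top A \, v^\top = 0\} = \{v \in \mathbb{F}^n : A \, v^\top = 0\}$. The right-to-left inclusion is immediate. For the left-to-right inclusion, if $A^\top A\, v^\top = 0$ then $v \, A^\top A \, v^\top = 0$, i.e., $(A\, v^\top)^\top (A\, v^\top) = 0$; writing this out as $\sum_{i=1}^{m} (A \, v^\top)_i^2 = 0$, and using that a sum of squares of elements of $\mathbb{Q}$ (or $\mathbb{R}$) vanishes only when every summand does, we conclude $A \, v^\top = 0$. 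By the rank-nullity theorem applied to the linear maps $v \mapsto A v^\top$ and $v \mapsto A^\top A v^\top$ on $\mathbb{F}^n$, we obtain $\mathrm{rank}(A) = \mathrm{rank}(A^\top A)$, hence $\dim \mathit{RS}(A) = \dim \mathit{RS}(A^\top A)$. Combined with the inclusion from the first step, this forces equality.

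The only subtle point is the sum-of-squares argument, which is the reason the statement is restricted to $\mathbb{F} \in \{\mathbb{Q}, \mathbb{R}\}$ as fixed at the start of the subsection: it uses that these fields are formally real, so that $\sum x_i^2 = 0$ implies each $x_i = 0$. This step would fail over $\mathbb{C}$ or over a field of positive characteristic, so any attempt to generalise the lemma would have to replace this argument. Beyond that observation, the proof is a short combination of a matrix identity and rank-nullity, and there is no real obstacle to carrying it out.
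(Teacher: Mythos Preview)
Your proof is correct and uses essentially the same core idea as the paper: both arguments establish that $A$ and $A^\top A$ have the same null space via the computation $(A v^\top)^\top (A v^\top) = v A^\top A v^\top = 0$ together with the sum-of-squares property of $\mathbb{Q}$ or $\mathbb{R}$. The only difference is in how the conclusion is drawn: the paper goes directly from ``same null space'' to ``same row space'' (implicitly using that the row space is the orthogonal complement of the null space over these fields), whereas you separately establish the inclusion $\mathit{RS}(A^\top A) \subseteq \mathit{RS}(A)$ and then match dimensions via rank--nullity, which is slightly more explicit but amounts to the same thing.
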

\begin{proof}
For any $x \in \mathbb{F}^n$ such that $(A^\top A) x^\top = \mathbf{0}_{n}^\top$ we have
\[
 (A x^\top)^\top A x^\top \ = \ x A^\top A x^\top \ = \ x \mathbf{0}_{n}^\top \ = \ 0\;,
\]
and hence $A x^\top = \mathbf{0}_{m}^{\top}$.
Conversely, for any $x \in \mathbb{F}^n$ with $A x^\top = \mathbf{0}_{m}^{\top}$ we have $(A^\top A) x^\top = \mathbf{0}_{n}^{\top}$.
Therefore, matrices $A$ and~$A^\top A$ have the same null space and hence the same row space.
\end{proof}

\begin{lemma} \label{matrix:rowcolspace}
Let $A_1$, $A_2$, $B_1$, $B_2$ be matrices of dimension $n_1 \times m$, $n_2 \times m$, $m \times n_3$, $m \times n_4$, respectively. If $\mathit{RS}(A_{1}) = \mathit{RS}(A_{2})$ and $\mathit{CS}(B_{1}) = \mathit{CS}(B_{2})$, then
\begin{align*}
\mathit{rank}(A_{1} \cdot B_{1}) = \mathit{rank}(A_{2} \cdot B_{2}).
\end{align*}
\end{lemma}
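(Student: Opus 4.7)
The plan is to factor the change from $A_1 \cdot B_1$ to $A_2 \cdot B_2$ into two steps, varying one factor at a time, and to show that each step preserves rank. The key identity I will use repeatedly is that $\mathit{rank}(X)$ equals both $\dim(\mathit{RS}(X))$ and $\dim(\mathit{CS}(X))$.

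First, I would use Lemma \ref{lemma:tree-function-min:rowsubset} in both directions. Since $\mathit{RS}(A_1) \subseteq \mathit{RS}(A_2)$ and $\mathit{RS}(A_2) \subseteq \mathit{RS}(A_1)$, applying the lemma with $B = B_1$ gives
\[
\mathit{RS}(A_1 \cdot B_1) \subseteq \mathit{RS}(A_2 \cdot B_1) \subseteq \mathit{RS}(A_1 \cdot B_1),
\]
so $\mathit{RS}(A_1 \cdot B_1) = \mathit{RS}(A_2 \cdot B_1)$, and therefore $\mathit{rank}(A_1 \cdot B_1) = \mathit{rank}(A_2 \cdot B_1)$.

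Next, I would handle the second factor by transposing. Observe that $\mathit{CS}(B_1) = \mathit{CS}(B_2)$ is equivalent to $\mathit{RS}(B_1^\top) = \mathit{RS}(B_2^\top)$, so Lemma \ref{lemma:tree-function-min:rowsubset} applied to $B_1^\top, B_2^\top$ with right-multiplier $A_2^\top$ gives $\mathit{RS}(B_1^\top \cdot A_2^\top) = \mathit{RS}(B_2^\top \cdot A_2^\top)$, i.e.\ $\mathit{CS}(A_2 \cdot B_1) = \mathit{CS}(A_2 \cdot B_2)$. Hence $\mathit{rank}(A_2 \cdot B_1) = \mathit{rank}(A_2 \cdot B_2)$. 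Chaining the two equalities of ranks yields $\mathit{rank}(A_1 \cdot B_1) = \mathit{rank}(A_2 \cdot B_2)$, as required.

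There is no real obstacle here: the whole argument is a direct application of the preceding lemma, once in its stated row-space form and once via transposition to obtain its column-space analogue. The only small point to keep clear is that $\mathit{RS}(A_1) = \mathit{RS}(A_2)$ unpacks as the two inclusions needed to invoke Lemma \ref{lemma:tree-function-min:rowsubset} in both directions, yielding equality rather than just inclusion of the product row spaces.
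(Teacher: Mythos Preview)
Your proof is correct and follows essentially the same route as the paper's: both arguments pass through the intermediate product $A_2 \cdot B_1$, use $\mathit{RS}(A_1)=\mathit{RS}(A_2)$ for the first step and $\mathit{CS}(B_1)=\mathit{CS}(B_2)$ for the second, and rely on the fact that rank equals the dimension of either the row or the column space. The only cosmetic difference is that the paper inlines the span argument directly, whereas you package it via Lemma~\ref{lemma:tree-function-min:rowsubset} together with transposition.
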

\begin{proof}
By definition of rank as the dimension of row or column space, we have
\allowdisplaybreaks\begin{align*}
\mathit{rank}(A_{1} \cdot B_{1}) &= \mathit{dim}\; \langle x \cdot A_{1} \cdot B_{1} : x \in \mathbb{F}^{n_1} \rangle\\
&= \mathit{dim}\; \langle x \cdot A_{2} \cdot B_{1} : x \in \mathbb{F}^{n_2} \rangle && (\text{using }  \mathit{RS}(A_{1}) = \mathit{RS}(A_{2}))\\
&= \mathit{dim}\; \langle A_{2} \cdot B_{1} \cdot x^{\top} : x \in \mathbb{F}^{n_3} \rangle\\
&= \mathit{dim}\; \langle A_{2} \cdot B_{2} \cdot x^{\top} : x \in \mathbb{F}^{n_4} \rangle && (\text{using } \mathit{CS}(B_{1}) = \mathit{CS}(B_{2}))\\
&= \mathit{rank}(A_{2} \cdot B_{2}). 
\end{align*}
This completes the proof.
\end{proof}

\subsection{Kronecker Product} 
Let $A$ be an $m_1 \times n_1$ matrix and $B$ an $m_2 \times n_2$ matrix. The \emph{Kronecker product} of $A$ by $B$, written as $A\otimes B$, is an $m_1 m_2 \times n_1 n_2$ matrix where 
\[
(A\otimes B)_{(i_1 - 1) m_2 + i_2, (j_1 - 1)  n_2 + j_2} = A_{i_1 ,j_1} \cdot B_{i_2, j_2}
\]
for every $i_1 \in [m_1]$, $i_2 \in [m_2]$, $j_1 \in [n_1]$, $j_2 \in [n_2]$.

The Kronecker product is bilinear, associative, and has the following \emph{mixed-product property}: For any matrices $A$, $B$, $C$, $D$ such that products $A \cdot C$ and $B \cdot D$ are defined, it holds that $( A \otimes B) \cdot (C \otimes D) = (A \cdot C) \otimes (B \cdot D)$.

For every $k \in \N_0$ we define the \emph{$k$-fold Kronecker power} of a matrix $A$, written as $A^{\otimes k}$, inductively by $A^{\otimes 0} = I_1$ and $A^{\otimes k} = A^{\otimes (k-1)} \otimes A$ for $k \geq 1$.

Let $k \in \mathbb{N}$, and let $n_1, \ldots, n_k \in \mathbb{N}$. Suppose $A$ is a matrix with $n_1 \cdot \ldots \cdot n_k$ rows. For every $(i_1, \ldots, i_{k}) \in  [n_1] \times\cdots \times [n_k]$, we use $A_{(i_1, \ldots, i_{k})}$ to denote the $(\sum_{l=1}^{k-1} (i_l -1 ) \cdot ( \prod_{ p= l+1}^{k} n_p ) + i_k)^\text{th}$ row of $A$. Let $A_1, \ldots, A_k$ be matrices such that for every $l \in [k]$,  $A_l$ has $n_l$ rows. It can easily be shown using induction on $k$ that for every $(i_1, \ldots, i_k) \in [n_1]  \times \cdots \times [n_k]$, 
\begin{align}
( A_{1} \otimes \cdots\otimes A_{k})_{(i_1, \ldots, i_k)} = (A_{1})_{i_1} \otimes \cdots\otimes (A_{k})_{i_k}. \label{Kronecker_indices}
\end{align}
We write $\bigotimes_{l=1}^{k} A_{l} := A_{1} \otimes \cdots\otimes A_{k}$.

For any $k \in \mathbb{N}_0$ and matrices $A_1, \ldots, A_k$  and $B_1, \ldots, B_k$ where product $A_l \cdot B_l$ is defined for every $l \in [k]$, we have
\begin{align}
(A_{1} \otimes \cdots\otimes A_{k}) \cdot (B_{1} \otimes \cdots\otimes B_{k}) = (A_{1} \cdot B_{1}) \otimes \cdots \otimes (A_{k} \cdot B_{k}). \label{Kronecker:kmixed}
\end{align}
This follows easily from the mixed-product property by induction on $k$.

\subsection{Multiplicity Word Automata}
Let $\Sigma$ be a finite alphabet and $\varepsilon$ be the empty word. The set of all words over $\Sigma$ is denoted by $\Sigma^{*}$, and the length of a word $w \in \Sigma^{*}$ is denoted by $|w|$. For any $n \in \N_{0}$ we write
$\Sigma^{n} := \{w \in \Sigma^{*} : |w| = n\}$,
$\Sigma^{\leq  n} := \bigcup_{l=0}^{n} \Sigma^{l}$, and
$\Sigma^{< n} := \Sigma^{\leq  n} \setminus \Sigma^{n}$.
Given two words $x, y \in \Sigma^{*}$, we denote by $x y$ the concatenation of $x$ and~$y$. Given two sets $X, Y \subseteq \Sigma^{*}$, we define $X Y := \{ x y : x \in X, y \in Y\}$.

Let $\mathbb{F}$ be a field. A \emph{word series} over $\Sigma$ with coefficients in $\mathbb{F}$ is a mapping $f: \Sigma^{*}  \to \mathbb{F}$. The
\emph{Hankel matrix} of $f$ is matrix $H: \Sigma^{*} \times
\Sigma^{*} \to \mathbb{F}$ such that $H_{x, y} = f(x y)$ for all
$x, y \in \Sigma^{*}$.

An $\mathbb{F}$-\emph{multiplicity word automaton} ($\mathbb{F}$-\emph{MWA}) is a $5$-tuple $\A = (n, \Sigma, \mu, \alpha, \gamma)$ which consists of the \emph{dimension} $n \in \N_{0}$ representing the number of states, a finite alphabet $\Sigma$, a function $\mu : \Sigma \to \mathbb{F}^{n \times n }$ assigning a \emph{transition matrix} $\mu(\sigma)$ to each $\sigma \in \Sigma$, the \emph{initial weight vector} $\alpha \in \mathbb{F}^{1 \times n}$, and the \emph{final weight vector} $\gamma \in \mathbb{F}^{n \times 1}$. We extend
the function $\mu$ from $\Sigma$ to $\Sigma^{*}$ by defining $\mu(\varepsilon) := I_{n}$, and $\mu(\sigma_{1} \cdots \sigma_{k}) := \mu(\sigma_{1}) \cdot \ldots \cdot \mu(\sigma_{k})$ for
any $\sigma_{1}, \ldots, \sigma_{k} \in \Sigma$. It is easy to see that $\mu( x y) = \mu(x) \cdot \mu(y)$ for any $x, y \in \Sigma^{*}$. Automaton $\A$ \emph{recognises} the word series $\| \A \|: \Sigma^* \to \mathbb{F}$ where ${\|\A \| (w) = \alpha \cdot \mu(w) \cdot \gamma}$ for every $w \in
\Sigma^{*}$.

\subsection{Finite Trees} A \emph{ranked alphabet} is a tuple  $(\Sigma, \mathit{rk})$ where $\Sigma$ is a nonempty finite set of symbols and $\mathit{rk}: \Sigma  \to \N_{0}$ is a function. Ranked alphabet $(\Sigma, \mathit{rk})$ is often written $\Sigma$ for short. For every $k \in \N_{0}$, we define the set of all \emph{$k$-ary} symbols $\Sigma_{k} := \mathit{rk}^{-1}(\{k\})$. We say that $\Sigma$ has \emph{rank} $r$ if $r = \max\{\mathit{rk}(\sigma) : \sigma \in \Sigma \}$.

The set of \emph{$\Sigma$-trees} (\emph{trees} for short), written as $T_{\Sigma}$, is the smallest set $T$ satisfying the following two conditions: (i) $\Sigma_{0} \subseteq T$; and  (ii) if $k \geq 1$, $\sigma \in \Sigma_{k}$, $t_{1}, \ldots, t_{k} \in T$ then $\sigma (t_{1}, \ldots, t_{k} ) \in T$.
The \emph{height} of a tree $t$, written as $\mathit{height}(t)$, is defined by $\mathit{height}(t) = 0$
if $t \in \Sigma_{0}$, and $\mathit{height}(t) = 1+ \max_{i \in [k]}
\mathit{height}(t_{i})$ if $t = \sigma (t_{1}, \ldots , t_{k} )$ for some $k
\geq 1$, $\sigma \in \Sigma_{k}$, $t_{1}, \ldots, t_{k} \in T_{\Sigma}$.
For any $n \in \N_{0}$ we write
$T_{\Sigma}^{n} := \{t \in T_{\Sigma} : \mathit{height}(t) = n\}$,
$T_{\Sigma}^{\leq  n} := \bigcup_{l=0}^{n} T_{\Sigma}^{l}$, and
$T_{\Sigma}^{< n} := T_{\Sigma}^{\leq  n} \setminus T_{\Sigma}^{n}$.

Let $\Box$ be a nullary symbol not contained in $\Sigma$. The set
$C_{\Sigma}$ of \emph{$\Sigma$-contexts} (\emph{contexts} for short)
is the set of all $(\{\Box\} \cup \Sigma)$-trees in which $\Box$ occurs
exactly once. Let $n \in \mathbb{N}_{0}$. We denote by $C_{\Sigma}^{n}$ the set of all contexts $c \in C_{\Sigma}$  where the distance between the root and the $\Box$-labelled node of $c$ is equal to $n$. Moreover, we write $C_{\Sigma}^{\leq n} := \bigcup_{l=0}^{n} C_{\Sigma}^{l}$ and $C_{\Sigma}^{< n} := C_{\Sigma}^{\leq n} \setminus C_{\Sigma}^{n}$.  A  \emph{subtree} of $c \in C_{\Sigma}$ is a $\Sigma$-tree consisting of a node in $c$ and all of its descendants.  Given a set $S \subseteq T_{\Sigma}$, we denote by $C_{\Sigma, S}^{n}$ the set of all contexts $c \in C_{\Sigma}^{n}$  where every subtree of $c$ is an element of $S$. Moreover, we write $C_{\Sigma, S}^{\leq n} := \bigcup_{l=0}^{n} C_{\Sigma, S}^{l}$ and $C_{\Sigma, S}^{< n} := C_{\Sigma, S}^{\leq n} \setminus C_{\Sigma, S}^{n}$.

Given $c \in C_{\Sigma}$ and $t \in T_{\Sigma}
\mathop{\dot{\cup}} C_{\Sigma}$, we write $c [t]$ for the tree
obtained by substituting $t$ for $\Box$ in $c$.  Let $\mathbb{F}$ be a field. A \emph{tree series} over $\Sigma$ with coefficients in $\mathbb{F}$ is a mapping $f: T_{\Sigma}  \to \mathbb{F}$. The
\emph{Hankel matrix} of $f : T_\Sigma \to \mathbb{F}$ is the matrix $H: T_{\Sigma} \times
C_{\Sigma} \to \mathbb{F}$ such that $H_{t, c} = f(c [t])$ for every
$t \in T_{\Sigma} $ and $c \in C_{\Sigma}$.

\subsection{Multiplicity Tree Automata} \label{subsec:MTA}
Let $\mathbb{F}$ be a field. An $\mathbb{F}$-\emph{multiplicity tree automaton} ($\mathbb{F}$-\emph{MTA}) is a $4$-tuple $\A = (n, \Sigma, \mu, \gamma)$ which consists of the \emph{dimension} $n \in \N_{0}$ representing the number of states, a ranked alphabet $\Sigma$, the \emph{tree representation} $\mu = \{\mu(\sigma) : \sigma \in \Sigma\}$ where for every symbol $\sigma \in \Sigma$, $\mu(\sigma)  \in \mathbb{F}^{ n^{\mathit{rk}(\sigma)} \times n }$ represents the \emph{transition matrix} associated to $\sigma$, and
the \emph{final weight vector} $\gamma \in \mathbb{F}^{n \times 1}$.
We speak of an MTA if the field~$\mathbb{F}$ is clear from the context or irrelevant.
The \emph{size} of $\A$, written as $|\A|$, is the total number of entries in all transition matrices and the final weight
vector of $\A$, \ie, $|\A| := \sum_{\sigma \in \Sigma}
n^{\mathit{rk}(\sigma) + 1} + n$.

We extend
the tree representation $\mu$ from $\Sigma$ to $T_{\Sigma}$ by defining 
\[
\mu(\sigma(t_{1}, \ldots, t_{k})) :=
(\mu(t_{1}) \otimes \cdots \otimes \mu(t_{k}) ) \cdot \mu(\sigma)
\]
for every $\sigma \in \Sigma_{k}$ and $t_{1}, \ldots,t_{k} \in
T_{\Sigma}$. Automaton $\A$ \emph{recognises} the tree series $\| \A \| : T_{\Sigma} \to \mathbb{F}$ where ${\|\A \| (t) = \mu(t) \cdot \gamma}$ for every $t \in
T_{\Sigma}$.

We further extend $\mu$ from $T_{\Sigma}$ to $C_{\Sigma}$ by treating $\Box$ as a unary symbol and defining $\mu (\Box) := I_{n}$. This allows to define $\mu(c) \in \mathbb{F}^{n \times n}$ for every $c = \sigma (t_{1}, \ldots, t_{k}) \in C_{\Sigma}$ inductively as $\mu(c)  :=  \left(\mu(t_{1}) \otimes \cdots \otimes \mu(t_{k}) \right) \cdot \mu(\sigma)$. It is easy to see that
for every $t \in T_{\Sigma}
\mathop{\dot{\cup}} C_{\Sigma}$ and $c \in C_{\Sigma}$, $\mu(c [t]) = \mu(t) \cdot \mu(c)$.

MWAs can be seen as a special case of MTAs:
An MWA $(n, \Sigma, \mu, \alpha, \gamma)$ ``is''
 the MTA $(n, \Sigma \mathop{\dot{\cup}} \{\sigma_0\}, \mu, \gamma)$
 where the symbols in~$\Sigma$ are unary, symbol $\sigma_0$ is nullary,
 and $\mu(\sigma_0) = \alpha$.
That is, we view $\left(\Sigma \mathop{\dot{\cup}} \{\sigma_0\}\right)$-trees as words over~$\Sigma$ by
 omitting the leaf symbol~$\sigma_0$.
Hence if a result holds for MTAs, it also holds for MWAs.
Some concepts, such as contexts, would formally need adaptation,
 however we omit such adaptations as they are straightforward.
Therefore, we freely view MWAs as MTAs whenever convenient.

Two MTAs $\A_1$, $\A_2$ are said to be \emph{equivalent} if $\| \A_1 \| = \| \A_2 \|$.
An MTA is said to be \emph{minimal} if no equivalent automaton has strictly smaller dimension.
The following result was first shown by Habrard and Oncina~\cite{habrardlearning}, although a closely-related result was given by Bozapalidis and Louscou-Bozapalidou~\cite{bozapalidis1983rank}.

\begin{theorem}[\cite{bozapalidis1983rank,habrardlearning}]   \label{thm:Hankel} 
Let $\Sigma$ be a ranked alphabet, $\mathbb{F}$ be a field, and $f : T_{\Sigma} \to \mathbb{F}$. Let $H$ be the Hankel matrix of $f$.
Then, $f$ is recognised by some MTA if and only if $H$ has finite rank over $\mathbb{F}$. In case $H$ has finite rank over $\mathbb{F}$, the dimension of a minimal MTA recognising $f$ is $\mathit{rank}(H)$ over $\mathbb{F}$. \qed
\end{theorem}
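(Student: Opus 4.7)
The plan is to prove both directions by means of the standard rank-factorisation of the Hankel matrix. For the forward direction, suppose $\A = (n, \Sigma, \mu, \gamma)$ recognises $f$. By the identity $\mu(c[t]) = \mu(t) \cdot \mu(c)$ noted in Section~\ref{subsec:MTA}, we have $H_{t,c} = f(c[t]) = \mu(t) \cdot \mu(c) \cdot \gamma$, which exhibits $H$ as a product of a matrix whose rows lie in $\mathbb{F}^{1 \times n}$ with a matrix whose columns lie in $\mathbb{F}^{n \times 1}$. Hence $\mathit{rank}(H) \leq n$, and in particular every MTA recognising $f$ must have at least $\mathit{rank}(H)$ states.

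For the converse, suppose $\mathit{rank}(H) = n$. I would pick trees $t_1, \ldots, t_n \in T_\Sigma$ and contexts $c_1, \ldots, c_n \in C_\Sigma$ whose corresponding rows (resp.\ columns) of $H$ form a basis of the row (resp.\ column) space. Standard linear algebra then ensures that the $n \times n$ submatrix $\widehat{H} := (f(c_j[t_i]))_{i,j}$ is invertible. For every $t \in T_\Sigma$, define $\mu(t) \in \mathbb{F}^{1 \times n}$ as the unique row vector satisfying $H_{t,\cdot} = \sum_{i=1}^{n} \mu(t)_i \, H_{t_i,\cdot}$, equivalently $\mu(t) := (f(c_1[t]), \ldots, f(c_n[t])) \cdot \widehat{H}^{-1}$. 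Set $\gamma_i := f(t_i)$, and for each $\sigma \in \Sigma_k$ declare $\mu(\sigma) \in \mathbb{F}^{n^k \times n}$ to be the matrix whose row at index $(i_1, \ldots, i_k)$ equals the row vector $\mu(\sigma(t_{i_1}, \ldots, t_{i_k}))$ just defined.

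The main obstacle is to verify by induction on $\mathit{height}(t)$ that this construction is coherent, i.e., that the MTA's recursive extension of $\mu$ on $T_\Sigma$ agrees with the row-vector $\mu(t)$ above. For $t = \sigma(s_1, \ldots, s_k)$, the recursive formula expands, via the row-indexing of the Kronecker product spelled out in~(\ref{Kronecker_indices}), to $\sum_{(i_1, \ldots, i_k)} \prod_{l=1}^k \mu(s_l)_{i_l} \cdot \mu(\sigma(t_{i_1}, \ldots, t_{i_k}))$. Because $H_{t_1,\cdot}, \ldots, H_{t_n,\cdot}$ are linearly independent, comparing coefficients against the row-vector definition reduces the step to the multilinearity identity
\[ H_{\sigma(s_1, \ldots, s_k),\cdot} \;=\; \sum_{(i_1, \ldots, i_k)} \prod_{l=1}^{k} \mu(s_l)_{i_l} \cdot H_{\sigma(t_{i_1}, \ldots, t_{i_k}),\cdot}. \]
I would prove this one argument at a time: for any $c \in C_\Sigma$ and $l \in [k]$, the tree $c[\sigma(s_1, \ldots, s_{l-1}, \Box, s_{l+1}, \ldots, s_k)]$ is itself a context $c'$, so that $f(c[\sigma(s_1, \ldots, s_k)]) = H_{s_l, c'}$ depends $\mathbb{F}$-linearly on the row $H_{s_l,\cdot}$; iterating over $l$ and substituting $H_{s_l,\cdot} = \sum_{i_l} \mu(s_l)_{i_l} \, H_{t_{i_l},\cdot}$ yields the identity.

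Once the induction succeeds, recognition is immediate: $\|\A\|(t) = \mu(t) \cdot \gamma = \sum_{i=1}^n \mu(t)_i \, H_{t_i, \Box} = H_{t, \Box} = f(t)$, using that $\Box$ is the trivial context. Since $\A$ has dimension $n = \mathit{rank}(H)$, this also matches the lower bound from the forward direction, establishing both recognisability and minimality in one stroke.
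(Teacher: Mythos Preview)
The paper does not prove Theorem~\ref{thm:Hankel}: it is stated with a citation to~\cite{bozapalidis1983rank,habrardlearning} and closed with a \qed, so there is nothing to compare against in the paper itself. Your argument is the classical one found in those references (and, in the word case, in Carlyle--Paz/Fliess), and it is correct as written. The forward direction via the factorisation $H_{t,c}=\mu(t)\cdot\mu(c)\cdot\gamma$ is exactly what the paper later uses in the proof of Lemma~\ref{lem-rankFB}. For the converse, your inductive coherence check is the right obstacle, and the ``one-argument-at-a-time'' proof of the multilinearity identity is sound: each substitution $s_l\mapsto t_{i_l}$ is justified because $c[\sigma(\ldots,\Box,\ldots)]$ is a genuine context, so the Hankel-row relation $H_{s_l,\cdot}=\sum_{i_l}\mu(s_l)_{i_l}H_{t_{i_l},\cdot}$ may be applied at that coordinate. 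One cosmetic remark: you invoke linear independence of $H_{t_1,\cdot},\ldots,H_{t_n,\cdot}$ to ``compare coefficients'', which is fine, but it is slightly cleaner to note that both candidate vectors for $\mu(\sigma(s_1,\ldots,s_k))$ become equal after right-multiplication by the invertible matrix~$\widehat{H}$, which is exactly your multilinearity identity evaluated at $c_1,\ldots,c_n$.
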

It follows from Theorem \ref{thm:Hankel} that an $\mathbb{F}$-MTA $\A$ of dimension $n$ is minimal if and only if the Hankel matrix of $\| \A \|$ has rank $n$ over $\mathbb{F}$.

\begin{remark}\label{rem:Hankelword}
Theorem~\ref{thm:Hankel} specialised to word automata was proved by Carlyle and Paz~\cite{carlyle1971realizations} and Fliess~\cite{fliess}.
Their proofs show that if
$X,Y\subseteq\Sigma^*$ are such that
$\mathit{rank}(H_{X,Y})=\mathit{rank}(H)$, then $f$ is uniquely
determined by $H_{X,Y}$ and $H_{X\Sigma,Y}$.
\end{remark}

In the remainder of this section, we prove some closure properties for MTAs.  First, we give two definitions: the product and the difference of two $\mathbb{F}$-MTAs.
Let $\A_{1} = (n_{1},\Sigma, \mu_{1}, \gamma_{1})$ and $\A_{2} =
(n_{2},\Sigma, \mu_{2}, \gamma_{2})$ be two $\mathbb{F}$-multiplicity tree automata. The \emph{difference} of $\A_{1}$ and $\A_{2}$, written as
$\A_{1} - \A_{2}$, is the $\mathbb{F}$-multiplicity tree automaton $(n, \Sigma, \mu, \gamma)$ where:
\begin{itemize}
\item[$\bullet$] $n = n_1 + n_2$;
\item[$\bullet$] For every $\sigma \in \Sigma$ and any $i \in [(n_{1} + n_{2})^{\mathit{rk}(\sigma)}]$, $j \in [n_{1} + n_{2}]$, \begin{align*}
\mu(\sigma)_{i,j}=
\begin{cases}
\mu_{1}(\sigma)_{i,j}  & \text{if } i \leq n_{1}^{\mathit{rk}(\sigma)} \text{and } j \leq n_1 \\
\mu_{2}(\sigma)_{i,j}  & \text{if } i > (n_{1} + n_{2})^{\mathit{rk}(\sigma)} - n_{2}^{\mathit{rk}(\sigma)} \text{and } j > n_1 \\
0 & \text{otherwise;}
\end{cases}
\end{align*}
\item[$\bullet$]$\gamma = \begin{bmatrix}
 \gamma_1\\  -\gamma_2
 \end{bmatrix}$.
\end{itemize}
The \emph{product} of $\A_{1}$ by $\A_{2}$, written as
$\A_{1} \times \A_{2}$, is the $\mathbb{F}$-multiplicity tree automaton $(n, \Sigma, \mu, \gamma)$ where:
\begin{itemize}
\item[$\bullet$] $n = n_{1} \cdot n_{2}$;
\item[$\bullet$] For every $\sigma \in \Sigma_{k}$, $\mu (\sigma) = P_{k} \cdot (\mu_{1}(\sigma)\otimes \mu_{2}(\sigma))$ where $P_{k}$ is a permutation matrix of order $(n_{1} \cdot n_{2})^{k}$ uniquely defined (see Remark \ref{rmk:productMTA-defn} below) by
\begin{align} \label{MTACartesian_transition}
(u_{1} \otimes \cdots \otimes u_{k}) \otimes (v_{1} \otimes \cdots \otimes v_{k}) = ((u_{1} \otimes v_{1}) \otimes \cdots \otimes (u_{k} \otimes v_{k})) \cdot P_{k}
\end{align}
for all $u_{1}, \ldots, u_{k} \in \mathbb{F}^{1 \times n_{1}}$ and $v_{1}, \ldots, v_{k} \in \mathbb{F}^{1 \times n_{2}}$;
\item[$\bullet$] $\gamma = \gamma_{1} \otimes \gamma_{2}$.
\end{itemize}

\begin{remark} \label{rmk:productMTA-defn}
In the following we argue that for every $k$, matrix $P_{k}$ is well-defined by Equation (\ref{MTACartesian_transition}). To do this, it suffices to show that $P_{k}$ is well-defined on a set of basis vectors of $\mathbb{F}^{1 \times n_{1}}$ and $\mathbb{F}^{1 \times n_{2}}$ and then extend linearly. To that end, let $(e_{i}^{1})_{i \in [n_{1}]}$ and $(e_{j}^{2})_{j \in [n_{2}]}$ be bases of $\mathbb{F}^{1 \times n_{1}}$ and $\mathbb{F}^{1 \times n_{2}}$, respectively. Then
\begin{align*}
E_{1} := \{(e_{i_1}^{1} \otimes \cdots \otimes e_{i_k}^{1}) \otimes (e_{j_1}^{2} \otimes \cdots \otimes e_{j_k}^{2}) : i_1, \ldots, i_k \in [n_{1}], j_1, \ldots, j_k \in [n_{2}]\}
\end{align*}
and
\begin{align*}
E_{2} := \{(e_{i_1}^{1} \otimes e_{j_1}^{2}) \otimes \cdots \otimes (e_{i_k}^{1} \otimes e_{j_k}^{2}) : i_1, \ldots, i_k \in [n_{1}], j_1, \ldots, j_k \in [n_{2}]\}
\end{align*}
are two bases of the vector space $\mathbb{F}^{1 \times n_{1} n_{2}}$. Therefore, $P_{k}$ is well-defined as an invertible matrix mapping basis $E_{1}$ to basis $E_{2}$.
\end{remark}

We now turn to the closure properties for MTAs:

\begin{proposition}\label{prop:diff_prod_MTA}
Let $\A_1 = (n_1, \Sigma, \mu_{1}, \gamma_{1})$ and $\A_2 = (n_2, \Sigma, \mu_{2}, \gamma_{2})$ be two $\mathbb{F}$-MTAs.
For their difference $\A_{1} - \A_{2}$, it holds that $\| \A_{1} - \A_{2} \| = \| \A_{1} \| - \| \A_{2} \|$.  For their product $\A_{1} \times \A_{2} = (n, \Sigma, \mu, \gamma)$, the following properties hold:
\begin{enumerate}[label=(\roman*)]
\item for every $t \in T_{\Sigma}$, $\mu(t) = \mu_{1}(t) \otimes \mu_{2}(t)$;
\item for every $c \in C_{\Sigma}$, $\mu(c) = \mu_{1}(c) \otimes \mu_{2}(c)$;
\item $\| \A_{1} \times \A_{2}\| =  \| \A_{1} \| \cdot \|\A_{2}\|$.
\end{enumerate}
When $\mathbb{F} = \mathbb{Q}$, both automata $\A_{1} - \A_{2}$ and $\A_{1} \times \A_{2}$ can be computed from $\A_{1}$ and $\A_{2}$ in logarithmic space.
\end{proposition}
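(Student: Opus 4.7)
The plan is to prove each part by induction on structure, exploiting the mixed-product property together with the defining relation for the permutation matrices $P_k$.

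For the difference, I would first establish the auxiliary claim that for every $t \in T_\Sigma$ the row vector $\mu(t)$ equals the concatenation $(\mu_1(t), \mu_2(t))$, by induction on $\mathit{height}(t)$. The base case $t = \sigma \in \Sigma_0$ is immediate from the block definition of $\mu(\sigma)$, since $n_1^0 = n_2^0 = 1$ collapses the block pattern to exactly this concatenation. For the inductive step with $t = \sigma(t_1,\ldots,t_k)$, the key combinatorial observation, itself a short side-induction on $k$ using Equation (\ref{Kronecker_indices}), is that in $\bigotimes_{i=1}^{k} (\mu_1(t_i), \mu_2(t_i))$ the first $n_1^k$ coordinates reproduce $\bigotimes_{i=1}^{k} \mu_1(t_i)$ while the last $n_2^k$ coordinates reproduce $\bigotimes_{i=1}^{k} \mu_2(t_i)$. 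Since $\mu(\sigma)$ is supported precisely on those two index blocks, paired with the first $n_1$ and last $n_2$ columns respectively, the product yields $(\mu_1(t), \mu_2(t))$. The identity $\|\A_1 - \A_2\|(t) = \|\A_1\|(t) - \|\A_2\|(t)$ then follows from the definition of $\gamma$ by a direct computation.

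For the product, claim (i) goes by induction on $\mathit{height}(t)$. The base case uses that $P_0$ is the $1 \times 1$ identity, so $\mu(\sigma) = \mu_1(\sigma) \otimes \mu_2(\sigma)$ directly for every $\sigma \in \Sigma_0$. In the inductive step for $t = \sigma(t_1,\ldots,t_k)$, I substitute $\mu(t_i) = \mu_1(t_i) \otimes \mu_2(t_i)$, apply Equation (\ref{MTACartesian_transition}) to rewrite $\bigl(\bigotimes_{i=1}^{k} (\mu_1(t_i) \otimes \mu_2(t_i))\bigr) \cdot P_k$ as $\bigl(\bigotimes_{i=1}^{k} \mu_1(t_i)\bigr) \otimes \bigl(\bigotimes_{i=1}^{k} \mu_2(t_i)\bigr)$, and conclude by the mixed-product property (Equation (\ref{Kronecker:kmixed})) applied across $\mu_1(\sigma) \otimes \mu_2(\sigma)$. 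Claim (ii) is analogous, by induction on the structure of $c$, with base case $\mu(\Box) = I_{n_1 n_2} = I_{n_1} \otimes I_{n_2}$; the matrix version of Equation (\ref{MTACartesian_transition}) needed in this case follows by applying the row-vector version to each row in turn. Claim (iii) is then immediate from (i): $(\mu_1(t) \otimes \mu_2(t)) \cdot (\gamma_1 \otimes \gamma_2) = (\mu_1(t) \cdot \gamma_1) \otimes (\mu_2(t) \cdot \gamma_2)$ by the mixed-product property, and a tensor product of two scalars is ordinary multiplication.

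For the logspace claim when $\mathbb{F} = \mathbb{Q}$, both constructions are syntactic: each entry of the output is determined from a bounded number of input entries by index comparisons. In the difference, each entry of $\mu(\sigma)$ is either $0$ or a copy of an entry of $\mu_1(\sigma)$ or $\mu_2(\sigma)$, and each entry of $\gamma$ is plus or minus an entry of $\gamma_1$ or $\gamma_2$; the case distinction is made by comparing indices against the thresholds $n_1^{\mathit{rk}(\sigma)}$ and $(n_1+n_2)^{\mathit{rk}(\sigma)} - n_2^{\mathit{rk}(\sigma)}$, which have polynomially many bits. In the product, each entry of $P_k \cdot (\mu_1(\sigma) \otimes \mu_2(\sigma))$ is a single rational product of one entry of $\mu_1(\sigma)$ with one entry of $\mu_2(\sigma)$, with $P_k$ realised by an explicit tuple permutation read off from Equation (\ref{MTACartesian_transition}); $\gamma = \gamma_1 \otimes \gamma_2$ is handled analogously. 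All index arithmetic fits in $O(\log |\A_1| + \log |\A_2|)$ bits, so both constructions are performed by a logspace transducer. The main obstacle is purely combinatorial: pinning down the action of $P_k$ on concrete index tuples, and verifying the two-block structure of the difference automaton. Neither is conceptually deep, but both demand careful bookkeeping aligned with the conventions fixed by Equations (\ref{Kronecker_indices}) and (\ref{MTACartesian_transition}).
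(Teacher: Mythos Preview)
Your treatment of the product automaton --- parts (i), (ii), (iii) --- and of the logspace claim is correct and matches the paper's approach; indeed the paper defers (i) and (iii) to \cite{berstel1982recognizable} and proves (ii) by exactly the induction you describe, including the implicit passage from the row-vector form of Equation~(\ref{MTACartesian_transition}) to matrices.

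There is, however, a genuine gap in your argument for the difference automaton. Your ``key combinatorial observation'' --- that in $\bigotimes_{i=1}^k (\mu_1(t_i),\mu_2(t_i))$ the first $n_1^k$ flattened coordinates reproduce $\bigotimes_{i=1}^k \mu_1(t_i)$ --- is false as soon as $k\ge 2$ and $n_1,n_2\ge 1$. Take $n_1=2$, $n_2=1$, $k=2$: the third flattened coordinate of $(u_1^1,u_1^2,v_1)\otimes(u_2^1,u_2^2,v_2)$ is $u_1^1 v_2$, not the third coordinate $u_1^2 u_2^1$ of $u_1\otimes u_2$. The ``pure $\mu_1$'' entries sit at the multi-indices $(i_1,\ldots,i_k)\in[n_1]^k$, which under the flattening of Equation~(\ref{Kronecker_indices}) are \emph{scattered}, not contiguous. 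Consequently the block structure you rely on does not line up with the support of $\mu(\sigma)$ as literally written in the paper, and the inductive step fails. (In fact the paper's explicit index description of $\mu(\sigma)$ is itself imprecise for $\mathit{rk}(\sigma)\ge 2$; the paper sidesteps this by citing \cite{berstel1982recognizable} rather than arguing directly.) A correct direct proof must place the copy of $\mu_1(\sigma)$ on the rows indexed by $[n_1]^k$ in the multi-index sense and then argue via Equation~(\ref{Kronecker_indices}) that those rows of the big tensor are exactly the entries of $\bigotimes_i \mu_1(t_i)$; your side-induction should track multi-indices, not a contiguous range of flattened indices.
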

\begin{proof} The result for the difference automaton is shown in \cite[Proposition 3.1]{berstel1982recognizable}. Results (i) and (iii) for the product automaton are shown in \cite[Proposition 5.1]{berstel1982recognizable}; see also \cite{DBLP:journals/actaC/Borchardt04}. In the following we prove the remainder of the proposition.

We prove result (ii) using induction on the distance between the root and the $\Box$-labelled node of $c$. The base case is $c = \Box$. Here by definition we have that
\begin{align*}
\mu(c) = \mu(\Box) = I_{n_{1} \cdot n_{2}} = I_{n_{1}} \otimes I_{n_{2}} = \mu_{1}(\Box) \otimes \mu_{2}(\Box) = \mu_{1}(c) \otimes \mu_{2}(c).
\end{align*}
For the induction step, let $h \in \N_{0}$ and assume that (ii) holds for every context $c \in C_{\Sigma}^{h}$. Take any $c \in C_{\Sigma}^{h+1}$. Without loss of generality we can assume that $c = \sigma(c_{1}, t_{2}, \ldots, t_{k})$ for some $k \geq 1$, $\sigma \in \Sigma_{k}$, $c_{1} \in C_{\Sigma}^{h}$, and $t_2, \ldots, t_k \in T_{\Sigma}$. By the induction hypothesis, result (i), Equation (\ref{MTACartesian_transition}), and the mixed-product property of Kronecker product, we now have
\allowdisplaybreaks \begin{align*}
\mu(c) & = \left( \mu(c_1) \otimes \bigotimes_{j=2}^{k} \mu(t_j)\right) \cdot \mu(\sigma)\\
& = \left( (\mu_{1}(c_{1}) \otimes \mu_{2}(c_{1})) \otimes \bigotimes_{j=2}^{k} (\mu_{1}(t_{j}) \otimes \mu_{2}(t_{j}))\right) \cdot P_{k} \cdot (\mu_{1}(\sigma)\otimes \mu_{2}(\sigma))\\
& = \left(\left(\mu_{1}(c_{1}) \otimes \bigotimes_{j=2}^{k} \mu_{1}(t_j)\right) \otimes \left(\mu_{2}(c_{1}) \otimes \bigotimes_{j=2}^{k} \mu_{2}(t_j)\right)\right) \cdot (\mu_{1}(\sigma)\otimes \mu_{2}(\sigma))\\
& = \left(\left(\mu_{1}(c_{1}) \otimes \bigotimes_{j=2}^{k} \mu_{1}(t_j)\right) \cdot \mu_{1}(\sigma)\right) \otimes \left(\left(\mu_{2}(c_{1}) \otimes \bigotimes_{j=2}^{k} \mu_{2}(t_j)\right) \cdot \mu_{2}(\sigma)\right)\\
& = \mu_{1}(c) \otimes \mu_{2}(c).
\end{align*}
This completes the proof of result (ii) by induction.

Now let $\mathbb{F} = \mathbb{Q}$. The $\mathbb{Q}$-MTA $\A_{1} \times \A_{2}$ can be computed using a deterministic Turing machine which scans the transition matrices and the final weight vectors of $\A_{1}$ and $\A_{2}$, and then writes down the entries of the transition matrices and the final weight vector of their product $\A_{1} \times \A_{2}$ onto the output tape. This computation requires maintaining only a constant number of pointers, which takes logarithmic space in the representation of automata $\A_{1}$ and $\A_{2}$. Hence, the Turing machine computing the automaton $\A_{1} \times \A_{2}$ uses logarithmic space in the work tape. Analogously, the $\mathbb{Q}$-MTA $\A_{1} - \A_{2}$ can be computed from $\A_{1}$ and $\A_{2}$ in logarithmic space.
\end{proof}

\section{Fundamentals of Minimisation} \label{sec-foundations}

In this section, we prepare the ground for minimisation algorithms.
Let us fix a field~$\mathbb{F}$ for the rest of this section and assume that all automata are over~$\mathbb{F}$.
We also fix an MTA $\A = (n, \Sigma, \mu, \gamma)$ for the rest of the section.
We will construct from $\A$ another MTA~$\tilde{\A}$ which we show to be equivalent to~$\A$ and minimal.
A crucial ingredient for this construction are special vector spaces induced by~$\A$,
 called the forward space and the backward space.

\subsection{Forward and Backward Space} \label{sub-forward-backward-definition}

The \emph{forward space~$\forwardspace$} of~$\A$ is the (row) vector space
 $\forwardspace := \langle\mu(t) : t \in T_{\Sigma}\rangle$ over~$\mathbb{F}$.
The \emph{backward space~$\backwardspace$} of~$\A$ is the (column) vector space
 $\backwardspace := \langle\mu(c) \cdot \gamma : c \in C_{\Sigma}\rangle$ over~$\mathbb{F}$.
The following Propositions \ref{prop:forwardspace_charact}~and~\ref{prop:backwardspace_charact} provide fundamental characterisations of $\forwardspace$~and~$\backwardspace$, respectively.
\newcommand{\stmtpropforwardspacecharact}{
The forward space $\forwardspace$ has the following properties:
\begin{enumerate}
\renewcommand{\theenumi}{(\alph{enumi})}
\renewcommand{\labelenumi}{\textup{\theenumi}}
\item The forward space $\forwardspace$ is the smallest vector space $V$ over $\mathbb{F}$ such that for all $k \in \N_0$, $v_{1}, \ldots, v_{k} \in V$, and  $\sigma \in \Sigma_{k}$ it holds that
 $(v_{1} \otimes \cdots \otimes v_{k}) \cdot \mu (\sigma) \in V$.
\item The set of row vectors $\{\mu (t) : t \in T_{\Sigma}^{< n}\}$ spans $\forwardspace$.
\end{enumerate}
}
\begin{proposition} \label{prop:forwardspace_charact}
\stmtpropforwardspacecharact
\end{proposition}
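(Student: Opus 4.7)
The plan is to prove (a) and (b) separately, with part (a) supplying a closure lemma that will do the heavy lifting in part (b).

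For (a), I will establish both inclusions. To show that $\forwardspace$ itself has the stated closure property, I take $v_{1},\ldots,v_{k}\in\forwardspace$, write each as a linear combination of vectors of the form $\mu(t)$ with $t\in T_{\Sigma}$, and use multilinearity of the Kronecker product to expand $(v_{1}\otimes\cdots\otimes v_{k})\cdot\mu(\sigma)$ into a linear combination of terms $(\mu(t_{1})\otimes\cdots\otimes\mu(t_{k}))\cdot\mu(\sigma)=\mu(\sigma(t_{1},\ldots,t_{k}))$, each of which lies in $\forwardspace$ by definition. For the reverse direction, let $V$ be any vector space satisfying the closure property; I induct on $\mathit{height}(t)$ to show $\mu(t)\in V$. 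The base case $t=\sigma\in\Sigma_{0}$ follows from closure with $k=0$ (the empty Kronecker product is $I_{1}$, so $\mu(\sigma)=I_{1}\cdot\mu(\sigma)\in V$), and the inductive step for $t=\sigma(t_{1},\ldots,t_{k})$ applies closure directly to the vectors $\mu(t_{i})\in V$ supplied by the inductive hypothesis.

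For (b), I set $V_{h}:=\langle\mu(t):t\in T_{\Sigma}^{\leq h}\rangle$, so that $V_{0}\subseteq V_{1}\subseteq\cdots$ is an ascending chain of subspaces of $\mathbb{F}^{1\times n}$ whose union is $\forwardspace$. The key observation will be that once $V_{h}=V_{h+1}$ the chain stabilises, i.e.\ $V_{h}=\forwardspace$. Indeed, the equality $V_{h}=V_{h+1}$ lets me run the same multilinear expansion used in (a) inside $V_{h}$: for any $t=\sigma(t_{1},\ldots,t_{k})\in T_{\Sigma}^{\leq h+2}$ each $\mu(t_{i})\in V_{h+1}=V_{h}$, so $\mu(t)\in V_{h}$; iterating gives $V_{h}=\forwardspace$. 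A dimension count then concludes: if $h^{*}$ is the first index where $V_{h^{*}}=V_{h^{*}+1}$, the strict inequalities $V_{0}\subsetneq\cdots\subsetneq V_{h^{*}}$ give $h^{*}\leq n-\dim V_{0}$.

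The one wrinkle, which I expect to be the main obstacle, is pinning this bound at $n-1$ rather than $n$; this requires handling $\dim V_{0}=0$ separately. Either $\Sigma_{0}=\emptyset$, in which case $T_{\Sigma}=\emptyset$ and $\forwardspace=\{0\}=V_{n-1}$ trivially, or $\mu(\sigma)$ is the zero matrix for every $\sigma\in\Sigma_{0}$, in which case a short induction on height (using that a Kronecker product with a zero factor vanishes) shows $\mu(t)=0$ for every $t\in T_{\Sigma}$, so again $\forwardspace=\{0\}=V_{n-1}$. In all remaining cases $\dim V_{0}\geq 1$ and the dimension count yields $h^{*}\leq n-1$, so that $V_{n-1}=V_{h^{*}}=\forwardspace$, establishing (b).
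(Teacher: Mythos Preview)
Your proof is correct. Part~(a) is essentially identical to the paper's argument: the same multilinear expansion to show $\forwardspace$ is closed, and the same induction on tree height to show minimality.

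For part~(b) you take a genuinely different route. The paper does not prove~(b) directly but simply cites an external reference (Seidl, Main Lemma~4.1). Your chain-of-subspaces argument with the dimension count is self-contained and, in fact, is exactly the technique the paper itself deploys for the analogous backward-space statement, Proposition~\ref{prop:backwardspace_charact}(b). Your handling of the edge case $\dim V_{0}=0$ is correct and is indeed what forces the bound down from $n$ to $n-1$; this case analysis (either $\Sigma_{0}=\emptyset$ or all leaf vectors vanish, whence all $\mu(t)$ vanish by a height induction) is sound. One small remark: in the stabilisation step you could make slightly more explicit that the multilinear expansion of $(v_{1}\otimes\cdots\otimes v_{k})\cdot\mu(\sigma)$ with $v_{i}\in V_{h}$ lands in $V_{h+1}$ (since the trees $\sigma(s_{1},\ldots,s_{k})$ assembled from height-$\le h$ pieces have height $\le h+1$), and only then invoke $V_{h+1}=V_{h}$; but this is clearly what you intend.
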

\begin{proof}
We start by proving result~(a). Here we first show that $\forwardspace$ has the closure property stated in~(a). To this end, let us take any $k \in \N_0$, $v_{1}, \ldots, v_{k} \in \forwardspace$, and $\sigma \in \Sigma_{k}$.
By definition of the forward space $\forwardspace$, for every $i \in [k]$ we can express vector $v_{i} \in \forwardspace$ as
\begin{align*}
v_{i} = \sum\limits_{j_{i} = 1}^{m_{i}} \alpha_{j_{i}}^{i} \mu (t_{j_{i}}^{i})
\end{align*}
for some  integer $m_{i} \in \N$, scalars $\alpha_{1}^{i}, \ldots, \alpha_{m_{i}}^{i} \in \mathbb{F}$, and trees $t_{1}^{i}, \ldots, t_{m_{i}}^{i} \in T_{\Sigma}$.  From here, using bilinearity of Kronecker product we get that
\allowdisplaybreaks \begin{align*}
(v_{1} \otimes \cdots \otimes v_{k}) \cdot \mu (\sigma) &=
\left(\left(\sum\limits_{j_{1} = 1}^{m_{1}} \alpha_{j_{1}}^{1} \mu (t_{j_{1}}^{1})\right) \otimes \cdots \otimes \left(\sum\limits_{j_{k} = 1}^{m_{k}} \alpha_{j_{k}}^{k} \mu (t_{j_{k}}^{k})\right)\right) \cdot \mu (\sigma)\\
&= \sum\limits_{j_{1} = 1}^{m_{1}} \cdots \sum\limits_{j_{k} = 1}^{m_{k}} \alpha_{j_{1}}^{1} \cdots \alpha_{j_{k}}^{k} \left(\mu (t_{j_{1}}^{1}) \otimes \cdots \otimes \mu (t_{j_{k}}^{k})\right) \cdot \mu (\sigma)\\
&= \sum\limits_{j_{1} = 1}^{m_{1}} \cdots \sum\limits_{j_{k} = 1}^{m_{k}} \alpha_{j_{1}}^{1} \cdots \alpha_{j_{k}}^{k} \cdot
\mu (\sigma (t_{j_{1}}^{1}, \ldots, t_{j_{k}}^{k})).
\end{align*}
Since $\forwardspace$ is a vector space, the above equation implies that $(v_{1} \otimes \cdots \otimes v_{k}) \cdot \mu (\sigma) \in \forwardspace$.

Let $V$ be any vector space over~$\mathbb{F}$ such that for all $k \in \N_0$, $v_{1}, \ldots, v_{k} \in V$, and  $\sigma \in \Sigma_{k}$ it holds that
 $(v_{1} \otimes \cdots \otimes v_{k}) \cdot \mu (\sigma) \in V$.
We claim that $\forwardspace \subseteq V$.
To prove this, it suffices to show that $\mu (t) \in V$ for every $t \in T_{\Sigma}$.
Here we give a proof by induction on $\mathit{height}(t)$.
The base case $t \in \Sigma_{0}$ is trivial.
For the induction step, let $h \in \N_{0}$ and assume that $\mu (t) \in V$  for all $t \in T_{\Sigma}^{\leq h}$. Take any $t \in T_{\Sigma}^{h+1}$. Then, $t = \sigma (t_{1}, \ldots, t_{k})$ for some $k \geq 1$, $\sigma \in \Sigma_{k}$, and $t_{1}, \ldots, t_{k} \in T_{\Sigma}^{\leq h}$. The induction hypothesis now implies that $\mu (t_{1}), \ldots, \mu (t_{k}) \in V$. By the choice of~$V$, we therefore have that \mbox{$\mu (t) = (\mu (t_{1}) \otimes \cdots \otimes \mu (t_{k})) \cdot \mu (\sigma) \in V$}. This completes the proof by induction.

The proof of result (b) follows from~\cite[Main Lemma 4.1]{seidlfull}.
\end{proof}

\begin{proposition} \label{prop:backwardspace_charact}
Let $S \subseteq T_{\Sigma}$ be a set of trees such that $\{\mu (t) : t \in S\}$ spans $\forwardspace$. Then, the following properties hold:
\begin{enumerate}
\renewcommand{\theenumi}{(\alph{enumi})}
\renewcommand{\labelenumi}{\textup{\theenumi}}
\item The backward space $\backwardspace$ is the smallest vector space $V$ over $\mathbb{F}$ such that:
\begin{enumerate}
\renewcommand{\theenumi}{(\alph{enumi})}
\renewcommand{\labelenumi}{\textup{\theenumi}}
\renewcommand{\theenumii}{(\arabic{enumii})}
\renewcommand{\labelenumii}{\textup{\theenumii}}
\item $\gamma \in V$.
\item For every $v \in V$ and $c \in C_{\Sigma, S}^{1}$ it holds that $\mu(c) \cdot v \in V$.
\end{enumerate}
\item
The set of column vectors $\{\mu (c) \cdot \gamma : c \in C_{\Sigma, S}^{< n}\}$ spans $\backwardspace$.
\end{enumerate}
\end{proposition}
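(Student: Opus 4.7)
\emph{Plan.} The proposition has two parts, so I tackle them in turn. For part (a) the plan is to first verify that $\backwardspace$ itself satisfies closure conditions (1) and (2), and then to prove the reverse inclusion ($\backwardspace \subseteq V$ for any candidate $V$) by induction on the depth of the $\Box$-node in a context~$c$. Condition (1) is immediate since $\gamma = \mu(\Box) \cdot \gamma \in \backwardspace$. For condition (2), given any spanning expression $v = \sum_i \alpha_i \mu(c_i) \cdot \gamma$ in~$\backwardspace$, using the relation $\mu(c[c_i]) = \mu(c_i) \cdot \mu(c)$ from Section~\ref{subsec:MTA} gives $\mu(c) \cdot v = \sum_i \alpha_i \mu(c[c_i]) \cdot \gamma \in \backwardspace$.

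For the harder direction of (a), I would argue that it suffices to establish the auxiliary claim: for every $c_1 \in C_\Sigma^{1}$ and every $v \in V$, one has $\mu(c_1) \cdot v \in V$. Granting this, a straightforward induction on the depth $d$ of $\Box$ in $c \in C_\Sigma^{d}$ yields $\mu(c) \cdot \gamma \in V$: the base case $d=0$ is condition~(1), and for $d \geq 1$ one decomposes $c = c_1[c']$ with $c_1 \in C_\Sigma^{1}$ and $c' \in C_\Sigma^{d-1}$, so that $\mu(c) \cdot \gamma = \mu(c_1) \cdot (\mu(c') \cdot \gamma)$, with $\mu(c') \cdot \gamma \in V$ by the inductive hypothesis. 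To prove the auxiliary claim itself---which I expect to be the main technical step---I would write $c_1 = \sigma(t_1, \ldots, t_{i-1}, \Box, t_{i+1}, \ldots, t_k)$, and use the hypothesis that $\{\mu(t) : t \in S\}$ spans $\forwardspace$ to express each $\mu(t_j) = \sum_{\ell} \alpha_{j,\ell} \mu(s_{j,\ell})$ with $s_{j,\ell} \in S$. Then bilinearity of Kronecker product and the definition of $\mu$ on contexts give $\mu(c_1)$ as a linear combination of matrices $\mu(c_1^\prime)$ where $c_1' \in C_{\Sigma, S}^{1}$, and condition~(2) closes the argument.

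For part (b), let $W_k := \langle \mu(c) \cdot \gamma : c \in C_{\Sigma, S}^{\leq k} \rangle$, so that $W_0 \subseteq W_1 \subseteq \cdots \subseteq \backwardspace$ and each $W_k$ is a subspace of $\mathbb{F}^{n \times 1}$. The goal is to show $W_{n-1} = \backwardspace$. The key identity is
\[
W_{k+1} \ = \ W_k + \langle \mu(c) \cdot v : c \in C_{\Sigma, S}^{1},\ v \in W_k \rangle,
\]
which follows by decomposing any $c' \in C_{\Sigma,S}^{k+1}$ as $c' = c_0[c]$ with $c_0 \in C_{\Sigma,S}^{k}$ and $c \in C_{\Sigma,S}^{1}$, so that $\mu(c') \cdot \gamma = \mu(c) \cdot (\mu(c_0) \cdot \gamma)$, and, conversely, any such product lies in $W_{k+1}$.

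Now a dimension argument: since $\dim W_k \leq n$ and the chain is weakly increasing, stabilisation $W_k = W_{k+1}$ must occur for some $k \leq n-1$ (if $\gamma \neq 0$ we start at dimension $1$ and grow by at least $1$ each strict step; the $\gamma = 0$ case is trivial). Once the chain stabilises, the identity above shows that $W_k$ is closed under left-multiplication by $\mu(c)$ for every $c \in C_{\Sigma, S}^{1}$; combined with $\gamma \in W_0 \subseteq W_k$, part~(a) yields $\backwardspace \subseteq W_k \subseteq W_{n-1}$. The reverse inclusion $W_{n-1} \subseteq \backwardspace$ is obvious, completing the proof.
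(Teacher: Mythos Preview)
Your proof is correct and follows essentially the same route as the paper's: verify that $\backwardspace$ satisfies (1)--(2), then induct on the depth of~$\Box$ using the spanning hypothesis on~$S$ to reduce arbitrary depth-$1$ contexts to those in $C_{\Sigma,S}^{1}$; for part~(b), the same stabilising-chain dimension argument. The only difference is cosmetic---you isolate the ``auxiliary claim'' as a separate statement, whereas the paper performs the bilinear expansion in~$S$ inline within the inductive step.

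One notational slip to fix: the composition rule from Section~\ref{subsec:MTA} reads $\mu(c[t]) = \mu(t)\cdot\mu(c)$, so in your first paragraph the correct identity is $\mu(c)\cdot v = \sum_i \alpha_i\, \mu(c_i[c])\cdot\gamma$ (not $c[c_i]$), and in the inductive step the decomposition giving $\mu(c)\cdot\gamma = \mu(c_1)\cdot(\mu(c')\cdot\gamma)$ is $c = c'[c_1]$ with $c' \in C_\Sigma^{d-1}$ the \emph{outer} context (not $c = c_1[c']$). You got the order right in part~(b).
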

\begin{proof}
First, we prove result~(a). We have that $\gamma = \mu (\Box) \cdot \gamma \in \backwardspace$, hence $\backwardspace$ satisfies property~1.
To see that~$\backwardspace$ satisfies property 2, let us take any $v \in \backwardspace$ and $c \in C_{\Sigma, S}^{1}$. By definition of~$\backwardspace$, the vector $v$ can be expressed as
\begin{align*}
v = \sum\limits_{i=1}^{m} \alpha_{i} \cdot \mu(c_{i}) \cdot \gamma
\end{align*}
for some integer $m \in \N$, scalars $\alpha_{1}, \ldots, \alpha_{m} \in \mathbb{F}$, and contexts $c_{1}, \ldots, c_{m} \in C_{\Sigma}$. Thus by bilinearity of matrix multiplication we have
\allowdisplaybreaks \begin{align*}
\mu(c) \cdot v &= \mu(c) \cdot \left(\sum\limits_{i=1}^{m} \alpha_{i} \cdot \mu(c_{i}) \cdot\gamma \right)\\ &= \sum\limits_{i=1}^{m} \alpha_{i} \cdot (\mu(c) \cdot \mu(c_{i}) \cdot \gamma)  = \sum\limits_{i=1}^{m} \alpha_{i} \cdot  \mu(c_{i}[c])  \cdot \gamma,
\end{align*}
which implies that $\mu(c) \cdot v  \in \backwardspace$ since $\backwardspace$ is a vector space. Therefore, $\backwardspace$ satisfies properties 1 and 2.

Let now $V$ be any vector space over $\mathbb{F}$ satisfying properties 1 and 2. In order to show that $\backwardspace \subseteq V$, it suffices to show that $\mu (c) \cdot \gamma \in V$ for every $c \in C_{\Sigma}$.
We prove the latter result using induction on the distance between the root and the $\Box$-labelled node of $c$.
For the induction basis, let the distance be $0$, \ie, $c = \Box$. Then we have $\mu(c) \cdot \gamma = \gamma \in V$ by property~1. For the induction step, let $h \in \N_{0}$ and assume that $\mu (c) \cdot \gamma \in V$ for all $c \in C_{\Sigma}^{\le h}$. Take any $c \in C_{\Sigma}^{h+1}$. Let $c^{\prime} \in C_{\Sigma}^{1}$ and $c^{\prime\prime} \in C_{\Sigma}^{h}$ be such that $c = c^{\prime\prime} [c^{\prime}]$. Without loss of generality we can assume that $c^{\prime} = \sigma ( \Box, \tau_{2}, \ldots, \tau_{k})$ where $k \ge 1$, $\sigma \in \Sigma_{k}$, and $\tau_{2}, \ldots, \tau_{k} \in T_{\Sigma}$. Since $\forwardspace = \langle \mu (t) : t \in S \rangle$, for every $i \in \{2, \ldots, k\}$ there is an integer~$m_{i} \in \N$, scalars $\alpha_{1}^{i}, \ldots, \alpha_{m_{i}}^{i} \in \mathbb{F}$, and trees $t_{1}^{i}, \ldots, t_{m_{i}}^{i} \in S$ such that
\begin{align*}
\mu (\tau_{i}) = \sum\limits_{j_{i} = 1}^{m_{i}} \alpha_{j_{i}}^{i} \mu (t_{j_{i}}^{i}).
\end{align*}
From here, using bilinearity of Kronecker product, it follows that
\allowdisplaybreaks \begin{align*}
\mu (c) \cdot \gamma &= \mu (c^{\prime}) \cdot \mu (c^{\prime\prime}) \cdot \gamma\\ &= (I_{n} \otimes \mu (\tau_{2}) \otimes \cdots \otimes \mu(\tau_{k}))  \mu (\sigma) \cdot \mu (c^{\prime\prime}) \cdot \gamma\\
&= \left(I_{n} \otimes \left(\sum\limits_{j_{2} = 1}^{m_{2}} \alpha_{j_{2}}^{2} \mu (t_{j_{2}}^{2})\right)  \otimes \cdots \otimes \left(\sum\limits_{j_{k} = 1}^{m_{k}} \alpha_{j_{k}}^{k} \mu (t_{j_{k}}^{k})\right)\right) \mu (\sigma) \cdot \mu (c^{\prime\prime}) \cdot \gamma\\
&= \sum\limits_{j_{2} = 1}^{m_{2}} \cdots \sum\limits_{j_{k} = 1}^{m_{k}} \alpha_{j_{2}}^{2} \cdots \alpha_{j_{k}}^{k} \cdot \left(I_{n} \otimes \mu (t_{j_{2}}^{2})  \otimes \cdots \otimes  \mu (t_{j_{k}}^{k})\right) \mu (\sigma) \cdot \mu (c^{\prime\prime}) \cdot \gamma\\
&= \sum\limits_{j_{2} = 1}^{m_{2}} \cdots \sum\limits_{j_{k} = 1}^{m_{k}} \alpha_{j_{2}}^{2} \cdots \alpha_{j_{k}}^{k}\cdot \mu (\sigma (\Box, t_{j_{2}}^{2}, \ldots, t_{j_{k}}^{k})) \cdot \mu (c^{\prime\prime}) \cdot \gamma\,,
\end{align*}
where we note that $\sigma (\Box, t_{j_{2}}^{2}, \ldots, t_{j_{k}}^{k}) \in C_{\Sigma, S}^{1}$ for every $j_{2} \in [m_{2}], \ldots, j_{k} \in [m_{k}]$.
Moreover, we have $\mu (c^{\prime\prime}) \cdot \gamma \in V$ by the induction hypothesis.
Thus by property 2 we have $\mu (\sigma (\Box, t_{j_{2}}^{2}, \ldots, t_{j_{k}}^{k})) \cdot \mu (c^{\prime\prime}) \cdot \gamma \in V$ for every $j_{2} \in [m_{2}], \ldots, j_{k} \in [m_{k}]$. Since $V$ is a vector space, we conclude that $\mu (c) \cdot \gamma \in V$.
This completes the proof of result (a) by induction.

We denote by $C_{\Sigma, S}$ the set of all $c \in C_{\Sigma}$  where every subtree of $c$ is an element of $S$. It follows easily from part (a) that $\langle \mu (c) \cdot \gamma : c \in C_{\Sigma, S}\rangle = \backwardspace$ since $\langle \mu (c) \cdot \gamma : c \in C_{\Sigma, S}\rangle$ satisfies properties 1 and 2.
Thus in order to prove result (b), it suffices to show that
 the set $\{\mu (c) \cdot \gamma : c \in C_{\Sigma, S}^{<n}\}$ spans $\langle \mu (c) \cdot \gamma : c \in C_{\Sigma, S} \rangle$.
We show this using an argument that was similarly given, \eg, in~\cite{Paz71}.
If $\gamma$ is the zero vector $\mathbf{0}_{n}^\top$, the statement is trivial.
Let us now assume that $\gamma \neq \mathbf{0}_{n}^\top$.
For every $i \in \N$, we define the vector space $\backwardspace^{i} := \langle \mu (c) \cdot \gamma : c \in C_{\Sigma, S}^{<i} \rangle$ over~$\mathbb{F}$.
Since $\backwardspace^{i}$ is a subspace of~$\backwardspace^{i+1}$ for every~$i \in \N$, we have
\begin{align}
  1 \ \le \ \mathit{dim}( \backwardspace^1) \ \le \ \mathit{dim}( \backwardspace^2) \ \le \ \cdots \ \le \ \mathit{dim}( \backwardspace^{n+1}) \ \le \ n\;, \label{backwardspace:dim_inequalitychain}
\end{align}
where the first inequality holds because $\gamma \neq \mathbf{0}_{n}^\top$,
and the last inequality holds because $\backwardspace^i \subseteq \mathbb{F}^n$ for all~$i \in \N$. Not all inequalities in the inequality chain (\ref{backwardspace:dim_inequalitychain}) can be strict, so we must have $\backwardspace^{i_0} = \backwardspace^{i_0+1}$ for some $i_0 \in [n]$.
We claim that $\backwardspace^{i} = \backwardspace^{i+1}$ for all $i \ge i_0$. We give a proof by induction on~$i$. The base case $i=i_0$ holds by definition of~$i_0$.
For the induction step, let $i \ge i_0$ and assume that $\backwardspace^{i} = \backwardspace^{i+1}$. Note that, by definition, for all $j \in \N$ we have
$\backwardspace^{j+1} = \langle \gamma, \mu (c) \cdot \backwardspace^{j}  : c \in C_{\Sigma, S}^{1}\rangle$. Using this result for $j \in \{i, i+1\}$, we obtain:
\begin{align*}
\backwardspace^{i+1}
\ = \ \langle \gamma, \mu (c) \cdot \backwardspace^{i}  : c \in C_{\Sigma, S}^{1}\rangle
\ = \ \langle \gamma, \mu (c) \cdot \backwardspace^{i+1}  : c \in C_{\Sigma, S}^{1}\rangle
\ = \ \backwardspace^{i+2}
\end{align*}
where the middle equation holds by the induction hypothesis.
This completes the proof by induction, and we thus conclude that $\backwardspace^{i} = \backwardspace^{i+1}$ for all $i \ge i_0$.
Since $n \ge i_{0}$, it follows that $\backwardspace^{n} = \bigcup_{i \ge n}\backwardspace^{i}$. Since $(\backwardspace^{i})_{i \in \N}$ is an increasing sequence of vector spaces, we have $\backwardspace^n = \bigcup_{i \in \N}\backwardspace^{i} = \langle \mu (c) \cdot \gamma : c \in C_{\Sigma, S}\rangle$ as required.
\end{proof}

\subsection{A Minimal Automaton} \label{sub-a-minimal-automaton}
Let $F$~and~$B$ be matrices whose rows and columns, respectively, span $\forwardspace$~and~$\backwardspace$. That is, $\mathit{RS}(F) = \forwardspace$ and $\mathit{CS}(B) = \backwardspace$.
We discuss later (Section \ref{function_problem:trees}) how to efficiently compute $F$~and~$B$.
The following lemma states that $\mathit{rank}(F \cdot B)$ is the dimension of a minimal automaton equivalent to $\A$.
\begin{lemma} \label{lem-rankFB}
 A minimal automaton equivalent to $\A$ has $m := \mathit{rank}(F \cdot B)$ states.
\end{lemma}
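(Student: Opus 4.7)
The plan is to invoke Theorem~\ref{thm:Hankel} to reduce the claim to showing that the Hankel matrix $H$ of $\|\A\|$ has rank exactly $m = \mathit{rank}(F\cdot B)$. The key observation is that $H$ admits a factorisation through $\forwardspace$ and $\backwardspace$: since $H_{t,c}=\|\A\|(c[t])=\mu(t)\cdot\mu(c)\cdot\gamma$, if we let $\hankelF$ be the (in general infinite) matrix whose rows are indexed by $T_{\Sigma}$ with $(\hankelF)_{t}=\mu(t)$, and $\hankelB$ the matrix whose columns are indexed by $C_{\Sigma}$ with $(\hankelB)^{c}=\mu(c)\cdot\gamma$, then $H=\hankelF\cdot\hankelB$.

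From this factorisation the argument is essentially a row-space/column-space manipulation. By construction $\mathit{RS}(\hankelF)=\forwardspace=\mathit{RS}(F)$ and $\mathit{CS}(\hankelB)=\backwardspace=\mathit{CS}(B)$, so I would like to apply Lemma~\ref{matrix:rowcolspace} to conclude $\mathit{rank}(\hankelF\cdot\hankelB)=\mathit{rank}(F\cdot B)=m$. The main technical obstacle is that Lemma~\ref{matrix:rowcolspace} is stated for finite matrices, whereas $\hankelF$ and $\hankelB$ have infinitely many rows/columns. To bypass this, I will use Proposition~\ref{prop:forwardspace_charact}(b) to replace $\hankelF$ by the finite matrix $\hF$ whose rows are $\mu(t)$ for $t\in T_{\Sigma}^{<n}$; this finite matrix still has row space equal to $\forwardspace$. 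Similarly, taking $S$ to be the finite set $T_{\Sigma}^{<n}$, Proposition~\ref{prop:backwardspace_charact}(b) yields a finite matrix $\hB$ whose columns are $\mu(c)\cdot\gamma$ for $c\in C_{\Sigma,S}^{<n}$ and whose column space equals $\backwardspace$. The corresponding finite submatrix $\hF\cdot\hB$ of $H$ then has the same rank as $H$ by standard facts about ranks of (possibly infinite) matrices together with the two spanning properties, since every row of $H$ is a linear combination of rows of $\hF\cdot\hB$ and every column of $H$ is a linear combination of its columns.

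Having reduced to the finite setting, two applications of Lemma~\ref{matrix:rowcolspace} give
\begin{align*}
\mathit{rank}(H) \;=\; \mathit{rank}(\hF\cdot\hB) \;=\; \mathit{rank}(F\cdot B) \;=\; m,
\end{align*}
where the middle equality uses $\mathit{RS}(\hF)=\forwardspace=\mathit{RS}(F)$ and $\mathit{CS}(\hB)=\backwardspace=\mathit{CS}(B)$. Finally, Theorem~\ref{thm:Hankel} identifies $\mathit{rank}(H)$ with the dimension of a minimal MTA recognising $\|\A\|$, and any such MTA is equivalent to $\A$ by definition, which completes the proof. I expect the only mildly subtle point to be the passage from the infinite Hankel matrix to a finite submatrix of equal rank; everything else is a direct combination of the row/column space lemmas of Section~2 with the characterisations of $\forwardspace$ and $\backwardspace$ proved in Propositions~\ref{prop:forwardspace_charact} and~\ref{prop:backwardspace_charact}.
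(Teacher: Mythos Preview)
Your proof is correct and follows essentially the same approach as the paper: factor $H = \hankelF \cdot \hankelB$, observe that $\mathit{RS}(\hankelF)=\forwardspace=\mathit{RS}(F)$ and $\mathit{CS}(\hankelB)=\backwardspace=\mathit{CS}(B)$, and combine Lemma~\ref{matrix:rowcolspace} with Theorem~\ref{thm:Hankel}. The only difference is that the paper applies Lemma~\ref{matrix:rowcolspace} directly to the infinite matrices $\hankelF,\hankelB$ (which is unproblematic since the inner dimension~$n$ is finite and the row/column spaces live in~$\mathbb{F}^n$), whereas you insert an intermediate passage to finite submatrices via Propositions~\ref{prop:forwardspace_charact}(b) and~\ref{prop:backwardspace_charact}(b)---a harmless and arguably more careful detour.
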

\begin{proof}
Let $H$ be the Hankel matrix of $\| \A \|$.
Define the matrix $\hankelF  \in \mathbb{F}^{T_{\Sigma} \times [n]}$
 where $\hankelF_{t} = \mu (t)$ for every $t \in T_{\Sigma}$.
Define the matrix $\hankelB \in \mathbb{F}^{[n] \times C_{\Sigma}}$
 where $\hankelB^{c} = \mu (c) \cdot \gamma$ for every $c \in C_{\Sigma}$.
For every $t \in T_{\Sigma}$ and $c \in C_{\Sigma}$ we have by the definitions that
\begin{align*}
H_{t, c} = \| \A \| (c [t]) = \mu (c [t]) \cdot \gamma =  \mu(t) \cdot \mu (c) \cdot \gamma
= \hankelF_{t} \cdot \hankelB^{c}\;,
\end{align*}
hence $H = \hankelF \cdot \hankelB$. Note that
\begin{equation}
 \mathit{RS}(\hankelF) = \forwardspace = \mathit{RS}(F) \qquad \text{ and }  \qquad
 \mathit{CS}(\hankelB) = \backwardspace = \mathit{CS}(B)\,. \label{eq-prop-decision-upper}
\end{equation}
We now have
$
m = \mathit{rank}(H) = \mathit{rank}(\hankelF \cdot \hankelB)
 = \mathit{rank}(F \cdot B)
$, 
where the first equality holds by Theorem~\ref{thm:Hankel} and the last equality holds by~\eqref{eq-prop-decision-upper} and Lemma~\ref{matrix:rowcolspace}.
\end{proof}

Since $m = \mathit{rank}(F \cdot B)$, there exist $m$ rows of $F \cdot B$ that span $\mathit{RS}(F \cdot B)$.
The corresponding $m$ rows of~$F$ form a matrix $\tilde{F} \in \mathbb{F}^{m \times n}$ with $\mathit{RS}(\tilde{F} \cdot B) = \mathit{RS}(F \cdot B)$.
Define a multiplicity tree automaton $\tilde{\A} = (m, \Sigma, \tilde{\mu}, \tilde{\gamma})$
 with $\tilde{\gamma} = \tilde{F} \cdot \gamma$ and
\begin{align}
\tilde{\mu} (\sigma) \cdot \tilde{F} \cdot B = \tilde{F}^{\otimes k} \cdot \mu (\sigma) \cdot B
 \qquad \text{for every $\sigma \in \Sigma_{k}$.}
\label{minimimisingMTA:defn_transition}
\end{align}
We show that $\tilde{\A}$ minimises~$\A$:
\newcommand{\stmtpropminimalMTAabstract}{
The MTA~$\tilde{\A}$ is well-defined and is a minimal automaton equivalent to~$\A$.
}
\begin{proposition} \label{prop-minimal-MTA-abstract}
\stmtpropminimalMTAabstract
\end{proposition}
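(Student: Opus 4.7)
The plan is to establish three things in turn: well-definedness of $\tilde{\A}$, equivalence $\|\tilde{\A}\|=\|\A\|$, and minimality. The last is immediate from Lemma~\ref{lem-rankFB}, since $\tilde{\A}$ has exactly $m=\mathit{rank}(F\cdot B)$ states and is equivalent to $\A$, so I focus on the first two.

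For well-definedness I would argue existence and uniqueness of $\tilde{\mu}(\sigma)$ separately. Every row of $\tilde{F}^{\otimes k}$ has the form $\tilde{F}_{i_1}\otimes\cdots\otimes\tilde{F}_{i_k}$ with each $\tilde{F}_{i_j}\in\mathit{RS}(F)=\forwardspace$, so Proposition~\ref{prop:forwardspace_charact}(a) places each row of $\tilde{F}^{\otimes k}\cdot\mu(\sigma)$ in $\forwardspace$, and then Lemma~\ref{lemma:tree-function-min:rowsubset} yields $\mathit{RS}(\tilde{F}^{\otimes k}\cdot\mu(\sigma)\cdot B)\subseteq\mathit{RS}(F\cdot B)=\mathit{RS}(\tilde{F}\cdot B)$; this gives existence of $\tilde{\mu}(\sigma)$. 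Uniqueness holds because $\tilde{F}\cdot B$ has $m$ rows and rank $m$, hence linearly independent rows, so the map $X\mapsto X\cdot\tilde{F}\cdot B$ is injective.

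For equivalence, the strategy is to prove $\tilde{\mu}(t)\cdot\tilde{F}\cdot B=\mu(t)\cdot B$ for every $t\in T_{\Sigma}$ by induction on $\mathit{height}(t)$; since $\gamma=\mu(\Box)\cdot\gamma\in\backwardspace=\mathit{CS}(B)$, writing $\gamma=B\cdot w$ then gives $\|\tilde{\A}\|(t)=\tilde{\mu}(t)\tilde{F}\gamma=\tilde{\mu}(t)\tilde{F}Bw=\mu(t)Bw=\|\A\|(t)$. The base $t\in\Sigma_0$ comes directly from the defining equation with $k=0$. The main obstacle is the inductive step: for $t=\sigma(t_1,\ldots,t_k)$, the mixed-product property and the definition of $\tilde{\mu}(\sigma)$ give
\[
 \tilde{\mu}(t)\tilde{F}B \ = \ (a_1\otimes\cdots\otimes a_k)\,\mu(\sigma)B
 \quad\text{with } a_i:=\tilde{\mu}(t_i)\tilde{F},
\]
whereas the target is $(b_1\otimes\cdots\otimes b_k)\mu(\sigma)B$ with $b_i:=\mu(t_i)$. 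The induction hypothesis only says $a_i-b_i$ lies in the left null space $N=\backwardspace^{\perp}$ of~$B$, not that $a_i=b_i$. To bridge this gap I would prove the auxiliary \emph{slicing lemma}: \emph{if $v_i\in N$ and $v_j\in\forwardspace$ for $j\neq i$, then $(v_1\otimes\cdots\otimes v_k)\mu(\sigma)B=0$}. Its key identity, valid for any context $c$ and trees $s_j$, is
\[
 \bigl(\mu(s_1)\otimes\cdots\otimes v_i\otimes\cdots\otimes\mu(s_k)\bigr)\mu(\sigma)\mu(c)\gamma
 \ = \ v_i\cdot\mu\bigl(c[\sigma(s_1,\ldots,\Box,\ldots,s_k)]\bigr)\gamma\,,
\]
which vanishes because $\mu(c[\cdots])\gamma\in\backwardspace$ and $v_i\perp\backwardspace$; the hypothesis $v_j\in\forwardspace$ is then used to expand $v_j=\sum_l\alpha_{j,l}\mu(s_{j,l})$ and conclude by linearity. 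Combining the slicing lemma with the telescoping identity
\[
 \bigotimes_{i=1}^{k} a_i - \bigotimes_{i=1}^{k} b_i \ = \ \sum_{i=1}^{k} a_1\otimes\cdots\otimes a_{i-1}\otimes(a_i-b_i)\otimes b_{i+1}\otimes\cdots\otimes b_k,
\]
in which each summand has exactly one factor in $N$ and the remaining factors in $\forwardspace$, closes the induction and hence yields equivalence; minimality then follows as noted.
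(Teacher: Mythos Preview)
Your proof is correct and follows essentially the same approach as the paper. The paper also proves $\tilde{\mu}(t)\cdot\tilde{F}\cdot B=\mu(t)\cdot B$ by induction on height, and in the inductive step replaces the factors $\tilde{\mu}(t_i)\tilde{F}$ by $\mu(t_i)$ one at a time, using that $(v_1\otimes\cdots\otimes I_n\otimes\cdots\otimes v_k)\mu(\sigma)$ maps $\backwardspace$ into $\backwardspace$ when the $v_j\in\forwardspace$ (via Proposition~\ref{prop:backwardspace_charact}(a)); your telescoping identity together with the slicing lemma is simply an algebraic repackaging of that same iterative swap, with the induction hypothesis rephrased as $a_i-b_i$ lying in the left null space of~$B$ rather than as $a_i\cdot b=b_i\cdot b$ for $b\in\backwardspace$.
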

Before giving a full proof of Proposition~\ref{prop-minimal-MTA-abstract} later in this subsection, we now prove this result for multiplicity \emph{word} automata, stated as Proposition~\ref{prop-minimal-MWA-abstract} below, which will be used in Section~\ref{sub-word-min}.
 The main arguments are similar for the tree case, but slightly more involved.

Let $\A = (n, \Sigma, \mu, \alpha, \gamma)$ be an MWA.
The forward and backward space can then be written as $\forwardspace = \langle\alpha \cdot \mu(w) : w \in \Sigma^{*}\rangle$ and $\backwardspace = \langle\mu(w) \cdot \gamma : w \in \Sigma^{*}\rangle$, respectively.
The MWA~$\tilde{\A}$ can be written as
$\tilde{\A} = (m, \Sigma, \tilde{\mu}, \tilde{\alpha}, \tilde{\gamma})$
with $\tilde{\gamma} = \tilde{F} \cdot \gamma$,
\begin{align}
\tilde{\alpha} \cdot \tilde{F} \cdot B &= \alpha \cdot B, && \text{and} \label{minimimisingMWA:defn_alpha} \\
\tilde{\mu} (\sigma) \cdot \tilde{F} \cdot B & = \tilde{F} \cdot \mu (\sigma) \cdot B
 && \text{for every $\sigma \in \Sigma$.}
\label{minimimisingMWA:defn_transition}
\end{align}
\begin{proposition} \label{prop-minimal-MWA-abstract}
The MWA~$\tilde{\A}$ is well-defined and is a minimal automaton equivalent to~$\A$.
\end{proposition}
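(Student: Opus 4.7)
The plan is to verify the three claims (well-definedness, equivalence, minimality) in that order, with well-definedness being the most delicate step.

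For well-definedness, I first need that the matrix $\tilde F \cdot B$ has full row rank~$m$. This is immediate from the construction of~$\tilde F$: its $m$ rows were chosen so that $\mathit{RS}(\tilde F \cdot B) = \mathit{RS}(F \cdot B)$, and this row space has dimension $m = \mathit{rank}(F \cdot B)$, so the $m$ rows of $\tilde F \cdot B$ must be linearly independent. Consequently the linear systems~\eqref{minimimisingMWA:defn_alpha} and~\eqref{minimimisingMWA:defn_transition} have at most one solution; what remains is existence. For~\eqref{minimimisingMWA:defn_alpha}, I observe that $\alpha = \alpha \cdot \mu(\varepsilon) \in \forwardspace = \mathit{RS}(F)$, so by Lemma~\ref{lemma:tree-function-min:rowsubset} the vector $\alpha \cdot B$ lies in $\mathit{RS}(F \cdot B) = \mathit{RS}(\tilde F \cdot B)$, giving a (unique) solution $\tilde\alpha$. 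For~\eqref{minimimisingMWA:defn_transition}, I need each row of $\tilde F \cdot \mu(\sigma) \cdot B$ to lie in $\mathit{RS}(\tilde F \cdot B)$; this follows because the rows of $\tilde F$ lie in $\forwardspace$ and $\forwardspace$ is right-invariant under $\mu(\sigma)$ (since $(\alpha \cdot \mu(w)) \cdot \mu(\sigma) = \alpha \cdot \mu(w\sigma) \in \forwardspace$), so each row of $\tilde F \cdot \mu(\sigma)$ is in $\mathit{RS}(F)$ and Lemma~\ref{lemma:tree-function-min:rowsubset} again applies.

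For equivalence, my key lemma will be the identity
\begin{align*}
\tilde\alpha \cdot \tilde\mu(w) \cdot \tilde F \cdot B \;=\; \alpha \cdot \mu(w) \cdot B \qquad \text{for every } w \in \Sigma^{*},
\end{align*}
proved by induction on $|w|$. The base case $w = \varepsilon$ is~\eqref{minimimisingMWA:defn_alpha}. For the inductive step $w = w'\sigma$, I apply~\eqref{minimimisingMWA:defn_transition} to rewrite $\tilde\mu(\sigma)\cdot \tilde F \cdot B$ as $\tilde F \cdot \mu(\sigma) \cdot B$, and then I need to convert this back into something involving $\tilde F \cdot B$ so the induction hypothesis can fire. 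To do this, I use the dual closure property $\mu(\sigma) \cdot \backwardspace \subseteq \backwardspace$, which yields a matrix $Q$ with $\mu(\sigma) \cdot B = B \cdot Q$; chaining these equalities gives the induction step. Finally, since $\gamma = \mu(\varepsilon) \cdot \gamma \in \backwardspace = \mathit{CS}(B)$, we may write $\gamma = B \cdot u$, so
\begin{align*}
\|\tilde\A\|(w) \;=\; \tilde\alpha \cdot \tilde\mu(w) \cdot \tilde F \cdot \gamma \;=\; \tilde\alpha \cdot \tilde\mu(w) \cdot \tilde F \cdot B \cdot u \;=\; \alpha \cdot \mu(w) \cdot B \cdot u \;=\; \|\A\|(w).
\end{align*}

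Minimality is then free: $\tilde\A$ has exactly $m$ states, and Lemma~\ref{lem-rankFB} says that any automaton equivalent to $\A$ (hence to $\tilde\A$) must have at least $m$ states.

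I expect the main subtlety to be the existence part of well-definedness, specifically the symmetric use of the two closure facts $\forwardspace \cdot \mu(\sigma) \subseteq \forwardspace$ (needed to get~\eqref{minimimisingMWA:defn_transition} solvable) and $\mu(\sigma) \cdot \backwardspace \subseteq \backwardspace$ (needed for the inductive step of equivalence). Both are easy in the word case but foreshadow the slightly more elaborate Kronecker-power manipulations required for the tree version in Proposition~\ref{prop-minimal-MTA-abstract}.
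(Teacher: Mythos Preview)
Your proposal is correct and follows essentially the same approach as the paper. The paper likewise splits the proof into well-definedness (using $\alpha \in \forwardspace$ and the closure $\forwardspace \cdot \mu(\sigma) \subseteq \forwardspace$ together with Lemma~\ref{lemma:tree-function-min:rowsubset}) and equivalence (via the same inductive identity $\tilde\alpha \cdot \tilde\mu(w) \cdot \tilde F \cdot B = \alpha \cdot \mu(w) \cdot B$, using $\mu(\sigma) \cdot \backwardspace \subseteq \backwardspace$ in the step); the only cosmetic difference is that the paper argues columnwise over $b \in \backwardspace$ where you introduce an explicit matrix $Q$ with $\mu(\sigma) \cdot B = B \cdot Q$.
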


First, we show that $\tilde{\A}$ is a well-defined multiplicity word automaton:
\begin{lemma} \label{word-function-min:welldefined}
There exists a unique vector $\tilde{\alpha}$ satisfying Equation (\ref{minimimisingMWA:defn_alpha}).
For every $\sigma \in \Sigma$, there exists a unique matrix $\tilde{\mu} (\sigma)$ satisfying Equation (\ref{minimimisingMWA:defn_transition}).
\end{lemma}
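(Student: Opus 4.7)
The plan is to split the proof into a uniqueness argument and two existence arguments, using as the central fact that $\tilde{F}\cdot B$ has full row rank~$m$. By the choice of $\tilde{F}$ we have $\mathit{RS}(\tilde{F}\cdot B) = \mathit{RS}(F\cdot B)$, and therefore $\mathit{rank}(\tilde{F}\cdot B) = m$. Since $\tilde{F}\cdot B$ has exactly $m$ rows, those rows are linearly independent, which means that the linear map $v \mapsto v\cdot(\tilde{F}\cdot B)$ on $\mathbb{F}^{1\times m}$ is injective. This immediately yields uniqueness of the row vector $\tilde{\alpha}$ satisfying Equation~(\ref{minimimisingMWA:defn_alpha}); applying the same injectivity row-by-row also gives uniqueness of the matrix $\tilde{\mu}(\sigma)\in\mathbb{F}^{m\times m}$ satisfying Equation~(\ref{minimimisingMWA:defn_transition}).

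For existence of $\tilde{\alpha}$, I would note that $\alpha = \alpha\cdot\mu(\varepsilon) \in \forwardspace = \mathit{RS}(F)$, so $\alpha = u\cdot F$ for some row vector $u$; hence $\alpha\cdot B = u\cdot F\cdot B \in \mathit{RS}(F\cdot B) = \mathit{RS}(\tilde{F}\cdot B)$, and any witness to this membership can be taken as $\tilde{\alpha}$.

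For existence of $\tilde{\mu}(\sigma)$, the key ingredient is that $\forwardspace$ is closed under right multiplication by $\mu(\sigma)$: any generator $\alpha\cdot\mu(w)$ of $\forwardspace$ satisfies $\alpha\cdot\mu(w)\cdot\mu(\sigma) = \alpha\cdot\mu(w\sigma)\in\forwardspace$, and the claim extends to all of $\forwardspace$ by linearity. Each row of $\tilde{F}$ is a row of $F$ and hence lies in $\forwardspace$, so each row of $\tilde{F}\cdot\mu(\sigma)$ also lies in $\forwardspace = \mathit{RS}(F)$, i.e.\ $\mathit{RS}(\tilde{F}\cdot\mu(\sigma)) \subseteq \mathit{RS}(F)$. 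Applying Lemma~\ref{lemma:tree-function-min:rowsubset} with $A_1 = \tilde{F}\cdot\mu(\sigma)$ and $A_2 = F$ then yields $\mathit{RS}(\tilde{F}\cdot\mu(\sigma)\cdot B) \subseteq \mathit{RS}(F\cdot B) = \mathit{RS}(\tilde{F}\cdot B)$. Each of the $m$ rows of $\tilde{F}\cdot\mu(\sigma)\cdot B$ is therefore a linear combination of the rows of $\tilde{F}\cdot B$; collecting these coefficients into the rows of a matrix $\tilde{\mu}(\sigma)$ produces the required matrix. No real obstacle arises here: once the closure property of $\forwardspace$ and the identity $\mathit{RS}(\tilde{F}\cdot B) = \mathit{RS}(F\cdot B)$ are in hand, both halves of the claim reduce to linear-algebra bookkeeping.
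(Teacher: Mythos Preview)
Your proof is correct and follows essentially the same approach as the paper: both arguments reduce existence and uniqueness to the fact that the rows of $\tilde{F}\cdot B$ form a basis of $\mathit{RS}(F\cdot B)$, then show $\alpha\cdot B$ and each row of $\tilde{F}\cdot\mu(\sigma)\cdot B$ lie in that row space by first establishing $\alpha\in\forwardspace$ and $\mathit{RS}(\tilde{F}\cdot\mu(\sigma))\subseteq\forwardspace$ and then invoking Lemma~\ref{lemma:tree-function-min:rowsubset}. The only cosmetic difference is that the paper cites Proposition~\ref{prop:forwardspace_charact}~(a) for the closure of $\forwardspace$ under right multiplication by $\mu(\sigma)$, whereas you verify it directly on the generators $\alpha\cdot\mu(w)$.
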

\begin{proof}
Since the rows of $\tilde{F} \cdot B$ form a basis of $\mathit{RS}(F \cdot B)$, it suffices to prove that $\alpha \cdot B \in \mathit{RS}(F \cdot B)$ and $\mathit{RS}(\tilde{F} \cdot \mu (\sigma) \cdot B) \subseteq \mathit{RS}(F \cdot B)$ for every $\sigma \in \Sigma$.
By  Lemma~\ref{lemma:tree-function-min:rowsubset}, it further suffices to prove that $\alpha \in \mathit{RS}(F)$ and $\mathit{RS}(\tilde{F} \cdot \mu (\sigma)) \subseteq \mathit{RS}(F)$ for every $\sigma \in \Sigma$.

We have $\alpha = \alpha \cdot \mu(\varepsilon) \in \forwardspace = \mathit{RS}(F)$. Let $i \in [m]$. Since   $\tilde{F}_{i} \in \mathit{RS}(F) = \forwardspace$,  it follows from Proposition~\ref{prop:forwardspace_charact} (a) that $(\tilde{F} \cdot \mu (\sigma))_{i} =  \tilde{F}_{i} \cdot \mu (\sigma) \in \forwardspace$ for all $\sigma \in \Sigma$.
\end{proof}
We complete the proof of Proposition \ref{prop-minimal-MWA-abstract} by showing that MWA $\tilde{\A}$ minimises $\A$:

\begin{lemma}
The automaton $\tilde{\A}$ is a minimal MWA equivalent to $\A$.
\end{lemma}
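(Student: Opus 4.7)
The plan is to separate equivalence and minimality. Minimality is immediate from Lemma~\ref{lem-rankFB}: the automaton $\tilde{\A}$ has $m = \mathit{rank}(F \cdot B)$ states, which already matches the dimension of any minimal MWA equivalent to~$\A$, so it suffices to establish $\|\tilde{\A}\| = \|\A\|$.

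The engine of the equivalence proof is the identity
\[
\tilde{\mu}(w) \cdot \tilde{F} \cdot B \;=\; \tilde{F} \cdot \mu(w) \cdot B \qquad \text{for every } w \in \Sigma^*,
\]
which I would prove by induction on~$|w|$. The base case $w = \varepsilon$ is immediate. For the inductive step, writing $w\sigma$, the defining Equation~(\ref{minimimisingMWA:defn_transition}) rewrites the left-hand side as $\tilde{\mu}(w) \cdot \tilde{F} \cdot \mu(\sigma) \cdot B$, and I need to reconcile this with $\tilde{F} \cdot \mu(w) \cdot \mu(\sigma) \cdot B$.

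The main obstacle is exactly this step: the induction hypothesis only supplies equality after right-multiplication by~$B$, not by $\mu(\sigma) \cdot B$. I would resolve it by showing $\mathit{CS}(\mu(\sigma) \cdot B) \subseteq \backwardspace = \mathit{CS}(B)$, which follows from the MWA characterisation $\backwardspace = \langle \mu(w) \cdot \gamma : w \in \Sigma^* \rangle$ together with the direct computation $\mu(\sigma) \cdot \mu(w) \cdot \gamma = \mu(\sigma w) \cdot \gamma \in \backwardspace$. Writing $\mu(\sigma) \cdot B = B \cdot M$ for some matrix~$M$, the induction hypothesis then propagates:
\[
\tilde{\mu}(w) \cdot \tilde{F} \cdot \mu(\sigma) \cdot B \;=\; \tilde{\mu}(w) \cdot \tilde{F} \cdot B \cdot M \;=\; \tilde{F} \cdot \mu(w) \cdot B \cdot M \;=\; \tilde{F} \cdot \mu(w) \cdot \mu(\sigma) \cdot B.
\]

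From the invariant I would deduce equivalence in two short steps. Since $\gamma \in \backwardspace$, we can write $\gamma = B \cdot y$ for some $y$; multiplying the invariant on the right by~$y$ then gives $\tilde{\mu}(w) \cdot \tilde{\gamma} = \tilde{\mu}(w) \cdot \tilde{F} \cdot \gamma = \tilde{F} \cdot \mu(w) \cdot \gamma$. Combined with Equation~(\ref{minimimisingMWA:defn_alpha}) and the fact that $\mu(w) \cdot \gamma \in \backwardspace = \mathit{CS}(B)$, this yields $\|\tilde{\A}\|(w) = \tilde{\alpha} \cdot \tilde{F} \cdot \mu(w) \cdot \gamma = \alpha \cdot \mu(w) \cdot \gamma = \|\A\|(w)$ for every $w \in \Sigma^*$, completing the proof.
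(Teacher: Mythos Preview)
Your proof is correct and follows essentially the same approach as the paper: an induction on~$|w|$ using the defining equation for~$\tilde{\mu}(\sigma)$, together with the closure of~$\backwardspace$ under left-multiplication by~$\mu(\sigma)$ to push the induction hypothesis through. The only organisational difference is that the paper folds $\tilde{\alpha}$ and~$\alpha$ into the invariant from the start, proving the vector identity $\tilde{\alpha}\cdot\tilde{\mu}(w)\cdot\tilde{F}\cdot B = \alpha\cdot\mu(w)\cdot B$, whereas you prove the matrix identity $\tilde{\mu}(w)\cdot\tilde{F}\cdot B = \tilde{F}\cdot\mu(w)\cdot B$ and apply Equation~(\ref{minimimisingMWA:defn_alpha}) only at the end; both packagings rest on exactly the same ingredients.
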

\begin{proof}
We claim that  for every $w \in \Sigma^{*}$,
\begin{align}
\tilde{\alpha} \cdot \tilde{\mu} (w) \cdot \tilde{F} \cdot B = \alpha \cdot \mu(w) \cdot B. \label{MWA:minimality_proof_induction}
\end{align}
Our proof is by induction on $|w|$. For the base case $w = \varepsilon$, we have
\[
\tilde{\alpha} \cdot \tilde{\mu} (\varepsilon) \cdot \tilde{F} \cdot B = \tilde{\alpha} \cdot \tilde{F} \cdot B \mathop{=}^\text{Eq.~(\ref{minimimisingMWA:defn_alpha})} \alpha \cdot B = \alpha \cdot \mu(\varepsilon) \cdot B.
\]
For the induction step, let $l \in \N_0$ and assume that (\ref{MWA:minimality_proof_induction}) holds for every $w \in \Sigma^{l}$.
Take any $w \in \Sigma^{l}$ and $\sigma \in \Sigma$. For every $b \in \backwardspace = \mathit{CS}(B)$ we have by Proposition \ref{prop:backwardspace_charact} (a) that $\mu (\sigma) \cdot b \in \backwardspace$, and thus by the induction hypothesis for $w \in \Sigma^{l}$  it follows
\allowdisplaybreaks \begin{align*}
\tilde{\alpha} \cdot \tilde{\mu} (w \sigma) \cdot \tilde{F} \cdot b
= \tilde{\alpha} \cdot \tilde{\mu} (w) \cdot \tilde{\mu} (\sigma) \cdot \tilde{F} \cdot b
\mathop{=}^\text{Eq.~(\ref{minimimisingMWA:defn_transition})} &\tilde{\alpha} \cdot \tilde{\mu} (w) \cdot \tilde{F} \cdot \mu (\sigma) \cdot b\\
= \hspace{1.1em} &\alpha \cdot \mu(w) \cdot \mu (\sigma) \cdot b = \alpha \cdot \mu(w \sigma) \cdot b
\end{align*}
which completes the proof by induction.

Now for any $w \in \Sigma^{*}$, since $\gamma \in \backwardspace$ we have
\allowdisplaybreaks \begin{align*}
\|\tilde{\A} \| (w) =  \tilde{\alpha} \cdot \tilde{\mu} (w) \cdot \tilde{\gamma} =  \tilde{\alpha} \cdot \tilde{\mu} (w) \cdot \tilde{F} \cdot \gamma \mathop{=}^\text{Eq.~(\ref{MWA:minimality_proof_induction})} \alpha \cdot \mu(w) \cdot \gamma = \|\A \| (w).
\end{align*}
Hence, MWAs $\tilde{\A}$ and $\A$ are equivalent.
Minimality of $\tilde{\A}$ follows from Lemma~\ref{lem-rankFB}.
\end{proof}

We are now ready to prove Proposition~\ref{prop-minimal-MTA-abstract} in its full generality. The proof is split in two lemmas, Lemmas \ref{lem-prop-minimal-MTA-abstract-well-defined} and \ref{lem-prop-minimal-MTA-abstract-equivalent}, which together imply Proposition~\ref{prop-minimal-MTA-abstract}.
First, we show that $\tilde{\A}$ is a well-defined multiplicity tree automaton:

\begin{lemma} \label{lem-prop-minimal-MTA-abstract-well-defined}
For every $\sigma \in \Sigma_{k}$, there exists a unique matrix $\tilde{\mu} (\sigma)$ satisfying Equation (\ref{minimimisingMTA:defn_transition}).
\end{lemma}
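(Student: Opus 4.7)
The plan is to mimic the argument given for the MWA case (Lemma \ref{word-function-min:welldefined}), adapting it to the tree setting where the main twist is the Kronecker power~$\tilde{F}^{\otimes k}$ on the right-hand side of Equation~(\ref{minimimisingMTA:defn_transition}).

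First I would observe that, by the choice of $\tilde{F}$, the $m$ rows of $\tilde{F} \cdot B$ form a basis of $\mathit{RS}(F \cdot B)$; in particular they are linearly independent. Hence uniqueness of $\tilde{\mu}(\sigma)$ is automatic as soon as existence is established: for each row index $i \in [m^k]$, the $i^\text{th}$ row of $\tilde{F}^{\otimes k} \cdot \mu(\sigma) \cdot B$ has at most one expression as a linear combination of the rows of $\tilde{F} \cdot B$, and the coefficients of that expression constitute the $i^\text{th}$ row of $\tilde{\mu}(\sigma)$.

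So everything reduces to verifying the inclusion
\[
\mathit{RS}\bigl(\tilde{F}^{\otimes k} \cdot \mu(\sigma) \cdot B\bigr) \ \subseteq \ \mathit{RS}\bigl(\tilde{F} \cdot B\bigr)\,.
\]
By Lemma~\ref{lemma:tree-function-min:rowsubset} (applied with right-multiplier $B$), it suffices to establish $\mathit{RS}\bigl(\tilde{F}^{\otimes k} \cdot \mu(\sigma)\bigr) \subseteq \mathit{RS}(F) = \forwardspace$. Here I would invoke Equation~(\ref{Kronecker_indices}): every row of $\tilde{F}^{\otimes k}$ is of the form $\tilde{F}_{i_1} \otimes \cdots \otimes \tilde{F}_{i_k}$, and each~$\tilde{F}_{i_\ell}$ is a row of~$\tilde{F}$, which by construction lies in $\mathit{RS}(F) = \forwardspace$. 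Consequently the corresponding row of $\tilde{F}^{\otimes k} \cdot \mu(\sigma)$ equals $(\tilde{F}_{i_1} \otimes \cdots \otimes \tilde{F}_{i_k}) \cdot \mu(\sigma)$, which is an element of $\forwardspace$ by the closure property of Proposition~\ref{prop:forwardspace_charact}(a).

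I do not foresee a real obstacle; the only delicate point is making sure to invoke the Kronecker-indexing identity~(\ref{Kronecker_indices}) before applying the closure property of the forward space, because without it one might be tempted to argue incorrectly row-by-row as in the MWA case. Once the inclusion $\mathit{RS}(\tilde{F}^{\otimes k} \cdot \mu(\sigma)) \subseteq \forwardspace$ is in hand, the rest follows from Lemma~\ref{lemma:tree-function-min:rowsubset} and the linear independence of the rows of $\tilde{F} \cdot B$, so both existence and uniqueness of $\tilde{\mu}(\sigma)$ are settled at once.
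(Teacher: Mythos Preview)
Your proposal is correct and follows essentially the same route as the paper's proof: reduce existence and uniqueness to the inclusion $\mathit{RS}(\tilde{F}^{\otimes k}\cdot\mu(\sigma)\cdot B)\subseteq \mathit{RS}(F\cdot B)=\mathit{RS}(\tilde{F}\cdot B)$, then use Lemma~\ref{lemma:tree-function-min:rowsubset} to drop~$B$, identify each row of $\tilde{F}^{\otimes k}$ via Equation~(\ref{Kronecker_indices}), and apply Proposition~\ref{prop:forwardspace_charact}(a). The only cosmetic difference is that the paper phrases the target inclusion with $\mathit{RS}(F\cdot B)$ rather than $\mathit{RS}(\tilde{F}\cdot B)$, but these coincide by construction of~$\tilde{F}$.
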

\begin{proof}
Since the rows of $\tilde{F} \cdot B$ form a basis of $\mathit{RS}(F \cdot B)$,
 it suffices to prove that 
\[
\mathit{RS}(\tilde{F}^{\otimes k} \cdot \mu (\sigma) \cdot B) \subseteq \mathit{RS}(F \cdot B).
\]
By Lemma~\ref{lemma:tree-function-min:rowsubset}, to do this it suffices to prove that \mbox{$\mathit{RS}(\tilde{F}^{\otimes k} \cdot \mu (\sigma)) \subseteq \mathit{RS}(F)$}.
Let us therefore take an arbitrary row $(\tilde{F}^{\otimes k} \cdot \mu (\sigma))_{(i_1, \ldots, i_k)}$ of $\tilde{F}^{\otimes k} \cdot \mu (\sigma)$, where $(i_1, \ldots, i_k) \in [m]^{k}$. We have
\allowdisplaybreaks \begin{align*}
(\tilde{F}^{\otimes k} \cdot \mu (\sigma))_{(i_1, \ldots, i_k)} = (\tilde{F}^{\otimes k})_{(i_1, \ldots, i_k)} \cdot \mu (\sigma) \mathop{=}^\text{Eq. (\ref{Kronecker_indices})}
(\tilde{F}_{i_1} \otimes \cdots \otimes \tilde{F}_{i_k}) \cdot \mu (\sigma).
\end{align*}
Since $\tilde{F}_{i_1}, \ldots, \tilde{F}_{i_k} \in \mathit{RS}(\tilde{F}) \subseteq \mathit{RS}(F) = \forwardspace$, we have that \mbox{$(\tilde{F}_{i_1} \otimes \cdots \otimes \tilde{F}_{i_k}) \cdot \mu (\sigma) \in \forwardspace$} by Proposition~\ref{prop:forwardspace_charact}~(a). Therefore, $(\tilde{F}^{\otimes k} \cdot \mu (\sigma))_{(i_1, \ldots, i_k)} \in \forwardspace = \mathit{RS}(F)$.
\end{proof}

Next, we show that MTA $\tilde{\A}$ minimises $\A$:

\begin{lemma} \label{lem-prop-minimal-MTA-abstract-equivalent}
The automaton $\tilde{\A}$ is a minimal MTA equivalent to $\A$.
\end{lemma}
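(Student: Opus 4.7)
The plan is to mirror the word-case proof (just given for Proposition~\ref{prop-minimal-MWA-abstract}) by establishing, by induction on $\mathit{height}(t)$, the invariant
\[
\tilde{\mu}(t) \cdot \tilde{F} \cdot B \;=\; \mu(t) \cdot B \qquad \text{for every } t \in T_{\Sigma}.
\]
Once this is proved, since $\gamma \in \backwardspace = \mathit{CS}(B)$ it follows that $\tilde{\mu}(t)\cdot\tilde{\gamma} = \tilde{\mu}(t)\cdot\tilde{F}\cdot\gamma = \mu(t)\cdot\gamma$ for every $t \in T_\Sigma$, so $\|\tilde{\A}\| = \|\A\|$; minimality of~$\tilde{\A}$ is then immediate from Lemma~\ref{lem-rankFB} since $\tilde{\A}$ has $m$ states.

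The base case $t = \sigma \in \Sigma_0$ of the invariant is Equation~(\ref{minimimisingMTA:defn_transition}) at $k = 0$. For the inductive step, take $t = \sigma(t_1, \ldots, t_k)$ with $\sigma \in \Sigma_k$, and set $v_i := \tilde{\mu}(t_i) \cdot \tilde{F}$. By the induction hypothesis $(v_i - \mu(t_i)) \cdot B = 0$, and both $v_i$ and $\mu(t_i)$ lie in $\forwardspace$ (the former since $\mathit{RS}(\tilde{F}) \subseteq \forwardspace$). Using Equation~(\ref{minimimisingMTA:defn_transition}) together with the mixed-product property,
\[
\tilde{\mu}(t) \cdot \tilde{F} \cdot B \;=\; \bigl(\tilde{\mu}(t_1) \otimes \cdots \otimes \tilde{\mu}(t_k)\bigr) \cdot \tilde{F}^{\otimes k} \cdot \mu(\sigma) \cdot B \;=\; (v_1 \otimes \cdots \otimes v_k) \cdot \mu(\sigma) \cdot B.
\]
Expanding the difference $v_1 \otimes \cdots \otimes v_k - \mu(t_1) \otimes \cdots \otimes \mu(t_k)$ telescopically as $\sum_{j=1}^{k} \mu(t_1) \otimes \cdots \otimes (v_j - \mu(t_j)) \otimes v_{j+1} \otimes \cdots \otimes v_k$ reduces the inductive step to the following key lemma.

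\emph{Key lemma:} if $u_1, \ldots, u_k \in \forwardspace$ and $u_j \cdot B = 0$ for some index $j$, then $(u_1 \otimes \cdots \otimes u_k) \cdot \mu(\sigma) \cdot B = 0$. This is the main technical obstacle and the only substantive departure from the word-case argument, as it is where the tensor structure of tree transitions enters. To prove it, I would write each $u_i = \sum_l \alpha_l^i \, \mu(\tau_l^i)$ with $\tau_l^i \in T_\Sigma$, expand by multilinearity of $\otimes$, and (using that $\mathit{CS}(B) = \backwardspace$ is by definition spanned by $\{\mu(c) \cdot \gamma : c \in C_\Sigma\}$) reduce to checking, for arbitrary $\tau^1, \ldots, \tau^k \in T_\Sigma$ and $c \in C_\Sigma$, that
\[
\bigl(\mu(\tau^1) \otimes \cdots \otimes u_j \otimes \cdots \otimes \mu(\tau^k)\bigr) \cdot \mu(\sigma) \cdot \mu(c) \cdot \gamma \;=\; u_j \cdot \mu(c') \cdot \gamma
\]
for the single-hole context $c' := c[\sigma(\tau^1, \ldots, \Box, \ldots, \tau^k)] \in C_\Sigma$; this is a short bilinearity calculation using $\mu(c[t]) = \mu(t) \cdot \mu(c)$. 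The right-hand side then vanishes because $\mu(c') \cdot \gamma \in \backwardspace$ and $u_j \cdot B = 0$ means $u_j \perp \backwardspace$.
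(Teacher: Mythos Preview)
Your proof is correct and follows essentially the same approach as the paper: both prove the invariant $\tilde{\mu}(t)\cdot\tilde{F}\cdot B = \mu(t)\cdot B$ by induction on $\mathit{height}(t)$, and in the inductive step both replace the factors $\tilde{\mu}(t_i)\tilde{F}$ by $\mu(t_i)$ one coordinate at a time using that the remaining cofactor lies in~$\backwardspace$. The only difference is packaging---you use a telescoping sum plus your ``key lemma'' (which you prove from scratch via contexts), whereas the paper runs an inner iteration over $l\in[k-1]$ and cites Proposition~\ref{prop:backwardspace_charact}\,(a) for the same closure property of~$\backwardspace$.
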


\begin{proof}
First we show that for every $t \in T_{\Sigma}$,
\begin{align}
\tilde{\mu} (t) \cdot \tilde{F} \cdot B = \mu(t) \cdot B. \label{lem-prop-minimal-MTA-abstract-equivalent-claim}
\end{align}
Our proof is by induction on $\mathit{height}(t)$. The base case $t = \sigma \in \Sigma_{0}$ follows immediately from Equation (\ref{minimimisingMTA:defn_transition}).
For the induction step, let $h \in \N_0$ and assume that (\ref{lem-prop-minimal-MTA-abstract-equivalent-claim}) holds for every $t \in T_{\Sigma}^{\leq h}$. Take any tree $t \in T_{\Sigma}^{h +1}$. Then $t = \sigma (t_1, \ldots, t_k)$ for some $k \geq 1$, $\sigma \in \Sigma_k$, and $t_1, \ldots, t_k \in T_{\Sigma}^{\leq h}$. Using bilinearity of Kronecker product we get that
\allowdisplaybreaks \begin{align*}
\tilde{\mu} (t) \cdot \tilde{F} \cdot B &= (\tilde{\mu} (t_{1}) \otimes \cdots \otimes \tilde{\mu} (t_{k})) \cdot \tilde{\mu} (\sigma) \cdot \tilde{F} \cdot B\\ &= (\tilde{\mu} (t_{1}) \otimes \cdots \otimes \tilde{\mu} (t_{k})) \cdot  \tilde{F}^{\otimes k} \cdot \mu (\sigma) \cdot B && \text{by Eq.~(\ref{minimimisingMTA:defn_transition})}\\
&= ((\tilde{\mu} (t_{1})\tilde{F}) \otimes \cdots \otimes (\tilde{\mu} (t_{k})\tilde{F})) \cdot \mu (\sigma) \cdot B && \text{by Eq.~(\ref{Kronecker:kmixed})}\\
&= (\tilde{\mu} (t_{1})\tilde{F}) \cdot(I_{n} \otimes (\tilde{\mu} (t_{2})\tilde{F}) \otimes \cdots \otimes (\tilde{\mu} (t_{k})\tilde{F})) \cdot \mu (\sigma)\cdot  B.
\end{align*}
Since $\mathit{RS}(\tilde{F}) \subseteq \forwardspace$, for every $i \in \{2, \ldots, k\}$ it holds that $\tilde{\mu} (t_{i})\tilde{F} \in \forwardspace$. Since $I_{n} = \mu (\Box) \in \forwardspace$, we now have that $(I_{n} \otimes (\tilde{\mu} (t_{2})\tilde{F}) \otimes \cdots \otimes (\tilde{\mu} (t_{k})\tilde{F})) \cdot\mu (\sigma) \cdot B \in \backwardspace$ by Proposition \ref{prop:backwardspace_charact} (a). Thus by the induction hypothesis for $t_{1}  \in T_{\Sigma}^{\leq h}$, we have
\allowdisplaybreaks \begin{align*}
\tilde{\mu} (t) \cdot \tilde{F} \cdot B &= \mu (t_{1}) \cdot(I_{n} \otimes (\tilde{\mu} (t_{2})\tilde{F}) \otimes \cdots \otimes (\tilde{\mu} (t_{k})\tilde{F})) \cdot \mu (\sigma) \cdot B\\
&=  (\mu (t_{1}) \otimes (\tilde{\mu} (t_{2})\tilde{F}) \otimes \cdots \otimes (\tilde{\mu} (t_{k})\tilde{F})) \cdot \mu (\sigma) \cdot B.
\end{align*}
From here we argue inductively as follows: Assume that for some $l \in [k-1]$,
\allowdisplaybreaks \begin{align*}
\tilde{\mu} (t) \cdot \tilde{F} \cdot B =  (\mu (t_{1}) \otimes \cdots \otimes \mu (t_{l}) \otimes (\tilde{\mu} (t_{l+1})\tilde{F}) \otimes \cdots \otimes (\tilde{\mu} (t_{k})\tilde{F})) \cdot \mu (\sigma) \cdot B.
\end{align*}
Then by bilinearity of Kronecker product, we get that $\tilde{\mu} (t) \cdot \tilde{F} \cdot B$ is equal to
\allowdisplaybreaks \begin{align*}
(\tilde{\mu} (t_{l+1})\tilde{F}) \cdot (\mu (t_{1}) \otimes \cdots \otimes \mu (t_{l}) \otimes I_{n} \otimes (\tilde{\mu} (t_{l+2})\tilde{F}) \otimes \cdots \otimes (\tilde{\mu} (t_{k})\tilde{F})) \cdot \mu (\sigma) \cdot B.
\end{align*}
Here $(\mu (t_{1}) \otimes \cdots \otimes \mu (t_{l}) \otimes I_{n} \otimes (\tilde{\mu} (t_{l+2})\tilde{F}) \otimes \cdots \otimes (\tilde{\mu} (t_{k})\tilde{F})) \cdot \mu (\sigma) \cdot B \in \backwardspace$ by the same reasoning as above. The induction hypothesis for $t_{l+1}  \in T_{\Sigma}^{\leq h}$ now implies 
\allowdisplaybreaks \begin{align*}
&\tilde{\mu} (t) \cdot \tilde{F} \cdot B\\ &=  \mu (t_{l+1}) \cdot (\mu (t_{1}) \otimes \cdots \otimes \mu (t_{l}) \otimes I_{n} \otimes (\tilde{\mu} (t_{l+2})\tilde{F}) \otimes \cdots \otimes (\tilde{\mu} (t_{k})\tilde{F})) \cdot \mu (\sigma) \cdot B\\
&=  (\mu (t_{1}) \otimes \cdots \otimes \mu (t_{l}) \otimes \mu (t_{l+1})  \otimes (\tilde{\mu} (t_{l+2})\tilde{F}) \otimes \cdots \otimes (\tilde{\mu} (t_{k})\tilde{F})) \cdot \mu (\sigma) \cdot B.
\end{align*}
Continuing our inductive argument, for $l = k-1$  we get that
\allowdisplaybreaks \begin{align*}
\tilde{\mu} (t) \cdot \tilde{F} \cdot B =  (\mu (t_{1}) \otimes \cdots \otimes \mu (t_{k})) \cdot \mu (\sigma) \cdot B = \mu (t) \cdot B.
\end{align*}
This completes the proof of (\ref{lem-prop-minimal-MTA-abstract-equivalent-claim}) by induction.

Now since $\gamma \in \backwardspace$, for every $t \in T_{\Sigma}$ we have
\allowdisplaybreaks \begin{align*}
\|\tilde{\A}\|(t) = \tilde{\mu} (t)  \cdot \tilde{\gamma} = \tilde{\mu} (t)  \cdot \tilde{F} \cdot \gamma \mathop{=}^\text{Eq. (\ref{lem-prop-minimal-MTA-abstract-equivalent-claim})} \mu (t) \cdot \gamma = \|\A\| (t).
\end{align*}
Hence, MTAs $\tilde{\A}$ and $\A$ are equivalent. Minimality follows from Lemma~\ref{lem-rankFB}.
\end{proof}

By a result of Bozapalidis and Alexandrakis~\cite[Proposition 4]{bozapalidis1989representations},
 all equivalent minimal multiplicity tree automata are equal up to a change of basis.
Thus the MTA~$\tilde{\A}$ is ``canonical''
 in the sense that any minimal MTA equivalent to~$\A$ can be obtained from~$\tilde{\A}$ via a linear transformation:
any $m$-dimensional MTA $\tilde{\A}' = (m, \Sigma, \tilde{\mu}', \tilde{\gamma}')$
 is equivalent to~$\A$ if and only if there exists an invertible matrix $U \in \mathbb{F}^{m \times m}$
 such that $\tilde{\gamma}' = U \cdot \tilde{\gamma}$ and
 \mbox{$\tilde{\mu}'(\sigma) =  U^{\otimes \mathit{rk}(\sigma)} \cdot \tilde{\mu} (\sigma) \cdot U^{-1}$} for every $\sigma \in \Sigma$.


\subsection{Spanning Sets for the Forward and Backward Spaces} \label{sub-forward-and-backward-spaces}

The minimal automaton~$\tilde{\A}$ from Section~\ref{sub-a-minimal-automaton}
 is defined in terms of matrices $F$~and~$B$ whose rows and columns span
 the forward space~$\forwardspace$ and the backward space~$\backwardspace$, respectively.
In fact, the central algorithmic challenge for minimisation lies
 in the efficient computation of such matrices.
In this section we prove a key result, Proposition~\ref{prop-compact-space} below,
 suggesting a way to compute $F$~and~$B$, which we exploit
in Sections \ref{sub-word-min}~and~\ref{sec-decision}.

Propositions~\ref{prop:forwardspace_charact}~and~\ref{prop:backwardspace_charact}
 and their proofs already suggest an efficient algorithm
  for iteratively computing bases of $\forwardspace$~and~$\backwardspace$.
We make this algorithm more explicit and analyse its unit-cost
 complexity in Section~\ref{function_problem:trees}.
The drawback of the resulting algorithm will be the use of ``if-conditionals'':
 the algorithm branches according to whether certain sets of vectors are linearly independent.
Such conditionals are ill-suited for efficient \emph{parallel} algorithms and
 also for many-one reductions.
Thus it cannot be used for an \NC-algorithm in Section~\ref{sub-word-min}
 nor for a reduction to {\acit} in Section~\ref{sec-decision}.

The following proposition exhibits polynomial-size sets of spanning vectors for $\forwardspace$~and~$\backwardspace$,
 which, as we will see later, can be computed efficiently without branching.
The proposition is based on the \emph{product} automaton $\A \times \A$ defined in Section \ref{subsec:MTA}. It defines a sequence $(f(l))_{l \in \N}$ of row vectors and a sequence $(b(l))_{l \in \N}$  of square matrices. Part~(a) states that
 the vector~$f(n)$ and the matrix~$b(n)$ determine matrices $F$ and~$B$,
 whose rows and columns span $\forwardspace$~and~$\backwardspace$, respectively.
Part~(b) gives a recursive characterisation of the sequences $(f(l))_{l \in \N}$  and $(b(l))_{l \in \N}$. This allows for an efficient computation of $f(n)$~and~$b(n)$.
\newcommand{\stmtpropcompactspace}{
Let $\Sigma$ have rank~$r$.
Let $\A \times \A = (n^{2},\Sigma, \mu', \gamma^{\otimes 2})$ be the product of $\A$~by~$\A$.
For every $l \in \N$, define 
\begin{align*}
f(l) := \sum_{t \in T_{\Sigma}^{< l}} \mu' (t) \in \mathbb{F}^{1 \times n^2}
\text{~ and ~} b(l) := \sum_{c \in C_{\Sigma, T_{\Sigma}^{< n}}^{< l}} \mu' (c) \in \mathbb{F}^{n^2 \times n^2}.
\end{align*}
\begin{enumerate}
\renewcommand{\theenumi}{(\alph{enumi})}
\renewcommand{\labelenumi}{\textup{\theenumi}}
\item
Let $F \in \mathbb{F}^{n \times n}$ be the matrix with $F_{i,j} = f(n) \cdot (e_{i} \otimes e_{j})^\top$ for all $i,j \in [n]$.
Let $B \in \mathbb{F}^{n \times n}$ be the matrix with $B_{i,j} = (e_{i} \otimes e_{j}) \cdot b(n) \cdot \gamma^{\otimes 2}$  for all $i,j \in [n]$.
Then, $\mathit{RS}(F) = \forwardspace$ and $\mathit{CS}(B) = \backwardspace$.
\item
We have $f(1) = \sum_{\sigma \in \Sigma_{0}} \mu' (\sigma)$ and $b(1) = I_{n^{2}}$. For all $l \in \N$, it holds that
\allowdisplaybreaks\begin{align*}
f(l+1) & = \sum\limits_{k=0}^{r} f(l)^{\otimes k} \sum\limits_{\sigma \in \Sigma_{k}} \mu'(\sigma), \text{ and}\\
b(l+1) & = I_{n^{2}} + \sum\limits_{k=1}^{r} \sum_{j=1}^k \left( f(n)^{\otimes (j-1)} \otimes b(l) \otimes f(n)^{\otimes (k-j)} \right) \sum\limits_{\sigma \in \Sigma_{k}} \mu^{\prime}(\sigma).
\end{align*}
\end{enumerate}
}
\begin{proposition} \label{prop-compact-space}
\stmtpropcompactspace
\end{proposition}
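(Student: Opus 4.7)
The plan is to reduce everything to two facts from Proposition~\ref{prop:diff_prod_MTA}, namely that the product automaton $\A \times \A$ satisfies $\mu'(t) = \mu(t) \otimes \mu(t)$ for every $t \in T_\Sigma$ and $\mu'(c) = \mu(c) \otimes \mu(c)$ for every $c \in C_\Sigma$. Substituting these identities into the definitions of $f(n)$ and $b(n)$, and applying the mixed-product property, gives for all $i,j \in [n]$
\[
F_{i,j} \ = \sum_{t \in T_\Sigma^{<n}} \mu(t)_i \cdot \mu(t)_j \qquad \text{and} \qquad B_{i,j} \ = \sum_{c \in C_{\Sigma, T_\Sigma^{<n}}^{<n}} (\mu(c) \cdot \gamma)_i \cdot (\mu(c) \cdot \gamma)_j.
\]
Introducing the matrix $M$ whose rows are $\mu(t)$ for $t \in T_\Sigma^{<n}$ and the matrix $N$ whose columns are $\mu(c) \cdot \gamma$ for $c \in C_{\Sigma, T_\Sigma^{<n}}^{<n}$, these read $F = M^\top M$ and $B = N \cdot N^\top$. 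Lemma~\ref{lem-ATA-rowspace} (applied to $M$, and also to $N^\top$, using $\mathit{CS}(X) = \mathit{RS}(X^\top)$) yields $\mathit{RS}(F) = \mathit{RS}(M)$ and $\mathit{CS}(B) = \mathit{CS}(N)$. Then Proposition~\ref{prop:forwardspace_charact}(b) gives $\mathit{RS}(M) = \forwardspace$, and Proposition~\ref{prop:backwardspace_charact}(b) applied with $S = T_\Sigma^{<n}$ gives $\mathit{CS}(N) = \backwardspace$, settling part~(a).

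For part~(b), the base cases are immediate: $T_\Sigma^{<1} = \Sigma_0$ yields the formula for $f(1)$, while $C_{\Sigma, T_\Sigma^{<n}}^{<1}$ contains only $\Box$ with $\mu'(\Box) = I_{n^2}$, giving $b(1) = I_{n^2}$. For the recursion on $f(l+1)$, I would partition $T_\Sigma^{<l+1}$ by the arity of the root: a tree $t \in T_\Sigma^{<l+1}$ is either a leaf $\sigma \in \Sigma_0$ or has the form $\sigma(t_1, \ldots, t_k)$ with $\sigma \in \Sigma_k$, $k \geq 1$, and $t_1, \ldots, t_k \in T_\Sigma^{<l}$. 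Expanding $\mu'$ via the tree representation of $\A \times \A$ and swapping the sum over the $t_i$ inside the tensor product by multilinearity turns the $k$-ary contribution into $f(l)^{\otimes k} \sum_{\sigma \in \Sigma_k} \mu'(\sigma)$; the leaf case $k=0$ is absorbed using the convention $f(l)^{\otimes 0} = I_1$.

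The recursion for $b(l+1)$ is the part that needs the most care. Any non-trivial context in $C_{\Sigma, T_\Sigma^{<n}}^{<l+1}$ decomposes uniquely as $\sigma(s_1, \ldots, s_{j-1}, c', s_{j+1}, \ldots, s_k)$ with $\sigma \in \Sigma_k$, $k \geq 1$, a position $j \in [k]$ for the $\Box$-carrying child $c'$, and trees $s_i \in T_\Sigma^{<n}$ for $i \neq j$. Two bookkeeping points are easy to get wrong and must be stated explicitly: (i) the root-to-$\Box$ distance in $c$ is one more than in $c'$, so $c \in C^{<l+1}$ forces $c' \in C^{<l}$; and (ii) the side subtrees $s_i$ range over all of $T_\Sigma^{<n}$ (by definition of $C_{\Sigma, T_\Sigma^{<n}}$), which is precisely why $f(n)$, and not $f(l)$, appears in the recursion. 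Once these are in place, multilinearity of $\otimes$ gives
\[
\sum_{s_1, \ldots, s_{j-1}, c', s_{j+1}, \ldots, s_k} \mu'(s_1) \otimes \cdots \otimes \mu'(c') \otimes \cdots \otimes \mu'(s_k) \ = \ f(n)^{\otimes (j-1)} \otimes b(l) \otimes f(n)^{\otimes (k-j)},
\]
and summing over $\sigma \in \Sigma_k$, over $j \in [k]$, and over $k \in \{1, \ldots, r\}$, together with the $I_{n^2}$ contribution from $c = \Box$, recovers the stated formula for $b(l+1)$.
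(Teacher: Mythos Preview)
Your proof is correct and follows essentially the same approach as the paper's own proof. The paper introduces matrices $\hF$ and $\hB$ (your $M$ and $N$), shows $F = \hF^\top \hF$ and $B = \hB \hB^\top$ via the mixed-product property and Proposition~\ref{prop:diff_prod_MTA}, then invokes Lemma~\ref{lem-ATA-rowspace} together with Propositions~\ref{prop:forwardspace_charact}(b) and~\ref{prop:backwardspace_charact}(b); for part~(b) it uses the same root-decomposition and multilinearity argument you describe, including the two bookkeeping points you flag.
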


\begin{proof}
First, we prove that $\mathit{RS}(F) = \forwardspace$ in part~(a).
Let $\hF \in \mathbb{F}^{T_{\Sigma}^{< n} \times [n]}$ be a matrix such that
$\hF_t = \mu(t)$ for every $t \in T_{\Sigma}^{< n}$.
From Proposition~\ref{prop:forwardspace_charact} (b) it follows that $\mathit{RS}(\hF) = \forwardspace$.
By Lemma~\ref{lem-ATA-rowspace} we now have $\mathit{RS}(\hF^\top \hF) = \mathit{RS}(\hF) = \forwardspace$.
Thus in order to prove that $\mathit{RS}(F) = \forwardspace$, it suffices to show that $\hF^\top \hF = F$.
Indeed, using the mixed-product property of Kronecker product, we have for all $i,j \in [n]$:
\allowdisplaybreaks \begin{align*}
(\hF^{\top} \hF)_{i, j} =    (\hF^{\top})_{i} \cdot (\hF)^{j}
& = \ \sum\limits_{t \in T_{\Sigma}^{< n}} \mu(t)_{i} \cdot \mu(t)_{j}\\
& = \ \sum\limits_{t \in T_{\Sigma}^{< n}} (\mu(t) \cdot e_{i}^\top) \otimes (\mu(t) \cdot e_{j}^\top) \\
&  =  \left(\sum\limits_{t \in T_{\Sigma}^{< n}} (\mu(t) \otimes \mu(t))\right) \cdot (e_{i} \otimes e_{j})^\top \\
& \hspace{-0.99em} \mathop{=}^\text{Prop.~\ref{prop:diff_prod_MTA}} \ \left(\sum\limits_{t \in T_{\Sigma}^{< n}} \mu'(t)\right) \cdot (e_{i} \otimes e_{j})^\top 
 = f(n) \cdot (e_{i} \otimes e_{j})^{\top}. 
\end{align*}

Next, we complete the proof of part~(a) by proving that $\mathit{CS}(B) = \backwardspace$.
To avoid notational clutter,  in the following we write 
\[
C := C_{\Sigma, T_{\Sigma}^{< n}}^{< n}.
\]
Define a matrix $\hB \in \mathbb{F}^{[n] \times C}$ such that
$\hB^c = \mu(c) \cdot \gamma$ for all $c \in C$ .
From Proposition~\ref{prop:backwardspace_charact}~(b) it follows that $\mathit{CS}(\hB) = \backwardspace$.
By Lemma~\ref{lem-ATA-rowspace} we now have $\mathit{CS}(\hB \hB^\top) = \mathit{CS}(\hB) = \backwardspace$. Therefore in order to prove that $\mathit{CS}(B) = \backwardspace$, it suffices to show that $\hB \hB^\top = B$.
Indeed, using the mixed-product property of Kronecker product, we have for all $i,j \in [n]$:
\allowdisplaybreaks\begin{align*}
(\hB \cdot \hB^{\top})_{i, j} &=   (\hB)_{i} \cdot (\hB^{\top})^{j}\\
&= \sum_{c \in C} (\mu(c)_{i} \cdot \gamma) \cdot (\mu(c)_{j} \cdot \gamma)\\
&= \sum_{c \in C} (e_{i} \cdot \mu(c) \cdot \gamma) \otimes (e_{j}  \cdot \mu(c) \cdot \gamma) \\
&= \sum_{c \in C} (e_{i} \otimes e_{j}) \cdot (\mu(c) \otimes \mu(c)) \cdot (\gamma \otimes \gamma ) \\
&= (e_{i} \otimes e_{j}) \cdot \left(\sum_{c \in C} (\mu(c) \otimes \mu(c))\right) \cdot (\gamma \otimes \gamma ) \\
&= (e_{i} \otimes e_{j}) \cdot \left(\sum_{c \in C} \mu'(c)\right) \cdot \gamma^{\otimes 2}
    && \text{(by Proposition~\ref{prop:diff_prod_MTA}~(ii))}\\
&= (e_{i} \otimes e_{j}) \cdot b(n) \cdot \gamma^{\otimes 2}
    && \text{(definition of~$b(n)$)} \\
& = B_{i,j}.
\end{align*}

We turn to the proof of part~(b). Here we do not use the fact that we are dealing with a product automaton. We first prove the statement on~$f(l)$.
The equality $f(1) = \sum_{\sigma \in \Sigma_{0}} \mu' (\sigma)$ follows directly from the definition. For all $l \in \N$,
\[
 T_{\Sigma}^{< l+1} = \{\sigma(t_{1}, \ldots ,t_{k}) :
 0 \le k \le r, \ \sigma \in \Sigma_{k}, \ t_{1}, \ldots, t_{k} \in T_{\Sigma}^{< l}\}\;.
\]
Thus, by bilinearity of Kronecker product, it holds that
\allowdisplaybreaks\begin{align*}
f(l+1)
&= \sum\limits_{t \in T_{\Sigma}^{< l+1}} \mu' (t) \\
&= \sum\limits_{k=0}^{r} \sum\limits_{\sigma \in \Sigma_{k}} \sum\limits_{t_{1} \in T_{\Sigma}^{< l}} \cdots \sum\limits_{t_{k} \in T_{\Sigma}^{< l}} \left( \mu'(t_{1}) \otimes \cdots \otimes \mu'(t_{k}) \right) \cdot \mu'(\sigma)\\
&= \sum\limits_{k=0}^{r} \left( \left( \sum_{t_{1} \in T_{\Sigma}^{< l}} \mu' (t_{1}) \right)  \otimes \cdots \otimes \left( \sum_{t_{k} \in T_{\Sigma}^{< l}} \mu' (t_{k}) \right) \right) \cdot \sum\limits_{\sigma \in \Sigma_{k}}  \mu'(\sigma)\\
&= \sum\limits_{k=0}^{r} \left(\sum_{t \in T_{\Sigma}^{< l}} \mu' (t) \right)^{\otimes k} \sum\limits_{\sigma \in \Sigma_{k}} \mu'(\sigma) \\
&= \sum\limits_{k=0}^{r} f(l)^{\otimes k} \sum\limits_{\sigma \in \Sigma_{k}} \mu'(\sigma) \,.
\end{align*}

Finally, we prove the statement on~$b(l)$. The equality $b(1) = I_{n^{2}}$ follows from the definition.
To avoid notational clutter we write $T := T_{\Sigma}^{< n}$ in the following.
Recall that $f(n) = \sum_{t \in T} \mu'(t)$.
We have for all $l \in \N$:
\begin{align*}
 C_{\Sigma,T}^{< l+1} &= \{ \Box\} \cup \left\{\sigma(t_{1}, \ldots, t_{j-1}, c_{j}, t_{j+1}, \ldots ,t_{k}) :   k \in [r], \  j \in [k], \ \sigma \in \Sigma_{k},
                 \mbox{} \right. \\
                      & \qquad\qquad \qquad \quad \left.     c_{j} \in C_{\Sigma, T}^{< l}, \
                      t_{1}, \ldots, t_{j-1}, t_{j+1}, \ldots, t_{k} \in T \right\}\,.
\end{align*}
Thus, using bilinearity of Kronecker product, we get that
\allowdisplaybreaks\begin{align*}
& b(l+1) \\
&= \sum_{c \in C_{\Sigma,T}^{< l+1}} \mu' (c)\\
&= \mu' (\Box) + \sum_{k=1}^{r} \sum_{j=1}^k \sum\limits_{\sigma \in \Sigma_{k}} \sum\limits_{t_{1}, \ldots, t_{j-1} \in T} \sum\limits_{c_{j} \in C_{\Sigma, T}^{< l}} \sum\limits_{t_{j+1}, \ldots, t_{k} \in T} ( \mu'(t_{1}) \otimes \cdots \otimes \mu'(c_{j})  \mbox{} \\
&  \hspace{85.5mm} \mbox{}\otimes \cdots\otimes \mu'(t_{k}) ) \cdot \mu'(\sigma)\\
&= I_{n^{2}} + \sum_{k=1}^{r} \sum_{j=1}^k \Bigg( \Bigg( \sum_{t_{1} \in T} \mu'(t_{1}) \Bigg)  \otimes \cdots  \otimes \Bigg( \sum_{c_{j} \in C_{\Sigma, T}^{< l}} \mu' (c_{j}) \Bigg)\mbox{} \\
&  \hspace{28mm}\otimes \cdots \otimes \Bigg( \sum_{t_{k} \in T} \mu'(t_{k}) \Bigg) \Bigg) \cdot  \sum\limits_{\sigma \in \Sigma_{k}} \mu'(\sigma)\\
&= I_{n^{2}} + \sum_{k=1}^{r} \sum_{j=1}^k \left( f(n)^{\otimes (j-1)} \otimes b(l) \otimes f(n)^{\otimes (k-j)} \right)
    \sum\limits_{\sigma \in \Sigma_{k}} \mu'(\sigma)\,. 
\end{align*}
This completes the proof.
\end{proof}
Loosely speaking, Proposition~\ref{prop-compact-space} says that
 the sum over a small subset of the forward space of the product automaton encodes
 a spanning set of the whole forward space of the original automaton,
 and similarly for the backward space.

\section{Minimisation Algorithms} \label{sec-function}

In this section we devise algorithms for minimising a given multiplicity automaton: Section~\ref{function_problem:trees} considers general MTAs,
while Section~\ref{sub-word-min} considers MWAs.
For the sake of a complexity analysis in standard models, we fix the field $\mathbb{F} = \mathbb{Q}$.

\subsection{Minimisation of Multiplicity Tree Automata}\label{function_problem:trees}
In this subsection we describe
an implementation of the algorithm
implicit in Section~\ref{sub-a-minimal-automaton},
 and analyse the number of operations. 
We consider a multiplicity tree automaton $\A =(n, \Sigma, \mu, \gamma)$. We denote by~$r$ the rank of~$\Sigma$.
The algorithm has three steps, as follows:

\subsubsection{Step 1 ``Forward''.}
The first step is to compute a matrix~$F$ such that $\mathit{RS}(F) = \forwardspace$.
Seidl~\cite{seidlfull} outlines a saturation-based algorithm for this, 
 and proves that the algorithm takes polynomial time assuming unit-cost arithmetic.
Based on Proposition~\ref{prop:forwardspace_charact}~(a) we now give in Table~\ref{algorithm:F} an explicit version of Seidl's algorithm.  

\begin{table}[!h]
\begin{center}
\begin{tabular}{l}
\multicolumn{1}{c}{\parbox{0.99\textwidth}{
\textbf{Input:} $\mathbb{Q}$-multiplicity tree automaton $(n, \Sigma, \mu, \gamma)$\\
\textbf{Output:} matrix $F$ whose rows form a basis of the forward space $\forwardspace$\\
 \hspace*{2em} $i:=0$, $j:=0$\\
 \hspace*{2em} while $i \le j$ do \\
 \hspace*{4em} forall $\sigma \in \Sigma$ do\\
 \hspace*{6em} forall $(l_{1}, \ldots, l_{\mathit{rk}(\sigma)}) \in [i]^{\mathit{rk}(\sigma)} \setminus{[i-1]^{\mathit{rk}(\sigma)}} $ do\\
 \hspace*{8em} $v := (F_{l_{1}} \otimes \cdots \otimes F_{l_{\mathit{rk}(\sigma)}}) \cdot \mu (\sigma)$\\
 \hspace*{8em} if $v \not\in \langle F_{1}, \ldots, F_{j} \rangle$\\
 \hspace*{10em} $j := j+1$\\
 \hspace*{10em} $F_{j} := v$\\
 \hspace*{4em} $i := i+1$ \\
\hspace*{2em} \textbf{return} matrix $F \in \mathbb{Q}^{j \times n}$
}} \\
\end{tabular}
\end{center}
\caption{Algorithm for computing a matrix $F$}
\label{algorithm:F}
\end{table}

Our algorithm satisfies the following properties:

\begin{lemma} \label{lem-function-forward}
The algorithm in Table~\ref{algorithm:F} returns a matrix~$F \in \mathbb{Q}^{\nF \times n}$ whose rows form a basis of the forward space~$\forwardspace$. Each row of $F$ equals~$\mu(t)$ for some tree $t \in T_\Sigma^{<n}$. The algorithm executes $O\left(\sum_{k=0}^r |\Sigma_k| \cdot n^{2 k + 1}\right)$ operations.
\end{lemma}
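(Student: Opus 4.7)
The plan is to verify four items in sequence: termination, the structural invariant that each produced row equals $\mu(t)$ for some tree $t \in T_\Sigma^{<n}$, the identity $\mathit{RS}(F) = \forwardspace$, and the operation count. Termination and linear independence of the produced rows are immediate: the explicit test $v \notin \langle F_1, \ldots, F_j\rangle$ keeps the rows of $F$ linearly independent, so $j \leq n$ at all times; since $i$ increments once per outer iteration and the loop exits once $i > j$, the algorithm halts after at most $n+1$ iterations.

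For the structural invariant I would prove, by induction on the order of insertion, the strengthened statement that each row $F_l$ equals $\mu(t_l)$ for some tree $t_l$ of height at most $l-1$. The base case handles nullary symbols inserted at iteration $i=0$, where the empty Kronecker product reduces to $v = \mu(\sigma)$ for a tree of height $0$. For the inductive step, if $F_l$ is added at iteration $i$ using indices $l_1, \ldots, l_k \in [i]$, then the loop test $i \le j$ forces $l = j+1 \ge i+1$, and by induction each $t_{l_p}$ has height at most $l_p - 1 \le i - 1$; hence
\[
v \;=\; (\mu(t_{l_1}) \otimes \cdots \otimes \mu(t_{l_k})) \cdot \mu(\sigma) \;=\; \mu\bigl(\sigma(t_{l_1}, \ldots, t_{l_k})\bigr)
\]
is $\mu(t_l)$ for a tree $t_l$ of height at most $i \le l-1$. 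Since the terminal $j$ is at most $n$, every $t_l$ lies in $T_\Sigma^{<n}$, which also yields $\mathit{RS}(F) \subseteq \forwardspace$.

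For the reverse inclusion I would apply Proposition~\ref{prop:forwardspace_charact}(a) to $V := \mathit{RS}(F)$: by bilinearity of the Kronecker product it suffices to verify $(F_{l_1} \otimes \cdots \otimes F_{l_k}) \cdot \mu(\sigma) \in \mathit{RS}(F)$ for every $\sigma \in \Sigma_k$ and every $(l_1, \ldots, l_k) \in [j]^k$. Setting $i^* := \max_p l_p$, the tuple lies in $[i^*]^k \setminus [i^*-1]^k$ and was therefore processed during iteration $i^*$, at which point the algorithm either appended $v$ to $F$ or confirmed it was already in the then-current span; in either case $v \in \mathit{RS}(F)$.

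For the operation count, the tuples examined for a fixed $\sigma \in \Sigma_k$ across all iterations telescope to at most $n^k$ elements (since $\sum_i (i^k - (i-1)^k)$ telescopes and the final $j$ is at most $n$). For each such tuple, assembling the Kronecker product $F_{l_1} \otimes \cdots \otimes F_{l_k}$ incrementally costs $O(n^k)$ operations, multiplying it by $\mu(\sigma) \in \mathbb{Q}^{n^k \times n}$ costs $O(n^{k+1})$, and an incremental linear-independence test against a maintained row-echelon form of $F$ is dominated by the matrix-vector product; the per-symbol cost is therefore $O\!\left(n^k \cdot n^{k+1}\right) = O(n^{2k+1})$, and summing over all symbols yields the stated bound. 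The main subtlety is indexing the height invariant by the insertion order $l$ rather than by the iteration $i$, which is what allows the bound $l \le n$ to translate directly into $t_l \in T_\Sigma^{<n}$; the remaining steps are bookkeeping.
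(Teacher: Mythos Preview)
Your proposal is correct and follows essentially the same approach as the paper: linear independence from the explicit test, the height bound on the witnessing trees via induction on the insertion index, the spanning property via Proposition~\ref{prop:forwardspace_charact}(a), and the same telescoping operation count. You spell out explicitly what the paper leaves as ``straightforward'' --- in particular the induction showing $F_l = \mu(t_l)$ with $\mathit{height}(t_l) \le l-1$, and the verification that $\mathit{RS}(F)$ satisfies the closure property of Proposition~\ref{prop:forwardspace_charact}(a) by checking that every tuple $(l_1,\ldots,l_k)$ is processed at iteration $i^* = \max_p l_p$ --- but the underlying argument is the same.
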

\begin{proof}
The fact that the rows of $F$ span~$\forwardspace$ follows from Proposition~\ref{prop:forwardspace_charact} (a). Moreover, it is clear from the algorithm that the rows of $F$ are linearly independent. 

A straightforward induction shows that for each row index $j \ge 1$, the row~$F_j$ equals~$\mu(t)$ for some tree $t \in T_{\Sigma}^{< j}$. 
The returned  matrix $F \in \mathbb{Q}^{\nF \times n}$ has full row rank, and therefore $\nF \le n$. Hence, each row of $F$ equals~$\mu(t)$ for some tree $t \in T_\Sigma^{<n}$.

It remains to analyse the number of operations. Let us consider an iteration of the innermost ``for'' loop. The computation of $F_{l_{1}} \otimes \cdots \otimes F_{l_{\mathit{rk}(\sigma)}}$ requires $O(n^{\mathit{rk}(\sigma)})$ operations (by iteratively computing partial products). The vector
\[
v = (F_{l_{1}} \otimes \cdots \otimes F_{l_{\mathit{rk}(\sigma)}}) \cdot \mu (\sigma)
\] is the product of a $1 \times n^{\mathit{rk}(\sigma)}$ vector with an
 $n^{\mathit{rk}(\sigma)} \times n$ matrix.
Thus, computing~$v$ takes $O(n^{\mathit{rk}(\sigma)+1})$ operations.
For the purpose of checking membership of~$v$ in the vector space $\forwardspace' := \langle F_{1}, \ldots, F_{j} \rangle$ it is useful to maintain a matrix~$F'$, which is upper triangular (up to a permutation of its columns) and whose rows form a basis of~$\forwardspace'$. To check whether $v \in \forwardspace'$ we compute a vector~$v'$ as the result of performing a Gaussian elimination of~$v$ against~$F'$,
 which requires $O(j \cdot n)$ operations.
If this membership test fails, we extend the matrix~$F'$ at the bottom by row~$v'$. This preserves the upper-triangular shape of~$F'$.
Thus, an iteration of the innermost ``for'' loop takes $O(n^{\mathit{rk}(\sigma)+1})$ operations. For every $\sigma \in \Sigma$, this ``for'' loop is executed $O(n^{\mathit{rk}(\sigma)})$ times.
Therefore, the algorithm executes 
$O\left(\sum_{k=0}^r |\Sigma_k| \cdot n^{2 k + 1}\right)$ operations.
\end{proof}

\subsubsection{Step 2 ``Backward''.}
The next step suggested in Section~\ref{sub-a-minimal-automaton} is to compute a matrix~$B$ such that $\mathit{CS}(B) = \backwardspace$. By  Lemma~\ref{lem-function-forward}, each row of the matrix~$F$ computed by the algorithm in Table~\ref{algorithm:F}  equals $\mu(t)$ for some tree $t \in T_{\Sigma}^{<n}$. Let $S$ denote the set of those trees. Since $\mathit{RS}(F) = \forwardspace$, set $\{ \mu(t) : t \in S \}$ spans $\forwardspace$. Thus by Proposition~\ref{prop:backwardspace_charact}~(a), $\backwardspace$ is the smallest vector space $V \subseteq \mathbb{Q}^{n}$ such that $\gamma \in V$ and $M \cdot v \in V$ for all  $M \in \mathcal{M} := \{ \mu(c) : c \in C^1_{\Sigma,S} \}$ and $v \in V$.
Tzeng~\cite{Tzeng92} shows, for an arbitrary column vector $\gamma \in \mathbb{Q}^n$
 and an arbitrary finite set of matrices $\mathcal{M} \subseteq \mathbb{Q}^{n \times n}$,
  how to compute a basis of~$V$ in time $O (|\mathcal{M}| \cdot n^{4})$.
This can be improved to $O (|\mathcal{M}| \cdot n^{3})$ (see,~\eg,~\cite{CortesMohriRastogi}).
This leads to the following lemma:

\begin{lemma} \label{lem-function-backward}
Given the matrix~$F \in \mathbb{Q}^{\nF \times n}$ which is the output of the algorithm in Table~\ref{algorithm:F},
 a matrix~$B$ whose columns form a basis of the backward space~$\backwardspace$ can be computed with
$O\left(\sum_{k=1}^r |\Sigma_k| \cdot (k n^{2 k} + k n^{k+2})\right)$
operations.
\end{lemma}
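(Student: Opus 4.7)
The plan is to realise the strategy sketched just before the statement: enumerate the set of matrices
\[
\mathcal{M} \ := \ \{\mu(c) : c \in C^1_{\Sigma,S}\}\,,
\]
and then apply the Tzeng-style closure algorithm to the pair $(\gamma,\mathcal{M})$, obtaining a basis of the smallest vector space $V$ that contains $\gamma$ and is closed under left multiplication by elements of~$\mathcal{M}$. By Proposition~\ref{prop:backwardspace_charact}~(a) we have $V = \backwardspace$, and reading off a basis of~$V$ yields the desired matrix~$B$.

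First I would bound $|\mathcal{M}|$. Every $c \in C^1_{\Sigma,S}$ has the form $c = \sigma(t_1,\ldots,t_{j-1},\Box,t_{j+1},\ldots,t_k)$ with $\sigma \in \Sigma_k$, $j \in [k]$ and $t_i \in S$. Since $F$ has full row rank by Lemma~\ref{lem-function-forward} we have $|S| = \nF \le n$, so for each $\sigma \in \Sigma_k$ there are at most $k \cdot n^{k-1}$ such contexts, giving $|\mathcal{M}| \le \sum_{k=1}^r |\Sigma_k| \cdot k n^{k-1}$. Feeding this into the $O(|\mathcal{M}|\cdot n^3)$ cost of Tzeng's algorithm accounts for the $k n^{k+2}$ summand in the bound.

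The delicate part is producing each matrix $\mu(c)$ within budget. By definition $\mu(c) = M \cdot \mu(\sigma)$ where
\[
M \ := \ \mu(t_1) \otimes \cdots \otimes \mu(t_{j-1}) \otimes I_n \otimes \mu(t_{j+1}) \otimes \cdots \otimes \mu(t_k) \ \in \ \mathbb{Q}^{n \times n^k}\,.
\]
A naive matrix multiplication would cost $\Theta(n^{k+2})$ per context, inflating the first summand to $k n^{2k+1}$ and missing the claim by a factor of~$n$. The key observation is that $M$ has only $n^k$ nonzero entries, because the $I_n$ factor contributes exactly one nonzero per Kronecker-column. Hence I would precompute the contracted row vector $u := \mu(t_1) \otimes \cdots \otimes \mu(t_{j-1}) \otimes \mu(t_{j+1}) \otimes \cdots \otimes \mu(t_k) \in \mathbb{Q}^{1 \times n^{k-1}}$ in $O(n^{k-1})$ operations and, for each $(a,b) \in [n]^2$, evaluate $\mu(c)_{a,b}$ as the $n^{k-1}$-term sum $\sum u_{(i_1,\ldots,\widehat{i_j},\ldots,i_k)} \cdot \mu(\sigma)_{(i_1,\ldots,i_{j-1},a,i_{j+1},\ldots,i_k),b}$, where the hat indicates an omitted index. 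Each context then costs $O(n^{k+1})$ operations, and summing gives $O\bigl(\sum_{k=1}^r |\Sigma_k| \cdot k n^{k-1} \cdot n^{k+1}\bigr) = O\bigl(\sum_{k=1}^r |\Sigma_k| \cdot k n^{2k}\bigr)$, which is the first summand.

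Adding the two contributions yields the stated bound. The main obstacle is precisely the sparse bookkeeping above: without exploiting the $I_n$ factor, the construction of $\mathcal{M}$ would strictly dominate the Tzeng step and invalidate the claim.
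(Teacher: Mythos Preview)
Your proposal is correct and follows essentially the same approach as the paper: enumerate $\mathcal{M}$, bound $|\mathcal{M}| \le \sum_{k=1}^r |\Sigma_k|\, k n^{k-1}$, exploit the sparsity induced by the $I_n$ factor to compute each $\mu(c)$ in $O(n^{k+1})$ operations, and then invoke the $O(|\mathcal{M}|\cdot n^3)$ closure algorithm. Your explicit description of the contracted vector~$u$ is a slightly more detailed version of what the paper calls ``exploiting the sparsity pattern in~$G$'', but the argument and the arithmetic are the same.
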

\begin{proof}
Consider the computation of an arbitrary $M \in \mathcal{M} := \{ \mu(c) : c \in C^1_{\Sigma,S} \}$. We have:
 \begin{align}
  M & = G \cdot \mu(\sigma)\;, \quad \text{where}
   \label{eq-function-backward-M} \\
  G & = F_{l_{1}} \otimes \cdots \otimes F_{l_{i-1}} \otimes I_{n} \otimes F_{l_{i+1}} \otimes \cdots \otimes F_{l_{\mathit{rk}(\sigma)}} \in \mathbb{Q}^{n \times n^{\mathit{rk}(\sigma)}}\; \label{eq-function-backward}
  \end{align}
is such that $\sigma \in \Sigma \setminus{\Sigma_{0}}$, $i \in [\mathit{rk}(\sigma)]$,
$l_{1}, \ldots, l_{i-1}, l_{i+1}, \ldots, l_{\mathit{rk} (\sigma)} \in [\nF]$.

Exploiting the sparsity pattern in the matrix $G$ as in~\eqref{eq-function-backward}, the computation of the non-zero entries of~$G$ takes $O(n^{\mathit{rk}(\sigma)})$ operations.
Exploiting sparsity again, the computation of matrix~$M$ as in~\eqref{eq-function-backward-M} then takes $O(n^{\mathit{rk}(\sigma)+1})$ operations.
Since~$\nF \le n$, it follows from \eqref{eq-function-backward-M} and \eqref{eq-function-backward} that
 \[
  |\mathcal{M}| \in O\left(\sum_{k=1}^r |\Sigma_k| \cdot k \cdot n^{k-1}\right)\,.
 \]
Thus, the number of operations required to compute~$\mathcal{M}$ is $O\left(\sum_{k=1}^r |\Sigma_k| \cdot k \cdot n^{2 k}\right)$.
Given~$\mathcal{M}$, computing a basis of~$\backwardspace$ takes
\[
O(|\mathcal{M}| \cdot n^3) = O\left(\sum_{k=1}^r |\Sigma_k| \cdot k \cdot n^{k-1} \cdot n^3\right)
\]
operations, using, \eg, the method from~\cite{CortesMohriRastogi} that was mentioned above. Therefore, the total operation count for computing a matrix~$B$ is
 $O\left(\sum_{k=1}^r |\Sigma_k| \cdot (k n^{2 k} + k n^{k+2})\right)$.
\end{proof} 

\subsubsection{Step 3 ``Solve''.}
The final step suggested in Section~\ref{sub-a-minimal-automaton} has two substeps. The first substep is to compute a matrix~$\tilde{F} \in \mathbb{Q}^{m \times n}$ with $m = \mathit{rank}(F \cdot B)$ and $\mathit{RS}(\tilde{F} \cdot B) = \mathit{RS}(F \cdot B)$.
Such a matrix~$\tilde{F}$ can be computed from~$F$ by going through the rows of $F$ one by one and including only those rows
 that are linearly independent of the previous rows when multiplied by~$B$.
This can be done in time~$O(n^3)$, \eg, by transforming the matrix $F \cdot B$ into a triangular form using Gaussian elimination.

The second substep is to compute the minimal MTA~$\tilde{\A} = (m, \Sigma, \tilde{\mu}, \tilde{\gamma})$.
The vector $\tilde{\gamma} = \tilde{F} \cdot \gamma$ is easy to compute.
Solving Equation~\eqref{minimimisingMTA:defn_transition} for each $\tilde{\mu}(\sigma)$
 can be done via Gaussian elimination in time~$O(n^3)$; however, the bottleneck is the computation of $\tilde{F}^{\otimes k} \cdot \mu (\sigma)$ for every $\sigma \in \Sigma_{k}$,
  which takes
\[ 
O\left(\sum_{k=0}^r |\Sigma_k| \cdot n^k \cdot n^k \cdot n\right) =
   O\left(\sum_{k=0}^r |\Sigma_k| \cdot n^{2 k+1}\right) 
\]
operations. Putting together the results of this subsection, we get:
\begin{theorem} \label{thm-function-tree}
There is an algorithm that transforms a given $\mathbb{Q}$-MTA $\A = (n, \Sigma, \mu, \gamma)$ into an equivalent minimal $\mathbb{Q}$-MTA.
Assuming unit-cost arithmetic,
the algorithm takes time
\[
O\left(\sum_{k=0}^r |\Sigma_k| \cdot (n^{2 k + 1} + k n^{2k} + k n^{k+2})\right),
\]
which is $O\left(|\A|^2 \cdot r\right)$. \qed
\end{theorem}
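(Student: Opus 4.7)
The plan is to assemble the three steps already described (Forward, Backward, Solve) into a single algorithm and then verify correctness and the time bound.

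For correctness, I would invoke the framework of Section~\ref{sub-a-minimal-automaton}. Step~1 produces $F$ with $\mathit{RS}(F) = \forwardspace$ by Lemma~\ref{lem-function-forward}, and by the same lemma the rows of $F$ all have the form $\mu(t)$ for $t \in T_\Sigma^{<n}$; this gives exactly the set $S$ needed to apply Proposition~\ref{prop:backwardspace_charact}~(a) in Step~2, which then produces $B$ with $\mathit{CS}(B) = \backwardspace$ via Lemma~\ref{lem-function-backward}. Step~3 selects rows of $F$ whose images under right-multiplication by $B$ form a basis of $\mathit{RS}(F \cdot B)$ to obtain $\tilde F$, and then solves Equation~\eqref{minimimisingMTA:defn_transition} for each $\tilde\mu(\sigma)$ and sets $\tilde\gamma = \tilde F \cdot \gamma$. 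By Proposition~\ref{prop-minimal-MTA-abstract} the resulting MTA is well-defined, equivalent to $\A$, and minimal.

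For the running time, I would simply add the operation counts from Lemmas~\ref{lem-function-forward}~and~\ref{lem-function-backward} together with the bounds derived in Step~3. Step~3 is dominated by the computation of $\tilde F^{\otimes k} \cdot \mu(\sigma)$ for $\sigma \in \Sigma_k$, which costs $O\bigl(\sum_{k=0}^r |\Sigma_k| \cdot n^{2k+1}\bigr)$ (the Gaussian eliminations for $\tilde F$ and for each $\tilde\mu(\sigma)$ contribute only $O(n^3)$ each, which is absorbed). Summing yields the stated expression
\[
O\!\left(\sum_{k=0}^r |\Sigma_k| \cdot \bigl(n^{2k+1} + k n^{2k} + k n^{k+2}\bigr)\right).
\]

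For the simplification to $O(|\A|^2 \cdot r)$, recall that $|\A| = \sum_{k=0}^r |\Sigma_k| \cdot n^{k+1} + n$, so in particular $|\A| \ge |\Sigma_k| \cdot n^{k+1}$ and $|\A| \ge n$ for each $k$. Hence, for every $k \in \{0,\ldots,r\}$,
\[
|\Sigma_k| \cdot n^{2k+1} \le \bigl(|\Sigma_k| \cdot n^{k+1}\bigr) \cdot n^{k} \le |\A| \cdot |\A| = |\A|^2,
\]
and similarly $|\Sigma_k|\, k\, n^{2k} \le k\,|\A|^2$ and $|\Sigma_k|\, k\, n^{k+2} \le k\,|\A|^2$. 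Summing over the $r+1$ values of $k$ gives the $O(|\A|^2 \cdot r)$ bound. The only minor obstacle here is this last bookkeeping step, since the per-$k$ cost is not individually bounded by $|\A|$, but must be bounded by $|\A|^2$ using two different factors of the sum defining $|\A|$.
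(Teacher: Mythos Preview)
Your approach matches the paper exactly: the theorem is just the concatenation of Steps~1--3, and the paper's own ``proof'' is the single sentence ``Putting together the results of this subsection, we get.'' Your correctness and operation-count arguments are fine.

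The only slip is in the final simplification to $O(|\A|^2 \cdot r)$. Two small points. First, the bound $n^k \le |\A|$ does not follow from $|\A| \ge n$ alone when $k \ge 2$; you need that if $|\Sigma_k| \ge 1$ then $|\A| \ge |\Sigma_k|\,n^{k+1} \ge n^k$ (and if $|\Sigma_k| = 0$ the term vanishes). Second, and more importantly, your term-by-term bound gives $|\Sigma_k|\,k\,n^{2k} \le k\,|\A|^2$, and summing $\sum_{k=0}^r k\,|\A|^2$ yields $O(r^2\,|\A|^2)$, not $O(r\,|\A|^2)$. To get the stated bound, do not sum $r+1$ copies of $|\A|^2$; instead bound the whole sum at once. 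From $|\A| \ge \sum_k |\Sigma_k|\,n^{k+1}$ one gets
\[
|\A|^2 \ \ge \ \Bigl(\sum_k |\Sigma_k|\,n^{k+1}\Bigr)^2 \ \ge \ \sum_k |\Sigma_k|^2\,n^{2k+2} \ \ge \ \sum_k |\Sigma_k|\,n^{2k+1},
\]
so $\sum_k |\Sigma_k|\,n^{2k+1} \le |\A|^2$ (no factor $r$). Then $\sum_k |\Sigma_k|\,k\,n^{2k} \le r \sum_k |\Sigma_k|\,n^{2k+1} \le r\,|\A|^2$, and $\sum_k |\Sigma_k|\,k\,n^{k+2} \le r\,n \sum_k |\Sigma_k|\,n^{k+1} \le r\,|\A|^2$. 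This gives the claimed $O(|\A|^2 \cdot r)$.
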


\subsection{Minimisation of Multiplicity Word Automata in \NC} \label{sub-word-min}

In this subsection, we consider the problem of minimising a given $\mathbb{Q}$-multiplicity word automaton. We prove the following result:
\begin{theorem} \label{thm-word-min}
 There is an \NC\ algorithm that transforms a given $\mathbb{Q}$-MWA
  into an equivalent minimal $\mathbb{Q}$-MWA.
 In particular, given a $\mathbb{Q}$-MWA and a number $d \in \N_{0}$,
  one can decide in~\NC\ whether there exists an equivalent $\mathbb{Q}$-MWA of dimension at most~$d$.
\end{theorem}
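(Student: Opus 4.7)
The plan is to realise the construction of the canonical minimal automaton $\tilde{\A}$ from Proposition~\ref{prop-minimal-MWA-abstract} in NC, using the branching-free spanning sets for the forward and backward spaces supplied by Proposition~\ref{prop-compact-space}. The key point is that for word automata the alphabet has rank $r = 1$, so the recursions of Proposition~\ref{prop-compact-space}~(b) collapse to matrix geometric series over the product automaton $\A \times \A$, which are NC-computable.

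Concretely, treat the given $\mathbb{Q}$-MWA $\A = (n, \Sigma, \mu, \alpha, \gamma)$ as an MTA as described before Proposition~\ref{prop-compact-space}, and let $\mu'$ denote the transition function of the product automaton $\A \times \A$. Writing $a := \alpha \otimes \alpha$ and $M := \sum_{\sigma \in \Sigma} \mu(\sigma) \otimes \mu(\sigma)$, the specialisation of Proposition~\ref{prop-compact-space}~(b) to $r = 1$ yields, by a routine induction on~$l$,
\[
 f(n) \ = \ a \cdot \sum_{l = 0}^{n-1} M^{l}, \qquad b(n) \ = \ \sum_{l = 0}^{n-1} M^{l}.
\]
The powers $M^0, M^1, \dots, M^{n-1}$ can be computed by parallel iterated squaring, and then summed, in NC; reshaping $f(n)$ and $b(n)$ into the $n \times n$ matrices $F$ and $B$ of Proposition~\ref{prop-compact-space}~(a) is index rewriting.

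Given $F$ and $B$, the rest of the construction in Section~\ref{sub-a-minimal-automaton} is classical parallel linear algebra over $\mathbb{Q}$. I would first compute $m := \mathit{rank}(F \cdot B)$; this already settles the decision version of the theorem by a single comparison with~$d$. Then let $\tilde{F}$ consist of those rows $i$ of $F$ for which the $i$-th row of $F \cdot B$ strictly increases the rank of the preceding prefix of $F \cdot B$, where all $n$ prefix-rank tests are performed in parallel. Set $\tilde{\gamma} := \tilde{F} \cdot \gamma$ and solve the linear systems~(\ref{minimimisingMWA:defn_alpha}) and~(\ref{minimimisingMWA:defn_transition}) for $\tilde{\alpha}$ and each $\tilde{\mu}(\sigma)$; both systems have unique solutions because $\tilde{F} \cdot B$ has full row rank~$m$. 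Rank computation and the solution of full-rank linear systems over $\mathbb{Q}$ are classically in NC.

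The main obstacle I anticipate is precisely what Proposition~\ref{prop-compact-space} was crafted to overcome: the direct saturation algorithm of Table~\ref{algorithm:F} checks linear independence as it discovers new vectors, and this sequential interleaving of arithmetic and control is what forced the earlier NC algorithm of~\cite{KieferMOWW13} to resort to randomisation. The passage to the product automaton is essential in the branching-free formulation, because tensoring $\mu(\sigma)$ with itself prevents distinct $\mu(t)$ from cancelling in the sums defining $f(l)$ and $b(l)$. Once the spanning matrices $F$ and $B$ are produced without conditionals, the remaining parallelisation is standard.
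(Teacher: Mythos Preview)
Your proposal is correct and follows essentially the same approach as the paper: specialise Proposition~\ref{prop-compact-space} to rank~$1$ to obtain $f(n) = a \cdot \sum_{l=0}^{n-1} M^{l}$ and $b(n) = \sum_{l=0}^{n-1} M^{l}$, compute these in \NC\ via iterated matrix products, then extract $\tilde{F}$ by parallel prefix-rank comparisons on $F \cdot B$ and solve Equations~(\ref{minimimisingMWA:defn_alpha}) and~(\ref{minimimisingMWA:defn_transition}) by standard \NC\ linear algebra. The paper's proof is the same, citing~\cite{Cook85} for matrix powers and~\cite{IbarraMR80} for rank computation.
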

Theorem~\ref{thm-word-min} improves on two results of~\cite{KieferMOWW13}.
First, \cite[Theorem~4.2]{KieferMOWW13} states that
 deciding whether a $\mathbb{Q}$-MWA is minimal is in~\NC.
Second, \cite[Theorem~4.5]{KieferMOWW13} states the same thing as our Theorem~\ref{thm-word-min}, but with \NC\ replaced with \emph{randomised}~\NC.
\begin{proof}[Proof of Theorem~\ref{thm-word-min}]
The algorithm relies on Propositions \ref{prop-minimal-MWA-abstract}~and~\ref{prop-compact-space}.
Let the given $\mathbb{Q}$-MWA be $\A = (n, \Sigma, \mu, \alpha, \gamma)$.
In the notation of Proposition~\ref{prop-compact-space}, we have
 for all $l \in \N$ that 
\[
b(l+1) = I_{n^{2}} + b(l) \cdot \sum_{\sigma \in \Sigma} \mu'(\sigma). 
\]
From here one can easily show, using an induction on $l$, that for all $l \in \N$:
\[
b(l) = \sum_{k=0}^{l-1} \left( \sum_{\sigma \in \Sigma} \mu'(\sigma) \right)^k.
\]
It follows for the matrix $B \in \mathbb{Q}^{n \times n}$ from Proposition~\ref{prop-compact-space} that for all $i, j \in [n]$:
\[
 B_{i,j} = (e_{i} \otimes e_{j}) \cdot b(n) \cdot \gamma^{\otimes 2} = (e_{i} \otimes e_{j}) \cdot \left( \sum_{k=0}^{n-1} \Big( \sum_{\sigma \in \Sigma} \mu'(\sigma) \Big)^k \right)\cdot \gamma^{\otimes 2}.
\]
 Note that, since $\A$ is an MWA, we have $f(l) = b(l)$ for all $l \in \N$. We now have for the matrix $F \in \mathbb{Q}^{n \times n}$ from Proposition~\ref{prop-compact-space} and  all $i, j \in [n]$:
\[
 F_{i,j} = \alpha^{\otimes 2} \cdot \left( \sum_{k=0}^{n-1} \Big( \sum_{\sigma \in \Sigma} \mu'(\sigma) \Big)^k \right)
          \cdot (e_{i} \otimes e_{j})^\top.
\]
The matrices $F$ and $B$ can be computed in~\NC\ since sums and matrix powers can be computed in~\NC~\cite{Cook85}.
Next we show how to compute in~\NC\ the matrix~$\tilde{F}$,
 which is needed to compute the minimal $\mathbb{Q}$-MWA~$\tilde{\A}$
 from Section~\ref{sub-a-minimal-automaton}.
Our \NC\ algorithm includes the $i^\text{th}$ row of~$F$ (\ie, $F_i$) in~$\tilde{F}$
 if and only if 
\[
\mathit{rank}(F_{[i],[n]} \cdot B) > \mathit{rank}(F_{[i-1],[n]} \cdot B).
\]
This can be done in~\NC\ since the rank of a matrix can be determined in~\NC~\cite{IbarraMR80}.
It remains to compute $\tilde{\gamma} := \tilde{F} \gamma$ and solve Equations
 \eqref{minimimisingMWA:defn_alpha} and~\eqref{minimimisingMWA:defn_transition}
 for $\tilde{\alpha}$ and $\tilde{\mu}(\sigma)$, respectively.
Both are easily done in~\NC.
\end{proof}

\section{Decision Problem} \label{sec-decision}

In this section we characterise the complexity of the following decision problem:
Given a \mbox{$\mathbb{Q}$-MTA} and a number $d \in \N_0$, the \emph{minimisation} problem asks
 whether there is an equivalent $\mathbb{Q}$-MTA of dimension at most~$d$.
We show, in Theorem~\ref{thm-decision} below, that this problem is interreducible
 with the arithmetic circuit identity testing (\acit) problem.

The latter problem can be defined as follows.
An \emph{arithmetic circuit} is a finite directed acyclic vertex-labelled multigraph whose vertices,
 called \emph{gates}, have indegree $0$~or~$2$.
Vertices of indegree $0$, called \emph{input gates}, are labelled with a nonnegative integer or a variable from the set $\{ x_{i} : i \in \mathbb{N}\}$.
Vertices of indegree $2$ are labelled with one of the arithmetic operations
 $\mathord{+}$, $\mathord{\times}$, or~$\mathord{-}$.
One can associate, in a straightforward inductive way, each gate with the polynomial it computes.
The \emph{arithmetic circuit identity testing (\acit)} problem asks,
 given an arithmetic circuit and a gate,
 whether the polynomial computed by the gate is equal to the zero polynomial.
We show:
\begin{theorem} \label{thm-decision}
Minimisation is logspace interreducible with~\acit.
\end{theorem}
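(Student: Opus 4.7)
The proof falls into two logspace reductions, of which \textbf{Minimisation} $\leq_L$ \textbf{\acit} is the substantive direction.

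For \textbf{Minimisation} $\leq_L$ \textbf{\acit}, the starting point is Lemma~\ref{lem-rankFB}: the minimum dimension of any MTA equivalent to $\A$ equals $\mathit{rank}(F \cdot B)$, where $F$ and $B$ span $\forwardspace$ and $\backwardspace$ respectively. The task is therefore (i) to produce $F$ and $B$ as outputs of a polynomial-size arithmetic circuit obtainable from $\A$ in logspace, and (ii) to phrase ``$\mathit{rank}(F \cdot B) \leq d$'' as a single \acit\ instance. Step (i) is exactly what Proposition~\ref{prop-compact-space} is designed for: part~(a) identifies the entries of $F$ and $B$ as explicit linear functionals applied to $f(n)$ and $b(n) \cdot \gamma^{\otimes 2}$, and part~(b) defines $f(l)$ and $b(l)$ by unconditional recursions using only matrix sums, Kronecker products, and matrix products over the product automaton $\A \times \A$ (itself logspace computable by Proposition~\ref{prop:diff_prod_MTA}). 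Unrolling this recursion $n$ times yields an arithmetic circuit of size polynomial in $|\A|$ whose layered structure can be written down in logspace.

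For step (ii), I would convert the rank test to a single polynomial identity. Writing $M := F \cdot B$, the matrix $M^\top M$ is positive semidefinite over $\mathbb{Q}$ and satisfies $\mathit{rank}(M^\top M) = \mathit{rank}(M)$; hence $\mathit{rank}(M) \leq d$ iff the $(d+1)$st elementary symmetric polynomial of the eigenvalues of $M^\top M$ vanishes. By Cauchy--Binet this quantity equals the sum of squares of all $(d+1) \times (d+1)$ minors of $M$ (confirming the equivalence over $\mathbb{Q}$), and it appears, up to sign, as the coefficient of $x^{n-d-1}$ in $\det(x I_n - M^\top M)$. This characteristic polynomial admits a polynomial-size, division-free arithmetic circuit via Berkowitz's or Samuelson's algorithm, so composing with the circuit for $M$ gives a single \acit\ query that decides Minimisation.

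For the reverse reduction, I would chain two steps. The reduction \textbf{Equivalence} $\leq_L$ \textbf{Minimisation} is immediate from Proposition~\ref{prop:diff_prod_MTA}: $\A_1 \equiv \A_2$ iff $\A_1 - \A_2$ recognises the zero tree series, which is the unique series admitting an MTA of dimension $0$; so asking whether $\A_1 - \A_2$ has an equivalent automaton of dimension at most $0$ decides equivalence, and $\A_1 - \A_2$ is logspace computable. Composing with the known logspace reduction \textbf{\acit} $\leq_L$ \textbf{Equivalence} for $\mathbb{Q}$-MWAs (which embed as unary-alphabet $\mathbb{Q}$-MTAs) yields \textbf{\acit} $\leq_L$ \textbf{Minimisation}.

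The main obstacle lies in step (i) of the forward direction: the incremental algorithm of Section~\ref{function_problem:trees} branches on linear-independence tests and so cannot be compiled into an arithmetic circuit. Proposition~\ref{prop-compact-space} was crafted precisely to circumvent this, but verifying that its closed-form recursions genuinely capture $\forwardspace$ and $\backwardspace$, and that the resulting circuit is logspace uniform, is where the bulk of the technical work sits.
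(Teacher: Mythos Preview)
Your proposal is correct and follows essentially the same route as the paper: for the upper bound you invoke Lemma~\ref{lem-rankFB} and Proposition~\ref{prop-compact-space} to build a logspace-uniform arithmetic circuit for $F\cdot B$, then reduce the rank test to \acit\ via the characteristic polynomial of $M^\top M$ (the paper uses Csanky and tests the $n-d$ lowest-order coefficients, while your single-coefficient test via Cauchy--Binet and Berkowitz is a harmless refinement); for the lower bound you chain the known reduction from \acit\ to equivalence with the observation that equivalence reduces to zeroness, which is the instance $d=0$ of minimisation, exactly as the paper does.
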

We consider the lower and the upper bound separately.

\subsection{Lower Bound.} 
Given a $\mathbb{Q}$-MTA~$\A$, the \emph{zeroness} problem asks whether $\|\A\|(t) = 0$ for all trees~$t$.
Observe that $\|\A\|(t) = 0$ for all trees~$t$ if and only if there exists an equivalent automaton of dimension~$0$.
Therefore, zeroness is a special case of minimisation.

We observe that there is a logspace reduction from~\acit\ to zeroness.
Indeed, it is shown in~\cite{DBLP:conf/mfcs/MarusicW14} that the \emph{equivalence} problem for {$\mathbb{Q}$-MTAs} is logspace equivalent to~\acit.
This problems asks, given two $\mathbb{Q}$-MTAs $\A_1$ and $\A_2$, whether $\|\A_1\|(t) = \|\A_2\|(t)$ for all trees~$t$. By Proposition~\ref{prop:diff_prod_MTA}, one can reduce this problem to zeroness in logarithmic space.
This implies \acit-hardness of minimisation.

\subsection{Upper Bound.} 
We prove:
\begin{proposition} \label{prop-decision-upper}
There is a logspace reduction from minimisation to~\acit.
\end{proposition}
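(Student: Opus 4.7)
The plan is to reduce the question ``does $\A$ admit an equivalent MTA of dimension at most~$d$?'' to a single polynomial identity test. By Lemma~\ref{lem-rankFB}, the minimum dimension of an MTA equivalent to~$\A$ equals $\mathit{rank}(F \cdot B)$, where $F$ and $B$ are matrices whose rows and columns span the forward space $\forwardspace$ and the backward space $\backwardspace$, respectively. To express $\mathit{rank}(F \cdot B) \le d$ as an \acit\ instance I will use the following standard device: introduce two auxiliary matrices $X \in \mathbb{Q}^{(d+1) \times n}$ and $Y \in \mathbb{Q}^{n \times (d+1)}$ of fresh variables. By Cauchy--Binet, $\det(X \cdot F \cdot B \cdot Y)$, viewed as a polynomial in the entries of $X$ and $Y$, is identically zero if and only if every $(d+1) \times (d+1)$ minor of $F \cdot B$ vanishes, i.e.\ if and only if $\mathit{rank}(F \cdot B) \le d$. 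The boundary case $d \ge n$ is handled separately by emitting a trivially satisfied instance.

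The key ingredient for building the circuit is Proposition~\ref{prop-compact-space}, which gives a branching-free recursive description of $F$ and $B$ in terms of the product automaton $\A \times \A$. A logspace transducer first produces $\A \times \A$ (which is logspace-computable by Proposition~\ref{prop:diff_prod_MTA}) and then emits an arithmetic circuit that implements the recurrences for $f(1), \ldots, f(n)$ and $b(1), \ldots, b(n)$. Each recurrence step combines the previous value with the sums $\sum_{\sigma \in \Sigma_{k}} \mu'(\sigma)$ and with Kronecker powers $f(l)^{\otimes k}$ for $0 \le k \le r$; since $f(l)$ and $b(l)$ have fixed sizes $n^{2}$ and $n^{2} \times n^{2}$, and since $n^{r+1}$ is bounded by $|\A|$, each step enlarges the circuit by only a polynomial amount. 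Extracting $F_{i,j} = f(n) \cdot (e_{i} \otimes e_{j})^{\top}$ and $B_{i,j} = (e_{i} \otimes e_{j}) \cdot b(n) \cdot \gamma^{\otimes 2}$ for all $i, j \in [n]$ then yields a polynomial-size subcircuit whose designated output gates compute the entries of $F \cdot B$.

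To complete the reduction I will append $2(d+1) n$ fresh input gates encoding the entries of $X$ and $Y$, wire them to compute the product $X \cdot F \cdot B \cdot Y$, and cap the circuit with Berkowitz's polynomial-size arithmetic subcircuit for the determinant of a $(d+1) \times (d+1)$ matrix. The final output gate computes the polynomial $\det(X \cdot F \cdot B \cdot Y)$, and the \acit\ query simply asks whether this polynomial is identically zero; by the discussion above, the answer is ``yes'' exactly when $\A$ has an equivalent MTA of dimension at most~$d$. The main obstacle is to check that the entire emission is carried out in \emph{logarithmic} workspace: one must verify that indexing the gates of the nested Kronecker products, of the $n$ layers of the $f$/$b$ recurrences, and of the Berkowitz subcircuit, can all be driven by $O(1)$ counters of width $O(\log |\A|)$. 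This is the crucial advantage of the branching-free characterisation of Proposition~\ref{prop-compact-space} over the iterative algorithm of Section~\ref{function_problem:trees}, and it reduces to the standard fact that matrix products, Kronecker products, and determinants admit logspace-uniform polynomial-size arithmetic circuits.
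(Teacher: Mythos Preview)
Your proposal is correct and follows the paper's overall architecture: build a logspace arithmetic circuit for the entries of $F \cdot B$ via Proposition~\ref{prop-compact-space} and the product automaton, then reduce the question $\mathit{rank}(F \cdot B) \le d$ to a single \acit\ instance. The only substantive difference lies in the last step. The paper encapsulates the rank test in Lemma~\ref{lem-bens-lemma}, which passes through the characteristic polynomial of $M^{\top} M$: one checks that the $n-d$ lowest-order coefficients all vanish, using Csanky-style circuits for those coefficients. Your route instead introduces variable matrices $X \in \mathbb{Q}^{(d+1) \times n}$, $Y \in \mathbb{Q}^{n \times (d+1)}$ and tests whether $\det(X \cdot F \cdot B \cdot Y)$ is the zero polynomial; by Cauchy--Binet this polynomial is a linear combination of the monomials $\det(X_{[d+1],S}) \cdot \det(Y_{T,[d+1]})$ with coefficients equal to the $(d{+}1)\times(d{+}1)$ minors of $F \cdot B$, and these monomials are linearly independent, so the polynomial vanishes identically iff $\mathit{rank}(F \cdot B) \le d$.

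Both devices are standard and logspace-emittable. Your approach yields a single polynomial directly and exploits the fact that the paper's \acit\ formulation allows variable input gates; the paper's approach stays variable-free but must combine several coefficient-vanishing conditions into one instance (which is easy over~$\mathbb{Q}$, e.g.\ via a sum of squares). One small omission: the paper notes explicitly that rational input labels and division gates (needed because the automaton weights are rational) can be eliminated by the usual numerator/denominator splitting; you should mention this too, since the formal \acit\ definition only permits nonnegative-integer constants.
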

\begin{proof}
Let $\A = (n, \Sigma, \mu, \gamma)$ be the given $\mathbb{Q}$-MTA, and let $d \in \N_0$ be the given number. In our reduction to~\acit, we allow input gates with rational labels as well as division gates. Rational numbers and division gates can be eliminated in a standard way by constructing separate gates for the numerators and denominators of the rational numbers computed by the original gates.

By Lemma~\ref{lem-rankFB}, the dimension of a minimal MTA equivalent to $\A$ is $m := \mathit{rank}(F \cdot B)$ where $F$ and  $B$ are matrices such that $\mathit{RS}(F) = \forwardspace$ and $\mathit{CS}(B) = \backwardspace$.
Therefore, we have $m \le d$ if and only if $\mathit{rank}(F \cdot B) \le d$.
The recursive characterisation of $F$~and~$B$ from Proposition~\ref{prop-compact-space}
 allows us to compute in logarithmic space an arithmetic circuit for~$F \cdot B$.
Thus, the result follows from Lemma~\ref{lem-bens-lemma} below.
\end{proof}

The following lemma follows easily from the well-known \NC\ procedure
for computing matrix rank~\cite{Csanky76}.

\begin{lemma} \label{lem-bens-lemma}
Let $M \in \mathbb{Q}^{m \times n}$ and $d \in \N_0$.
The problem of deciding whether $\mathit{rank}(M) \le d$ is logspace reducible to~\acit.
\end{lemma}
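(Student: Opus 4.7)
The plan is to reduce the rank question to a single zero-test of one coefficient of a characteristic polynomial, and then to exhibit that coefficient as the output of a polynomial-size arithmetic circuit constructible in logspace via a parallel algorithm for the characteristic polynomial. First I would pass from $M$ to the Gram matrix $A := M^\top M \in \mathbb{Q}^{n \times n}$. Since $M$ and $A$ have the same null space (the argument underlying Lemma~\ref{lem-ATA-rowspace}), we have $\mathit{rank}(A) = \mathit{rank}(M)$, so it suffices to decide whether $\mathit{rank}(A) \le d$.

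Next I would express this as a zero-test on one entry of the characteristic polynomial. Write
\[
 \det(\lambda I - A) \ = \ \sum_{k=0}^{n} (-1)^k c_k(A)\, \lambda^{n-k},
\]
so that $c_k(A)$ is the $k$-th elementary symmetric polynomial in the eigenvalues of $A$. Because $A = M^\top M$ is positive semidefinite, all eigenvalues of $A$ are non-negative, hence each $c_k(A) \ge 0$, and $c_k(A) = 0$ iff fewer than $k$ eigenvalues of $A$ are strictly positive, iff $\mathit{rank}(A) < k$. Thus $\mathit{rank}(M) \le d$ if and only if $c_{d+1}(A) = 0$.

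The remaining step is to construct, in logspace, an arithmetic circuit that evaluates to $c_{d+1}(A)$, starting from the input circuit computing the entries of~$M$. Each entry $(M^\top M)_{ij} = \sum_k M_{k,i} \cdot M_{k,j}$ is obtained from the given circuit by appending $O(n^2 m)$ arithmetic gates, which can be written down in logarithmic space. For $c_{d+1}$ itself I would invoke the NC procedure of Csanky~\cite{Csanky76}, which computes all coefficients $c_1(A), \ldots, c_n(A)$ from the power sums $\mathrm{tr}(A), \mathrm{tr}(A^2), \ldots, \mathrm{tr}(A^n)$ via Newton's identities; the required matrix products, traces, and triangular linear system are realised by logspace-uniform polynomial-size arithmetic circuits. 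Division gates, used by Csanky's formulae, can either be eliminated in the standard numerator/denominator style already invoked in the proof of Proposition~\ref{prop-decision-upper}, or avoided entirely by using the division-free Berkowitz variant. The reduction's output is this circuit together with its designated output gate; the resulting \acit\ instance is a yes-instance iff the computed constant equals zero, iff $\mathit{rank}(M) \le d$.

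The main obstacle is purely bookkeeping: verifying that the Csanky (or Berkowitz) circuit for $c_{d+1}(A)$ admits a logspace-uniform description, i.e.\ that its gates and wires can be enumerated by a transducer using only a constant number of pointers of width $O(\log n)$ ranging over $[n]$. This is standard for both algorithms, so no new idea beyond careful encoding is required.
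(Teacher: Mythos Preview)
Your proposal is correct and follows essentially the same approach as the paper: pass to the Gram matrix $A = M^\top M$, read off the rank from the low-order coefficients of its characteristic polynomial, and appeal to Csanky's NC construction to produce the arithmetic circuit. The only difference is that the paper uses the Hermitian (diagonalisability) property to conclude that $\mathit{rank}(A)\le d$ iff the $n-d$ lowest-order coefficients all vanish, whereas you exploit the stronger fact that $A$ is positive semidefinite to reduce this to the single test $c_{d+1}(A)=0$; this is a harmless refinement, and both versions yield a single \acit\ instance (the paper's conjunction can be combined via a sum of squares).
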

\begin{proof}
  By the rank-nullity theorem, we have that $\mathit{rank}(M)\leq d$
  if and only if $\mathit{dim} (\mathit{ker}(M)) \geq n-d$.
  Since $\mathit{ker}(M) = \mathit{ker}(M^{\top} M)$, this is equivalent to
  $\mathit{dim} (\mathit{ker}(M^{\top} M)) \geq n-d$.  The matrix~$M^{\top} M$
  is Hermitian, therefore $\mathit{dim} (\mathit{ker}(M^{\top} M)) \geq
  n-d$ if and only if the $n-d$ lowest-order coefficients of the
  characteristic polynomial of $M^{\top} M$ are all zero~\cite{IbarraMR80}.
  But these coefficients are representable by arithmetic circuits with
  inputs from~$M$ (see \cite{Csanky76}).  
\end{proof}

We emphasise that our reduction to~\acit\ is a many-one reduction, thanks to Proposition~\ref{prop-compact-space}:
 our reduction computes only a single instance of~\acit; there are no if-conditionals.

\section{Minimal Consistent Multiplicity Automaton} \label{sec-min-consis}

Let $\mathbb{F}$ be an arbitrary field. A natural computational
problem is to compute an $\mathbb{F}$-MWA $\A$ of minimal dimension
that is consistent with a given finite set of $\mathbb{F}$-weighted
words $S=\{(w_1,r_1),\ldots,(w_m,r_m)\}$, where $w_i \in \Sigma^*$ and
$r_i\in \mathbb{F}$ for every $i \in [m]$. Here \emph{consistency}
means that $\|\A\| (w_i) = r_i$ for every $i \in [m]$.

The main result of this section concerns the computability of the
above consistency problem for the field of rational numbers.  More
specifically, we consider a decision version of this problem, which we
call \emph{minimal consistency problem}, which asks whether there
exists a $\mathbb{Q}$-MWA consistent with a set of input-output
behaviours $S \subseteq \Sigma^*\times \mathbb{Q}$ and that has
dimension at most some nonnegative integer bound $n$.

We show that the minimal consistency problem is logspace equivalent to
the problem of deciding the truth of existential first-order sentences
over the structure~$(\mathbb{Q},+,\cdot,0,1)$.  The decidability of
the latter is a longstanding open
problem~\cite{Pheidas:Hiblerts-tenth}.  This should be compared with
the result that the problem of finding the smallest deterministic
finite automaton consistent with a set of accepted or rejected words
is \NP-complete~\cite{Gold78}.
 
The reduction of the minimal consistency problem to the decision
problem for existential first-order sentences over the structure~$(\mathbb{Q},+,\cdot,0,1)$ is immediate.  The idea is to represent a $\mathbb{Q}$-MWA
$\A=(n,\Sigma,\mu,\alpha,\gamma)$ ``symbolically'' by introducing
separate variables for each entry of the initial weight
vector~$\alpha$, final weight vector~$\gamma$, and each transition
matrix~$\mu(\sigma)$, $\sigma \in \Sigma$.  Then, the consistency of
automaton~$\A$ with a given finite sample
$S\subseteq \Sigma^*\times \mathbb{Q}$ can directly be written as an
existential sentence.

We note in passing that the minimal consistency problem for weighted
word and tree automata over the field $\mathbb{R}$ is in like manner
reducible to the problem of deciding the truth of existential
first-order sentences over the structure~$(\mathbb{R},+,\cdot,0,1)$,
which is well known to be decidable in
\PSPACE~\cite{Canny88}.\footnote{To consider this problem within the
  conventional Turing model, we assume that the set $S$ of
  input-output behaviours is still a subset of
  $\Sigma^*\times \mathbb{Q}$.  Of course, the dimension of the
  smallest MWA consistent with a given finite set of behaviours $S$
  depends on the weight field of the output automaton.}

Conversely, we reduce the decision problem for existential first-order
sentences over the structure~$(\mathbb{Q},+,\cdot,0,1)$ to the minimal
consistency problem for $\mathbb{Q}$-MWA.  In fact it suffices to
consider sentences in the restricted form
\begin{gather}
\exists x_1 \cdots \exists x_n \bigwedge_{i=1}^m f_i(x_1,\ldots,x_n)=0 \, ,
\label{eq:conj}
\end{gather}
where
$f_i(x_1,\ldots,x_n) = \sum_{j=1}^{l_{i}} c_{i,j}
x_1^{k_{i,j,1}}\cdots x_n^{k_{i,j,n}}$
is a polynomial with rational coefficients.  We can make this
simplification without loss of generality since a disjunction of
atomic formulas $f=0 \vee g=0$, where $f$ and $g$ are polynomials, can
be rewritten to
 \[\exists
x\,(x^2-x=0 \wedge x \cdot f=0 \wedge (1-x) \cdot g=0)
\, . \]  Moreover, the negation of an atomic formula $f\neq 0$ is equivalent
to $\exists x\, (x \cdot f=1)$.

Define an alphabet 
\[
\Sigma := \{s,t\} \cup \{ \#_i,\bar{c}_{i,j},\bar{x}_k : i \in [m], j \in [l_i], k \in [n]\},
\]
including symbols $\bar{c}_{i,j}$ and $\bar{x}_k$ for each coefficient
$c_{i,j}$ and variable $x_k$, respectively.  Over the alphabet $\Sigma$ we
consider the 3-dimensional $\mathbb{Q}$-MWA $\A$, depicted in
Figure~\ref{fig:hankel}~(b).  The transitions in this automaton are
annotated by label-weight pairs in $\Sigma\times\mathbb{Q}$.  Recall
that the weights $c_{i,j}$ are coefficients of the polynomial $f_i$.
For each $k\in [n]$, the weight $a_k$ is a fixed but arbitrary element
of $\mathbb{Q}$.

\begin{figure}
\begin{center}
\begin{tabular}{c@{\hspace{20mm}}c}
$\begin{array}{c|ccc}
                   & st     & \;t\;     & \;\varepsilon \;\\ 
                   \hline
\varepsilon        &  1     & 0     & 0\\
s                  &  0     & 1     & 0\\
st                 &  0     & 0     & 1\\ \hline
\#_i                 &  1     & 1     & 0\\
s\#_i                &  0     & 0     & 1\\
st\#_i               &  0     & 0     & 1\\ \hline
\bar{c}_{i,j}     &  1     & 0     & 0\\
s\bar{c}_{i,j}    &  0     & c_{i,j}   & 0\\
st\bar{c}_{i,j}   &  0     & 0     & 1\\ \hline
\bar{x}_k     &  1     & 0     & 0\\
s\bar{x}_k    &  0     & a_k    & 0\\
st\bar{x}_k   &  0     & 0     & 1\\  \hline
 t             &  0     &  0     &0\\
 stt           & 0     & 0       &0\\
 ss            & 0     & 0       &0\\
 sts           & 0     & 0       &0\\ \hline
\end{array}$
&
\begin{tikzpicture}[baseline=(q0),>=stealth',shorten >=1pt,
auto,node distance=2.3cm]
  \node[state] (q0)      {};
  \node[state]         (q1) [right of=q0]  {};
  \node[state]         (q2) [right of=q1] {$1$};

  \path[->] (-1,0) edge (q0);
  \path[->]   (q0)  edge [loop above] node {($\#_i$,$1$)} (q0);
  \path[->]  (q0)  edge [loop below] node[text width=1cm,align=center] {($\bar{c}_{i,j}$,$1$)\\ ($\bar{x}_{k}$,$1$)} (q0);
  \path[->]  (q0)  edge  node {($\#_i$,$1$)} node [swap] {($s$,$1$)} (q1);
   \path[->]      (q1)  edge [loop below] node[text width=1cm,align=center] {($\bar{c}_{i,j}$,$c_{i,j}$)\\ ($\bar{x}_{k}$,$a_k$)} (q1);
  \path[->] (q1) edge node {($\#_i$,1)} node [swap] {($t$,1)} (q2);
  \path   (q2) edge [loop above] node {($\#_i$,1)} (q2);
   \path[->] (q2)  edge [loop below] node[text width=1cm,align=center] {($\bar{c}_{i,j}$,1)\\ ($\bar{x}_{k}$,1)} (q2);
\end{tikzpicture}
\\[35mm]
(a) & (b)
\end{tabular}
\end{center}
\caption{The left figure~(a) shows a Hankel-matrix fragment
  $\tilde{H}$, where $i\in [m]$, $j\in [l_{i}]$, $k\in [n]$.  The
  right figure~(b) shows a graph representation of the automaton
  $\A$.}
\label{fig:hankel}
\end{figure}
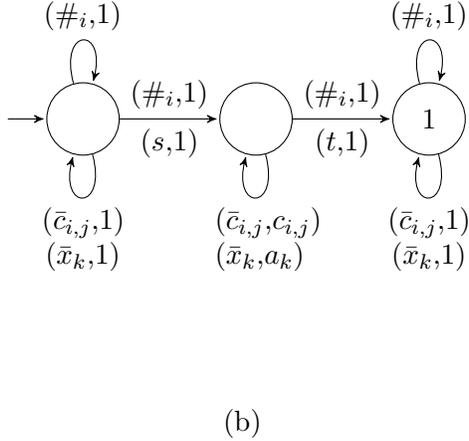

Define $X,Y\subseteq \Sigma^*$ by $X = \{\varepsilon,s,st\}$ and $Y =
\{st,t,\varepsilon\}$. Consider the fragment $\tilde{H} :=H_{X\cup
  X\Sigma, Y}$, shown in Figure~\ref{fig:hankel}~(a), of the Hankel matrix~$H$ of~$\|\A\|$.  We know from Theorem~\ref{thm:Hankel} that $\mathit{rank}(H) \le 3$. Since $\mathit{rank}(H_{X, Y}) = 3$, we have $\mathit{rank}(H_{X, Y}) = \mathit{rank}(H) = 3$. Now, from Remark~\ref{rem:Hankelword} it follows that any
$3$-dimensional $\mathbb{Q}$-MWA $\A'$ that is consistent with $H_{X,Y}$ and $H_{X\Sigma,Y}$ (\ie, consistent with $\tilde{H}$) is equivalent to $\A$. 

Now for every $i \in [m]$, we encode polynomial $f_i$ by the word
\[ w_i:=\#_i\bar{c}_{i,1}{\bar{x}_1^{k_{i,1,1}}}\cdots
{\bar{x}_n^{k_{i,1,n}}}\#_i \cdots \#_i
\bar{c}_{i,l_{i}}{\bar{x}_1^{k_{i,l_{i},1}}}\cdots
{\bar{x}_n^{k_{i,l_{i},n}}}\#_i 
\]
over alphabet $\Sigma$. Note that $w_i$ comprises $l_i$ `blocks' of symbols, corresponding to the $l_i$
monomials in $f_i$, with each block enclosed by two $\#_i$ symbols.
From the definition of~$w_i$ it follows that $\|\A\|(w_i) =
f_i(a_1,\ldots,a_n)$; the details are given below in the proof of Proposition~\ref{prop:consistency}.

We define a set of weighted words $S\subseteq \Sigma^*\times \mathbb{Q}$
 as $S := S_{1} \cup S_{2}$, where $S_{1}$ is the set of all pairs $(uv, \tilde{H}_{u, v})$ with $u \in X\cup
  X\Sigma$, $v \in Y$, and $uv \not\in \{ s\bar{x}_kt : k \in [n]
\}$, and $S_{2} := \{(w_i,0) : i \in [m]\}$. That is, $S_{1}$ specifies all entries in the matrix $\tilde{H}$ except those that are in row $s\bar{x}_k$ and column $t$. 

Any 3-dimensional $\mathbb{Q}$-MWA $\A'$ consistent with $S_{1}$ is equivalent to an automaton of the form $\A$ for some $a_1,\ldots,a_n \in \mathbb{Q}$. If $\A'$ is moreover consistent with $S_{2}$, then $f_i(a_1,\ldots,a_n) = 0$ for every $i \in [m]$. From this observation we have the following
proposition:
\begin{proposition} \label{prop:consistency}
The sample $S$ is consistent with a 3-dimensional $\mathbb{Q}$-MWA if and only if the sentence  (\ref{eq:conj}) is true in $(\mathbb{Q},+,\cdot,0,1)$.
\end{proposition}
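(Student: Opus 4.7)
The plan is to prove the two implications separately. The $(\Leftarrow)$ direction is a concrete computation on the three-state automaton $\A$ of Figure~\ref{fig:hankel}(b), while the $(\Rightarrow)$ direction invokes Remark~\ref{rem:Hankelword} to pin down the semantics of any $3$-dimensional $\mathbb{Q}$-MWA consistent with $S_{1}$.

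For $(\Leftarrow)$, given a witness $(a_{1},\ldots,a_{n}) \in \mathbb{Q}^{n}$ for the sentence~(\ref{eq:conj}), I would instantiate $\A$ with these parameters and first verify, by a routine case analysis on the three-state graph, that $\|\A\|(uv) = \tilde{H}_{u,v}$ for every $(u,v)$ indexing an entry of $S_{1}$. This case analysis exploits that state~$3$ is the only accepting state and that the unique route to it is $1 \to 2 \to 3$; in particular $\|\A\|(s\bar{x}_{k}t) = a_{k}$, since the unique accepting run loops once at state~$2$ with weight $a_{k}$. The key non-routine calculation is
\[
\|\A\|(w_{i}) \;=\; f_{i}(a_{1},\ldots,a_{n})\,,
\]
which I would derive by observing that accepting runs on $w_{i}$ are in bijection with its $l_{i}$ monomial blocks: the run that takes the $1 \to 2$ edge at the $p^{\text{th}}$ occurrence of $\#_{i}$ must (since state~$2$ has no $\#_{i}$ self-loop) exit at the $(p+1)^{\text{st}}$ occurrence, contributing weight $c_{i,p}a_{1}^{k_{i,p,1}}\cdots a_{n}^{k_{i,p,n}}$. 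Summing yields $f_{i}(a_{1},\ldots,a_{n})$, which is $0$ by hypothesis, so $\A$ is consistent with $S_{2}$ as well.

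For $(\Rightarrow)$, let $\A'$ be a $3$-dimensional $\mathbb{Q}$-MWA consistent with $S$, and let $H$ denote its Hankel matrix. Inspection of $\tilde{H}$ shows that $H_{X,Y}$ is the $3 \times 3$ identity (this submatrix is fully specified by $S_{1}$), so $\mathit{rank}(H_{X,Y}) = 3$; since $\A'$ has dimension~$3$, Theorem~\ref{thm:Hankel} gives $\mathit{rank}(H) \le 3$, hence $\mathit{rank}(H_{X,Y}) = \mathit{rank}(H) = 3$. Define $a_{k} := \|\A'\|(s\bar{x}_{k}t)$ for $k \in [n]$ (these are precisely the entries of $\tilde{H}$ left unconstrained by $S_{1}$), and let $\A$ be the automaton of Figure~\ref{fig:hankel}(b) built from these parameters. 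By the forward computation above, $\|\A\|$ agrees with $\|\A'\|$ on the full fragment $H_{X \cup X\Sigma, Y} = \tilde{H}$, so Remark~\ref{rem:Hankelword} forces $\|\A\| = \|\A'\|$. Consistency of $\A'$ with $S_{2}$ then gives $f_{i}(a_{1},\ldots,a_{n}) = \|\A\|(w_{i}) = \|\A'\|(w_{i}) = 0$ for every $i \in [m]$, witnessing the truth of~(\ref{eq:conj}).

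The main obstacle is the monomial-counting computation $\|\A\|(w_{i}) = f_{i}(a_{1},\ldots,a_{n})$: one must argue carefully that every accepting run on $w_{i}$ enters state~$2$ at exactly one $\#_{i}$ and is forced out at the immediately next one, so that runs correspond bijectively to blocks and the absorbing behaviour in state~$3$ contributes a factor of~$1$. A minor supporting observation is that the row set listed in Figure~\ref{fig:hankel}(a) really exhausts $X \cup X\Sigma$, which is the hypothesis needed to apply Remark~\ref{rem:Hankelword}.
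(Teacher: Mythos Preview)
Your proposal is correct and follows essentially the same approach as the paper: both directions hinge on the computation $\|\A\|(w_i) = f_i(a_1,\ldots,a_n)$ via the bijection between accepting runs and monomial blocks, and the $(\Rightarrow)$ direction uses Remark~\ref{rem:Hankelword} (applied through the paragraph preceding the proposition) to force any $3$-dimensional automaton consistent with $S_1$ to be equivalent to some instance of~$\A$. Your write-up is in fact somewhat more explicit than the paper's---you spell out the rank argument for $\A'$ and the definition $a_k := \|\A'\|(s\bar{x}_k t)$, which the paper leaves to the reader---but there is no substantive difference in strategy.
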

\begin{proof}
  We have already noted that any 3-dimensional $\mathbb{Q}$-MWA consistent with $S$  must be equivalent to an automaton of the form $\A$ in Figure~\ref{fig:hankel}~(b) for some $a_1,\ldots,a_n \in \mathbb{Q}$.  However, such an automaton is
  consistent with $S$ if and only if it assigns weight $0$ to each
  word $w_i$, $i \in [m]$.  Now, we claim that this is the case if and only if
  $(a_1,\ldots,a_n)$ is a root of $f_{i}$ for every $i \in [m]$, where $a_k$ is the weight of the $\bar{x}_k$-labelled self-loop in the middle state, for
  every $k\in[n]$.

 For every $i \in [m]$, the word $w_i$ has $l_i$ different accepting runs in $\A$, one for
  each monomial in $f_i$.  The $j^\text{th}$ such run, in which the
  block $\bar{c}_{i,j}\bar{x}_1^{k_{i,j,1}}\cdots
  \bar{x}_{n}^{k_{i,j,n}}$ is read in the middle state, has weight
  $c_{i,j}a_1^{k_{i,j,1}}\cdots a_n^{k_{i,j,n}}$, \ie, the value of
  monomial $c_{i,j}x_1^{k_{i,j,1}}\cdots x_n^{k_{i,j,n}}$ evaluated at
  $(a_1,\ldots,a_n)$.  Thus $\|\A\|(w_i) = f_i(a_1,\ldots,a_n)$.
\end{proof}

From Proposition~\ref{prop:consistency} we derive the main result of
this section:
\begin{theorem} \label{thm-min-consis} The minimal consistency problem
  for $\mathbb{Q}$-MWAs is logspace equivalent to the decision problem
  for existential first-order sentences
  over~$(\mathbb{Q},+,\cdot,0,1)$. \qed 
\end{theorem}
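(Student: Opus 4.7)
The plan is to assemble the two reductions already developed in the preceding discussion and verify that both can be carried out in logarithmic space.

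For the reduction from the minimal consistency problem to the decision problem for existential first-order sentences over $(\mathbb{Q},+,\cdot,0,1)$, I would proceed as outlined after Theorem~\ref{thm-decision}. Given a finite sample $S = \{(w_1,r_1),\ldots,(w_m,r_m)\} \subseteq \Sigma^* \times \mathbb{Q}$ and a bound $n \in \N_0$, introduce fresh variables for each entry of $\alpha \in \mathbb{Q}^{1 \times n}$, of $\gamma \in \mathbb{Q}^{n \times 1}$, and of each transition matrix $\mu(\sigma) \in \mathbb{Q}^{n \times n}$ for $\sigma \in \Sigma$. Each consistency condition $\alpha \cdot \mu(w_i) \cdot \gamma = r_i$ unfolds into a polynomial equation in these variables whose length is polynomial in $|w_i|$ and $n$. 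Existentially quantifying over all introduced variables and taking the conjunction of these equations produces an existential sentence over $(\mathbb{Q},+,\cdot,0,1)$ that is true exactly when a consistent $\mathbb{Q}$-MWA of dimension at most $n$ exists. This construction uses only a constant number of pointers and is easily seen to be logspace computable.

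For the converse reduction, I would start from an arbitrary existential first-order sentence over $(\mathbb{Q},+,\cdot,0,1)$ and bring it into the conjunctive form~\eqref{eq:conj} using the standard rewrites recalled in the text (replacing disjunctions by auxiliary $x^2 - x = 0$ idempotents and eliminating negations by introducing multiplicative inverses). Each of these rewrites is local and logspace. Then I would apply the construction preceding Proposition~\ref{prop:consistency}: build the alphabet $\Sigma$, the encodings $w_i$ of the polynomials $f_i$, and the sample $S = S_1 \cup S_2$, and take the dimension bound to be $3$. By Proposition~\ref{prop:consistency}, $S$ is consistent with a $3$-dimensional $\mathbb{Q}$-MWA iff the conjunctive sentence is true in $(\mathbb{Q},+,\cdot,0,1)$, and this is equivalent to the original sentence being true. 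Computing $\Sigma$, the words $w_i$, and the rational entries of $\tilde{H}$ that specify $S_1$ only requires iterating through the indices $i,j,k$ and the monomials of each $f_i$, which is logspace.

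The main subtlety is checking that the reduction to the conjunctive form~\eqref{eq:conj} can indeed be done in logarithmic space rather than just polynomial time, since naive rewriting might duplicate subformulas. I expect this to work because the rewrites suggested in the text introduce fresh auxiliary variables rather than copying subformulas, and the resulting size growth is linear; a Turing machine can emit the output on the fly while scanning the input and tracking a constant number of indices. Once this is verified, combining the two reductions with Proposition~\ref{prop:consistency} yields the claimed logspace equivalence.
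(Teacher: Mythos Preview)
Your proposal follows the paper's approach exactly: the theorem is marked \qed\ because it is meant to follow directly from Proposition~\ref{prop:consistency} together with the discussion preceding it, and that is precisely what you recapitulate.

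One technical slip worth flagging: your claim that each consistency condition $\alpha \cdot \mu(w_i) \cdot \gamma = r_i$ ``unfolds into a polynomial equation in these variables whose length is polynomial in $|w_i|$ and $n$'' is not correct as stated. A first-order term is a tree, not a DAG, so writing the iterated matrix product $\alpha\,\mu(\sigma_1)\cdots\mu(\sigma_{|w_i|})\,\gamma$ as a single term forces duplication of the intermediate vector entries and yields size $\Theta(n^{|w_i|})$. The standard remedy is to introduce auxiliary existentially quantified variables for the entries of each partial product $\alpha\,\mu(\sigma_1)\cdots\mu(\sigma_l)$ and add the obvious equations relating consecutive stages; this gives polynomially many equations, each of size $O(n)$, and is easily produced in logspace. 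You already anticipate exactly this kind of issue for the other direction (the conversion to conjunctive form), so the same care is all that is needed here. The paper's own text (``can directly be written as an existential sentence'') glosses over this point as well, so once patched your argument is at the same level of detail as the paper's.
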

%



\section{Conclusions and Future Work}

We have looked at the computational complexity of computing minimal
multiplicity word and tree automata from several angles.  Specifically,
we have analysed the complexity of computing a minimal automaton
equivalent to a given input automaton $\mathcal{A}$. We have
considered also the corresponding decision problem, which asks whether
there exists an automaton equivalent to $\mathcal{A}$ with a given
number of states.  Finally, we have considered the minimal
  consistency problem, in which the input is a finite set of word-weight
pairs rather than a complete automaton.

Our complexity bounds have drawn connections between automaton
minimisation and longstanding open questions in arithmetic complexity,
including the complexity of polynomial identity testing and the
decidability of Hilbert's tenth problem over the rationals, \ie, the
problem of deciding the truth of existential sentences over the
structure $(\mathbb{Q},+,\cdot,0,1)$.

Our algorithmic results exclusively concern automata over the fields
of rational or real numbers, in which weights are allowed to be
negative.  The minimisation problems considered here all have natural
analogues for the class of \emph{probabilistic automata} over words and
trees, in which the transition weights are probabilities. 
Recently, minimisation of probabilistic word automata was
shown to be \NP-hard \cite{14KW-ICALP}. A natural question is whether
this minimisation problem lies in \NP, and whether the corresponding
problem for tree automata is even harder. Related to this is the
following question: Given a multiplicity (word or tree) automaton with
rational transition weights, need there always be a minimal equivalent
automaton also with rational transition weights?


We have observed that the minimal consistency problem for word automata
over the reals is in PSPACE, since it is directly reducible to the
problem of deciding the truth of existential first-order sentences
over the structure~$(\mathbb{R},+,\cdot,0,1)$.  For tree automata this
reduction is exponential in the alphabet rank, and we leave as an open
question the complexity of the minimal consistency problem for tree
automata over the reals.

 
 In all cases, we have considered minimising automata with respect to
 the number of states.  Another natural question is
 minimisation with respect to the number of transitions.  This
 is particularly pertinent to the case of tree automata, where the number
 of transitions is potentially exponential in the number of states.\\

\section*{Acknowledgements.}
The authors would like to thank Michael Benedikt for stimulating discussions, and anonymous referees for their helpful suggestions.
Kiefer is supported by a University Research Fellowship of the Royal Society.
 Maru\v{s}i\'{c} and Worrell gratefully acknowledge the support of the EPSRC.

\bibliographystyle{plain}
\bibliography{references}


\end{document}